\newtheorem{theorem}{Theorem}
\newtheorem{lemma}{Lemma}
\newtheorem{proposition}{Proposition}
\newtheorem{remark}{Remark}
\newtheorem{condition}{Condition}
\algnewcommand{\Initialize}[1]{%
	\State \textbf{Initialize:}
	\Statex \hspace*{\algorithmicindent}\parbox[t]{.8\linewidth}{\raggedright #1}
}
\newcommand{\pcite}[1]{\citeauthor{#1}'s \citeyearpar{#1}}
\newcommand{\df}{\mathrm{d}}
\newcommand{\Dirac}{\mathcal{D}}
\newcommand{\cov}{\psi}
\newcommand{\tN}{N}
\title{\bf{Multivariate strong invariance principle and uncertainty assessment for time in-homogeneous cyclic MCMC samplers}}
\author{Haoxiang Li}
\author{Qian Qin}
\affil{School of Statistics \\ University of Minnesota}
\date{}
\begin{document}

	\maketitle
	\thispagestyle{fancy}
\begin{abstract}
Time in-homogeneous cyclic Markov chain Monte Carlo (MCMC) samplers, including deterministic scan Gibbs samplers and Metropolis within Gibbs samplers, are extensively used for sampling from multi-dimensional distributions. 
We establish a multivariate strong invariance principle (SIP) for Markov chains associated with these samplers.
The rate of this SIP essentially aligns
with the tightest rate available for time homogeneous Markov chains.
The SIP implies the strong law of large numbers (SLLN)  and the central limit theorem (CLT), and plays an essential role in uncertainty assessments. 
Using the SIP, we give conditions under which the multivariate batch means estimator for estimating the covariance matrix in the multivariate CLT is strongly consistent. 
Additionally, we provide conditions for a  multivariate fixed volume sequential termination rule, which  is associated with the concept of effective sample size (ESS),  to be asymptotically valid.  
Our uncertainty assessment tools are demonstrated through various numerical experiments.
\end{abstract}

\section{Introduction}
An MCMC algorithm simulates an irreducible Markov chain with the target distribution serving as its stationary distribution.
Expectations with respect to the target distribution are then estimated by sample averages.
Deterministic-scan Gibbs samplers and Metropolis-within-Gibbs samplers are time in-homogeneous MCMC  algorithms characterized by transitions that cyclically update each component in a multi-dimensional state space. 
We call such samplers deterministic scan Gibbs type samplers.
This type of ``cyclic" time in-homogeneous samplers are frequently  used  to study multi-dimensional distributions, which are commonly seen in modern Bayesian studies (see e.g., the \texttt{BUGS} \citep{lunn2009bugs} software package). 

Oftentimes, one circumvents time-inhomogeneity by subsampling the underlying chain to a homogeneous subchain.
But outside some special circumstances, subsampling results in a loss in efficiency
 \citep{maceachern1994subsampling, greenwood1996outperforming, greenwood1998information}.
Indeed, time in-homogeneous cyclic samplers are more efficient than their time homogeneous counterparts in plenty of situations. 
For instance, a deterministic scan Gibbs sampler with two components converges faster than its random scan counterpart \citep{qin2022convergence}. 
Moreover, the former can outperform the latter in terms of asymptotic variance~\citep{greenwood1998information, qin2022analysis}.
Despite the usefulness of in-homogeneous samplers, there is a lack of strong approximation results and tools for uncertainty assessment associated with them.
 
We establish multivariate strong invariance principle (SIP) results for the sample averages of time in-homogeneous cyclic Markov chains. 
The SIP, also commonly referred to as strong approximation, is a convergence result that is significantly stronger than the strong law of large numbers (SLLN) and the central limit theorem (CLT).
With the help of the SIP, we prove the strong consistency of the batch means estimator for estimating the asymptotic covariance matrix in the CLT.
This in turn leads to an asymptotic valid termination rule for the Monte Carlo simulation.

Let $(\mathcal{X}, \mathcal{F})$ be a countably generated measurable space, and $\pi$ be a probability measure on $(\mathcal{X}, \mathcal{F})$. Consider a time in-homogeneous cyclic Markov chain $\{X_t\}_{t=0}^\infty$ on the state space $(\mathcal{X}, \mathcal{F})$ with stationary distribution $\pi$. The chain can start from an arbitrary initial distribution.
Let $f: \mathcal{X} \to \mathbb{R}^d$ be a measurable function, where $d$ is a positive integer, and let $\theta =  \pi(f) = \int_{\mathcal{X}}f(x)\pi(\mathrm{d}x)$ be the quantity we want to estimate.
Define the Monte Carlo sample mean estimator  as
$\hat{\theta}_{n} = \left(\sum_{t=1}^n  f(X_t)\right)/{n},$
where $n$ is the Monte Carlo sample size.
The {SLLN} holds, if with probability 1, 
$
	\hat{\theta}_{n} \to \theta \text{ as } n \to \infty.
$
The uncertainty of Monte Carlo estimator can be assessed through the CLT. 
The multivariate {CLT} holds if 
\begin{equation}\label{eqn:clt}
	\sqrt n (\hat{\theta}_{n} - \theta) \xrightarrow{d} \mathcal{N}(0, \Sigma), \text{ as } n \to \infty,
\end{equation}
for some $d \times d$ positive definite matrix $\Sigma$, and $\Sigma$ is called asymptotic covariance matrix.
A strong invariance principle (SIP) holds if on a suitable probability space, one can redefine the sequence  $\{X_t\}_{t=0}^\infty$ (while preserving its law) 
together with $\{C(t)\}_{t=1}^\infty$, where 
$\{C(t)\}_{t=1}^\infty$ are i.i.d. $d$ dimensional standard normal random vectors, such that for almost all sample paths $\omega$, for $n$ large enough, 
\begin{equation}\label{eqn:sip0}
	\left\|\sum_{t=1}^n(f(X_t(\omega)) - \theta) - \Gamma  B(n)(\omega) \right\| \leq M(\omega) \phi (n),
\end{equation}
where $B(n)=\sum_{t=1}^n C(t)$, $M$ is a finite random variable, $\Gamma$ is a $d\times d$ constant matrix satisfying $\Gamma \Gamma^\top = \Sigma $,  $\|\cdot\|$ is the Euclidean norm, and the rate $\phi (n)$ is a non-negative increasing function on $\mathbb{N}_+:= \{1,2,\dots\}$ with $\phi (n)/\sqrt n \to 0$ as $n \to \infty$. 
The SIP ensures the SLLN and the CLT. The SIP also plays a crucial role in uncertainty assessments. 

There are a number of existing results that establish the SIP for time homogeneous Markov chains.
For stationary $\alpha$-mixing processes, 
including polynomially ergodic stationary Markov chains, 
 \cite{kuelbs1980almost} derived \eqref{eqn:sip0} with  $\phi(n) = n^{1/2-\lambda}$ for some unknown $\lambda>0$.
\cite{merlevede2012strong} constructed univariate, i.e., $d=1$,  version of \eqref{eqn:sip0} with $\phi(n) = n^{1/\lambda}(\log n)^{1/2-1/\lambda}$, $\lambda\in[2,3]$, for stationary $\alpha$-mixing processes, under conditions related to the quantile function and the rate of mixing.
Under a 1-step minorization condition and when $d=1$, \cite{jones2006fixed} obtained explicit SIP rates for geometrically ergodic Markov chains, and  the result by  \cite{csaki1995additive} forms the basis of their result. 
For a rich class of time  in-homogeneous cyclic Markov chains, when the function $f$ has a finite $2+\gamma+\gamma^*$ moment under the distribution $\pi$, and the chain is polynomially ergodic of order larger than $(1+\gamma+(2+\gamma)^2/\gamma^*)$ for some $\gamma>0$ and $\gamma^*>0$, we establish the SIP \eqref{eqn:sip0} with $\phi(n) = n^{\rho}$ for $\rho > \max\{1/(2+\gamma), 1/4\}$. This rate essentially aligns with  the tightest result achieved for homogeneous Markov chains \citep{csaki1995additive}.


We prove the SIP using a new regenerative construction suitable for cyclic time in-homogeneous, especially Gibbs type, chains.
This construction allows us to apply an existing SIP result for a certain class of stationary dependent process \citep{liu2009strong}.
Some of the techniques herein are adopted from the recent work of \cite{banerjee2022multivariate} who studied homogeneous chains.


We provide tools for assessing the uncertainty of Monte Carlo estimators formed from time in-homogeneous cyclic MCMC samplers.
The uncertainty of Monte Carlo estimation can be quantified by combining the CLT and a consistent estimator of the asymptotic covariance matrix in the CLT.
The batch means estimator is frequently employed for estimating the asymptotic covariance matrix.
The consistency of the batch means estimator can be established through the SIP.
This is demonstrated by \cite{jones2006fixed}, \cite{flegal2010batch}, and \cite{vats2019multivariate}, who studied time homogeneous chains.
For  time in-homogeneous cyclic Markov chains, we establish the strong consistency of the multivariate batch means estimator through the SIP.

Determining the necessary sampling length for achieving a certain level of accuracy is a fundamental question in MCMC.
Given a stochastic simulation, \cite{glynn1992asymptotic} studied a class of  sequential termination rules that can be used to obtain confidence sets with a fixed volume.
\cite{jones2006fixed} applied these sequential termination rules to  time homogeneous Markov chains in a univariate setting. 
\cite{vats2019multivariate} applied the rules to a multivariate setting for time homogeneous Markov chains and 
connected the rules to the concept of effective sample size (ESS). 
Utilizing the SIP, we study \pcite{vats2019multivariate} version of the rule in the context of time in-homogeneous cyclic Markov chains, and give conditions for the rule to be asymptotically valid.

We apply our results for uncertainty assessment and termination rule to two numerical examples.
In each example, we show that a time in-homogeneous sampler is more efficient than its natural homogeneous counterpart. 

The rest of the paper is arranged as follows. In Section \ref{sec:pre}, we introduce the preliminary background for our work. 
We present our main results in Section \ref{sec:main}.
These include an SLLN, a CLT, and an SIP for time in-homogeneous cyclic Markov chains, and tools for uncertainty assessment based on them.
Numerical experiments are provided in Section \ref{sec:num}. 

\section{Preliminaries}\label{sec:pre}
\subsection{Time homogeneous Markov chains} \label{ssec:pret}

A Markov chain with state space $(\mathcal{X}, \mathcal{F})$ is {time homogeneous} if its transition law stays the same for all $t\in\mathbb{N}_+$.
Let  $K: \mathcal{X} \times \mathcal{F} \to [0,1]$ be the  transition kernel of the chain. For a signed measure $\mu$ and a measurable function $f$ on $(\mathcal{X}, \mathcal{F})$, with a slight abuse of notation, let
$\mu (f) = \int_{\mathcal{X}} f(x)\mu(\mathrm{d}x)$.
The kernel $K$ can act on $\mu$ and $f$ as follows:
$\mu K(A) = \int _{\mathcal{X}} K(x, A) \mu(\mathrm{d}x),$ $A \in \mathcal{F}$,  $Kf(x) = \int_{\mathcal{X}} f(y) K(x, \mathrm{d}y),$ $x \in \mathcal{X},$ provided that the integrals are well defined.
Given two measures $\mu$ and $\nu$ on $(\mathcal{X}, \mathcal{F})$, define $\|\mu(\cdot) - \nu(\cdot)\|_{\text{TV}}$ as the total variation distance between $\mu$ and $\nu$.
Assume that the chain has a {stationary distribution} $\pi$, i.e., $\pi K(\cdot) = \pi(\cdot).$

Two transition kernels $K$ and $K'$ on $(\mathcal{X}, \mathcal{F})$ can be multiplied through the formula $K K'(x, \cdot) = \int_{\mathcal{X}} K(x, \df y) K'(y, \cdot)$.
In particular, for $t \in \mathbb{N}_+$, $K^t(x, \cdot)$ gives the conditional distribution of the $t$th element of a chain associated with $K$ assuming that the chain starts from~$x$.
The transition kernel $K$ is {Harris ergodic} if
$\lim_{t\to \infty }\| K^t(x,\cdot) - \pi(\cdot)\|_{\text{TV}} = 0$ for all $x\in\mathcal{X}$. 
Harris ergodicity is equivalent to the chain being $\phi$-irreducible, aperiodic, and Harris recurrent; it is also equivalent to $\lim_{t\to \infty } \|\nu K^t(\cdot) - \pi(\cdot) \|_{\text{TV}} = 0$ for an arbitrary initial distribution $\nu$ \citep[][Section 6.3]{nummelin2004general}.
Let $\rho:\mathcal{X} \to [0,\infty]$ be a function satisfying $\pi(\rho) < \infty$, and let $\eta(t)$ be a non-negative decreasing function on $\mathbb{N}_+$ such that for all $x\in\mathcal{X}$ and $t \in \mathbb{N}_+$,
\begin{equation}\label{eqn:ergodic}
	\| K^t(x, \cdot) - \pi(\cdot)\|_{\text{TV}} \leq \rho(x) \eta(t).
\end{equation}
The transition kernel $K$ is {polynomially ergodic of order $q$} if it is Harris ergodic and \eqref{eqn:ergodic} holds with $\eta(t) = t ^{-q}$, where $q$ is a positive real number. If the kernel is Harris ergodic, and  \eqref{eqn:ergodic} holds with $\eta(t) =   c^{t}$ for some $  c<1$, then the kernel is {geometrically ergodic}. The kernel is {at least polynomially ergodic of order $q$} if it is either polynomially ergodic of order at least $q$ or geometrically ergodic.
Polynomial ergodicity is usually established through a set of drift and minorization conditions \citep[see e.g.,][Theorem 1 and Section 1.2]{fort2003polynomial}, and geometric ergodicity can be established by a slightly stronger version of drift and minorization conditions \citep[see e.g.,][]{roberts2004general}.

\subsection{Time in-homogeneous cyclic Markov chains}\label{ssec:gibbs}
Let $\pi$ be a probability measure on space $(\mathcal{X}, \mathcal{F})$. Consider $k$ transition kernels $\{K_i\}_{i=1}^k$, where at least two of them are distinct. 
Assume that  leave the distribution $\pi$ invariant,  i.e., $\pi K_i(\cdot) = \pi(\cdot)$, for all $i\in\{1,\dots ,k\}$. A {time in-homogeneous $k$-cyclic MCMC sampler} applies the $k$ transition kernels cyclically, i.e., its transition kernel at time ${kj+i-1}$ is  $K_i,$ $i \in \{ 1,\dots ,k\}$, $j \in \mathbb{N} := \{0,1,2,\dots\}$.
Let $\{X_t\}_{t=0}^\infty$ be a Markov chain associated with the time in-homogeneous $k$-cyclic MCMC sampler, so that $P(X_{kj+i} \in A \mid X_{kj+i-1} = x) = K_i(x, A)$ for $x \in \mathcal{X}$ and $A \in \mathcal{F}$.  
The chain is not time homogeneous, but  we have $k$ homogeneous subchains by subsampling.  For $i\in \{ 1,\dots ,k\}$, define $\{Z_t^i\}_{t=0}^{\infty} = \{X_{kt+i-1}\}_{t=0}^{\infty}$ as the {$i$th homogeneous subchain} of $\{X_t\}_{t=0}^\infty$. 
The first homogeneous subchain has transition kernel $\tilde{K}_{1} = K_1K_2\cdots K_k$.
The {block chain} $\{Z_{t}^{\scriptsize\mbox{blo}}\}_{t=0}^{\infty} = \{(X_{kt}, \dots, X_{kt+k-1})\}_{t=0}^{\infty}$ also forms a time homogeneous Markov chain. 

The most common time in-homogeneous $k$-cyclic MCMC samplers are deterministic scan Gibbs and Metropolis within Gibbs samplers. 
For example, when $k=2$ and $X=(U,V)$ is a random vector distributed as $\pi$, the deterministic scan Gibbs chain targeting $\pi$
has the form $\{(U_0, V_0), (U_1, V_0), (U_1, V_1), (U_2, V_1), \dots \}$. The first homogeneous subchain is of the form $\{(U_0, V_0), (U_1, V_1), \dots\}$, and the block chain is of the form $$\left\{
((U_0, V_0), (U_1, V_0)), ((U_1, V_1), (U_2, V_1)), \dots
\right\}.$$
Note that in some existing works, the name ``deterministic scan Gibbs sampler" is reserved for the first homogeneous subchain $\{Z_t^1\}_{t=0}^{\infty}$.

The lemma below gives useful relationships regarding the convergence properties of the block chain and subchains.

\begin{lemma}\label{lem:1}
	Suppose that the time in-homogeneous chain $\{X_t\}_{t=0}^\infty$ possesses a time homogeneous subchain whose transition kernel is Harris ergodic.
	Then the transition kernels of the $j$th homogeneous subchain, $j\in\{1,\dots,k\}$,  and the block chain are Harris ergodic. 
	Suppose that the time in-homogeneous chain $\{X_t\}_{t=0}^\infty$ possesses a time homogeneous subchain whose transition kernel is polynomially ergodic of order $q$ (resp. geometrically ergodic).
	Then the transition kernels of the $j$th homogeneous subchain, $j\in\{1,\dots,k\}$,  and the block chain are polynomially ergodic of order $q$ (resp. geometrically ergodic). 
\end{lemma}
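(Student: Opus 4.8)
\textbf{Proof plan.} The whole argument rests on two elementary facts: any finite product of Markov kernels that each leave $\pi$ invariant again leaves $\pi$ invariant, so $\pi$ is automatically the stationary distribution of every homogeneous subchain and of the block chain; and applying a Markov kernel is a total variation contraction, $\|\mu N - \nu N\|_{\mathrm{TV}} \le \|\mu - \nu\|_{\mathrm{TV}}$. The plan is to establish a single transfer statement at the level of kernel products, deduce the subchain claims from it by a cyclic-relabelling trick, and then reduce the block chain to the $k$th subchain.

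The key claim is: for Markov kernels $P, Q$ with $\pi P = \pi Q = \pi$, if $R := PQ$ is Harris ergodic (resp. polynomially ergodic of order $q$, resp. geometrically ergodic), then so is $S := QP$. I would prove it from the identity $S^n = Q R^{n-1} P$ for $n \in \mathbb{N}_+$, which combined with $\pi = \int \pi(\df w)\, P(w, \cdot)$ gives
\[
S^n(x, \cdot) - \pi(\cdot) \;=\; \int_{\mathcal{X}} Q(x, \df y)\, \big(R^{n-1}(y, \cdot) - \pi\big) P ,
\]
so that, by the contraction property, $\|S^n(x, \cdot) - \pi\|_{\mathrm{TV}} \le \int_{\mathcal{X}} Q(x, \df y)\, \|R^{n-1}(y, \cdot) - \pi\|_{\mathrm{TV}}$. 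In the Harris case the right-hand side tends to $0$ by dominated convergence, since the integrand is bounded and vanishes pointwise in $y$; since $\pi$ is stationary for $S$, this is exactly Harris ergodicity of $S$ in the sense defined above. In the quantitative cases the right-hand side is at most $(Q\rho)(x)\, \eta(n-1)$ for $n \ge 2$, where $\eta(n) = n^{-q}$ or $\eta(n) = c^n$, and $\pi(Q\rho) = (\pi Q)(\rho) = \pi(\rho) < \infty$; since $(n-1)^{-q} \le 2^q n^{-q}$ (resp. $c^{n-1} = c^{-1} c^n$) for $n \ge 2$, replacing $Q\rho$ by a slightly larger envelope that also dominates the trivial $n = 1$ bound yields a bound of the required form $\rho'(x)\, \eta(n)$ valid for all $n \in \mathbb{N}_+$ with $\pi(\rho') < \infty$.

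Given the claim, the subchain part of the lemma follows: writing $\tilde K_j := K_j K_{j+1} \cdots K_{j-1}$, indices read cyclically in $\{1, \dots, k\}$, for the kernel of the $j$th homogeneous subchain, any two such kernels can be expressed as $\tilde K_m = PQ$ and $\tilde K_j = QP$ with $P = K_m K_{m+1} \cdots K_{j-1}$ and $Q = K_j K_{j+1} \cdots K_{m-1}$; applying the claim with $R = \tilde K_m$ (the kernel assumed to have the given property) shows that every $\tilde K_j$ has that property, in particular $\tilde K_k$. For the block chain, note that its kernel $\bar K$ on $\mathcal{X}^k$ depends on the state only through its last coordinate, and that the subsampled process $\{X_{kt-1}\}_{t \ge 1}$ appearing between consecutive blocks is exactly the $k$th homogeneous subchain. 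Letting $\Phi$ denote the fixed Markov kernel from $\mathcal{X}$ to $\mathcal{X}^k$ sending $w$ to the conditional law of $(X_{kt}, \dots, X_{kt+k-1})$ given $X_{kt-1} = w$, one has $\bar K^t\big((x_0, \dots, x_{k-1}), \cdot\big) = \int \tilde K_k^{\,t-1}(x_{k-1}, \df w)\, \Phi(w, \cdot)$ for $t \ge 1$ and $\bar\pi = \int \pi(\df w)\, \Phi(w, \cdot)$. The contraction property then gives $\|\bar K^t((x_0, \dots, x_{k-1}), \cdot) - \bar\pi\|_{\mathrm{TV}} \le \|\tilde K_k^{\,t-1}(x_{k-1}, \cdot) - \pi\|_{\mathrm{TV}}$, and the block-chain conclusions follow from the properties of $\tilde K_k$ just established, using that every coordinate marginal of $\bar\pi$ equals $\pi$ to verify $\pi$-integrability of the envelope, with the same shift of index as before.

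I do not expect a genuine obstacle: the lemma is essentially bookkeeping built on the contraction property and the algebraic identity $S^n = Q R^{n-1} P$. The points requiring a little care are keeping the cyclic relabelling of the kernels $K_i$ consistent when writing $\tilde K_m = PQ$, $\tilde K_j = QP$, and handling the off-by-one in the exponent so that the transferred rates are exactly $n^{-q}$ (resp. $c^n$) and hold for every $n \in \mathbb{N}_+$ rather than merely asymptotically.
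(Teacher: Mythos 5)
Your proof is correct, but it takes a genuinely different route from the paper. The paper disposes of Lemma~\ref{lem:1} in one line by invoking the de-initializing process argument of \citet{roberts2001markov} (their Theorem~1): each homogeneous subchain and the block chain are de-initialized by a suitable coordinate process, so Harris/polynomial/geometric ergodicity transfers directly from the assumed subchain. You instead prove a self-contained transfer statement: if $\pi P=\pi Q=\pi$ and $R=PQ$ is ergodic in the relevant sense, then so is $S=QP$, via the factorization $S^n=QR^{n-1}P$ and the total-variation contraction $\|\mu N\|_{\mathrm{TV}}\le\|\mu\|_{\mathrm{TV}}$ for signed measures of zero mass; the cyclic splitting $\tilde K_m=PQ$, $\tilde K_j=QP$ then handles all subchains, and writing $\bar K^t((x_0,\dots,x_{k-1}),\cdot)=\int \tilde K_k^{t-1}(x_{k-1},\mathrm{d}w)\,\Phi(w,\cdot)$ reduces the block chain to the $k$th subchain, with the envelope's $\bar\pi$-integrability following because every coordinate marginal of $\bar\pi$ is $\pi$. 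All the details check out: the bound $\pi(Q\rho)=\pi(\rho)<\infty$, the index shift $(n-1)^{-q}\le 2^q n^{-q}$ (resp.\ $c^{n-1}=c^{-1}c^n$) with the $n=1$ case folded into the envelope, and dominated convergence in the Harris case are exactly what is needed under the paper's definitions. What the paper's approach buys is brevity and a ready-made general principle; what yours buys is an explicit, elementary argument that avoids the external citation and makes the constants and envelope functions visible, which is arguably closer in spirit to the computation the paper itself carries out later for $K_U$ in Appendix~\ref{proof:ergodic}.
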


Lemma \ref{lem:1} holds following the de-initializing process argument in~\pcite{roberts2001markov} Theorem 1.

\section{Main Results}\label{sec:main}

\subsection{The SLLN and the multivariate CLT}\label{ssec:sllnclt} 
We begin with some elementary convergence results for sample averages of a time in-homogeneous cyclic chain.

Recall that $\{X_t\}_{t=0}^\infty$ is a time in-homogeneous cyclic Markov chain, started from an arbitrary initial distribution, and $\hat{\theta}_n = (\sum_{t=1}^n f(X_t))/n$ is the sample mean estimator estimating $\theta = \pi (f)$.

\begin{theorem}\label{thm:slln}
	Suppose that the time in-homogeneous chain $\{X_t\}_{t=0}^\infty$ possesses a time homogeneous subchain whose transition kernel
	is Harris ergodic, and $\pi( \|f\|_1 ) <\infty$, where $\|f(x)\|_1 = |f_1(x)|+\dots+|f_d(x)|$ for $x \in \mathcal{X}$.  Then  with probability 1, 
	$
	\hat{\theta}_{n} \to \theta \text{ as } n \to \infty.
	$
\end{theorem}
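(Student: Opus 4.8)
The plan is to reduce the statement to the classical strong law of large numbers (SLLN) for time homogeneous Harris ergodic Markov chains, applied separately to each of the $k$ homogeneous subchains of $\{X_t\}_{t=0}^\infty$, and then to recombine the subchain averages.

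First I would invoke Lemma \ref{lem:1}: since $\{X_t\}_{t=0}^\infty$ possesses a time homogeneous subchain with Harris ergodic transition kernel, every homogeneous subchain $\{Z_t^i\}_{t=0}^\infty$, $i\in\{1,\dots,k\}$, has a Harris ergodic transition kernel $\tilde K_i$. Moreover, because each $K_i$ leaves $\pi$ invariant, so does the composition $\tilde K_i$, and hence $\pi$ is the stationary distribution of every subchain. Each subchain may start from a distribution other than $\pi$, but the SLLN for Harris ergodic (equivalently, positive Harris recurrent) chains holds from an arbitrary initial distribution; since $\pi(\|f\|_1)<\infty$, applying this componentwise gives, with probability one,
\[
\frac{1}{m}\sum_{t=1}^{m} f(Z_t^i) \longrightarrow \pi(f) = \theta \quad \text{as } m\to\infty, \qquad i\in\{1,\dots,k\}.
\]
Discarding a fixed finite number of leading terms does not affect the limit, so in fact $\frac{1}{m}\sum_{t=b}^{b+m-1} f(Z_t^i) \to \theta$ a.s. for any fixed offset $b\in\mathbb{N}$.

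Next I would split the partial sum $S_n := \sum_{t=1}^n f(X_t)$ along the subchains. Writing each index $t\in\{1,\dots,n\}$ uniquely as $t = kj+i-1$ with $i\in\{1,\dots,k\}$, the summands with a fixed residue $i$ are exactly $f(Z_j^i)$ for a contiguous range of $j$, so $S_n = \sum_{i=1}^k S_n^{(i)}$ where $S_n^{(i)} = \sum_{t=b_i}^{b_i + m_n^{(i)} - 1} f(Z_t^i)$ for a fixed offset $b_i\in\{0,1\}$ and a count $m_n^{(i)}$ that depends on $n$ but not on the sample path, satisfying $\sum_{i=1}^k m_n^{(i)} = n$ and $m_n^{(i)} = n/k + O(1)$; in particular $m_n^{(i)}\to\infty$ and $m_n^{(i)}/n\to 1/k$. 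By the previous paragraph $S_n^{(i)}/m_n^{(i)}\to\theta$ a.s., and therefore
\[
\frac{S_n}{n} \;=\; \sum_{i=1}^k \frac{m_n^{(i)}}{n}\cdot\frac{S_n^{(i)}}{m_n^{(i)}} \;\longrightarrow\; \sum_{i=1}^k \frac{1}{k}\,\theta \;=\; \theta \qquad \text{a.s.},
\]
which is the claim. (Alternatively, one could apply the same SLLN to the block chain $\{Z_t^{\scriptsize\mbox{blo}}\}_{t=0}^\infty$ with the function $(x_0,\dots,x_{k-1})\mapsto \sum_{i=0}^{k-1} f(x_i)$, using that the stationary distribution of the block chain has every marginal equal to $\pi$.)

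The only substantive ingredient is the Markov chain SLLN for Harris ergodic chains started from an arbitrary distribution \citep[see, e.g.,][]{nummelin2004general}; the remainder is index bookkeeping. Accordingly I do not anticipate a real obstacle. The two points requiring a line of care are (i) confirming that $\pi$ is stationary for each subchain (immediate, since a composition of $\pi$-preserving kernels is $\pi$-preserving) and (ii) controlling the $O(1)$ boundary terms together with the $n$-dependence of the counts $m_n^{(i)}$, both of which are routine.
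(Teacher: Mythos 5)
Your proposal is correct and follows essentially the same route as the paper: invoke Lemma \ref{lem:1} to get Harris ergodicity of each homogeneous subchain, apply the classical SLLN for positive Harris chains (the paper cites Meyn and Tweedie, Theorem 17.0.1) to each subchain from an arbitrary initial distribution, and then recombine the $k$ subchain averages with weights $m_n^{(i)}/n \to 1/k$, handling the $O(1)$ boundary/offset terms. The paper's recombination is written slightly more explicitly (an exact identity for $\hat{\theta}_{km+r-1}$ in terms of the subchain averages and $f(X_0)$), but the underlying argument is the same.
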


To prove Theorem \ref{thm:slln},  we apply \pcite{meyn2005markov} Theorem 17.0.1 on the homogeneous subchains.
Further details can be found in Appendix \ref{proof:thmslln}.

For $j\in\{0,\dots ,k-1\}$ and $l\in\mathbb{Z}$, define the $(j,l)$th autocovariance matrix as
\begin{equation}\label{eqn:xi}
	\begin{split}
		& \cov(j,l) = E_{\pi} \{(f(X_j) - \theta)(f(X_{j+l}) -\theta)^\top\}, \ l\geq 0, \text{ and }\\
		& \cov(j, l) = E_{\pi} \{(f(X_{j-l}) -\theta) (f(X_j) - \theta)^\top\} , \ l<0,
	\end{split}
\end{equation}
where  $E_{\pi}(\cdot)$ denotes the expectation of a function of $\{X_t\}_{t=0}^{\infty}$  if $X_0 \sim \pi$.
To be precise, for an initial distribution $\nu$ and a random vector $V(X_0, X_1, \dots)$, we use $E_{\nu} V(X_0, X_1, \dots)$ to denote the expectation of $V(X_0', X_1', \dots)$ where $\{X_t'\}_{t=0}^{\infty}$ is a Markov chain that has the same transition law as $\{X_t\}_{t=0}^{\infty}$, and $X_0' \sim \nu$.
Clearly, $\cov(j,l) = \cov(j, -l)^\top$.
Different from the time homogeneous situation, when $j_1\neq j_2$, $\cov(j_1,l)\neq \cov(j_2,l)$.
The theorem below provides the multivariate CLT, and gives an exact form of  the  covariance matrix $\Sigma$ in the CLT.

\begin{theorem}\label{thm:clt}
	Suppose that the time in-homogeneous chain $\{X_t\}_{t=0}^\infty$ possesses a time homogeneous subchain whose transition kernel
	is at least polynomially ergodic of order $q\geq (1+\gamma)(1+2/\gamma^*)$ and $\pi( \|f\|^{2+\gamma^*} )<\infty$, for some $\gamma>0$, and $\gamma^*>0$.  
	Then for $a$ and $b$ in $\{1,\dots,d\}$, $\sum_{l=-\infty}^{+\infty}\sum_{j=0}^{k-1} |\cov_{a,b}(j, l)|<\infty$, where $\cov_{a,b}(j, l)$ is the $(a,b)$th element of $\cov(j, l)$.
	Define 
		\begin{equation}\label{eqn:varsigma}
		\Sigma = \sum_{l=-\infty}^{+\infty} \sum_{j=0}^{k-1}\frac{1}{k}\cov(j,l).
	\end{equation}
	Assume further that $\Sigma$ is positive definite. Then
	the multivariate CLT \eqref{eqn:clt} holds with asymptotic covariance matrix $\Sigma$.
\end{theorem}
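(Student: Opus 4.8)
The plan is to reduce the time in-homogeneous chain to the time homogeneous block chain $\{Z_t^{\scriptsize\mbox{blo}}\}$, prove a multivariate CLT there by a mixing argument combined with the Cram\'er--Wold device, and transfer the conclusion back to $\hat\theta_n$ by a blocking argument, identifying the limiting covariance with \eqref{eqn:varsigma} along the way. Throughout, running the chain from $X_0\sim\pi$ keeps every marginal equal to $\pi$ (since $\pi K_i=\pi$ for each $i$) and makes $\{X_t\}$ cyclo-stationary with period $k$: for $0\le s\le t$, $E_\pi\{(f(X_s)-\theta)(f(X_t)-\theta)^\top\}=\cov(s\bmod k,\,t-s)$.

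First I would establish the absolute summability $\sum_{l=-\infty}^{\infty}\sum_{j=0}^{k-1}|\cov_{a,b}(j,l)|<\infty$. The $l$-step transition from time $j$ to time $j+l$ is a product of $l$ consecutive single-step kernels; grouping the first $k\lfloor l/k\rfloor$ of them into full cycles and using that each $K_i$ is a total-variation contraction toward $\pi$, this product equals a power of one of the subchain-type (cyclic-product) kernels composed with at most $k-1$ further contractions, so by Lemma \ref{lem:1} it obeys \eqref{eqn:ergodic} with $\eta(l)\lesssim l^{-q}$ and a drift function $\rho$ with $\pi(\rho)<\infty$, uniformly over $j$ (a supremum over finitely many residues). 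Hence the $\beta$-mixing, and also the $\alpha$-mixing, coefficient between $X_j$ and $X_{j+l}$ is $\le \pi(\rho)\eta(l)\lesssim l^{-q}$, and Davydov's covariance inequality with exponents $2+\gamma^*$ gives $|\cov_{a,b}(j,l)|\lesssim l^{-q\gamma^*/(2+\gamma^*)}$; since $q\ge(1+\gamma)(1+2/\gamma^*)>1+2/\gamma^*$ the exponent exceeds $1$, so the sum over $l$ converges, and summing over the finite range of $j$ keeps it finite.

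Next, set $g:\mathcal{X}^k\to\mathbb{R}^d$, $g(x_0,\dots,x_{k-1})=\sum_{i=0}^{k-1}(f(x_i)-\theta)$, and let $\pi_{\scriptsize\mbox{blo}}$ be the stationary law of $\{Z_t^{\scriptsize\mbox{blo}}\}$. Then $\pi_{\scriptsize\mbox{blo}}(g)=0$, and by $\|\sum_i a_i\|^{2+\gamma^*}\le k^{1+\gamma^*}\sum_i\|a_i\|^{2+\gamma^*}$ together with $\pi(\|f\|^{2+\gamma^*})<\infty$ also $\pi_{\scriptsize\mbox{blo}}(\|g\|^{2+\gamma^*})<\infty$. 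By Lemma \ref{lem:1} the block chain is at least polynomially ergodic of order $q$, hence $\alpha$-mixing with coefficients $\lesssim l^{-q}$ as above; applying a CLT for stationary $\alpha$-mixing sequences to $\vartheta^\top g$ for every $\vartheta\in\mathbb{R}^d$ and assembling via Cram\'er--Wold yields $m^{-1/2}\sum_{j=0}^{m-1}g(Z_j^{\scriptsize\mbox{blo}})\xrightarrow{d}\mathcal{N}(0,V)$ with $V=\lim_{m\to\infty}m^{-1}\mathrm{Var}_{\pi_{\scriptsize\mbox{blo}}}(\sum_{j=0}^{m-1}g(Z_j^{\scriptsize\mbox{blo}}))$. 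Writing $\mathrm{Var}_{\pi_{\scriptsize\mbox{blo}}}(\sum_{j=0}^{m-1}g(Z_j^{\scriptsize\mbox{blo}}))=\mathrm{Var}_\pi(\sum_{t=0}^{mk-1}(f(X_t)-\theta))$, expanding the right side as a double sum over time indices, grouping terms by the residue modulo $k$ of the smaller index and by the lag, and using the summability from the previous step to control the boundary corrections, gives $(mk)^{-1}\mathrm{Var}_\pi(\sum_{t=0}^{mk-1}(f(X_t)-\theta))\to\Sigma$, whence $V=k\Sigma$. Finally, for $n=mk+r$ with $0\le r\le k-1$, $\sqrt n(\hat\theta_n-\theta)=(m/n)^{1/2}\,m^{-1/2}\sum_{j=0}^{m-1}g(Z_j^{\scriptsize\mbox{blo}})+n^{-1/2}\big(\sum_{t=mk}^{n}(f(X_t)-\theta)-(f(X_0)-\theta)\big)$, where the bracketed remainder is a sum of at most $k+1$ terms, each $O_P(1)$ since $X_t\xrightarrow{d}\pi$ and $\pi(\|f\|^{2+\gamma^*})<\infty$, hence $o_P(1)$ after dividing by $\sqrt n$, while $(m/n)^{1/2}\to k^{-1/2}$; Slutsky's theorem then gives $\sqrt n(\hat\theta_n-\theta)\xrightarrow{d}\mathcal{N}(0,V/k)=\mathcal{N}(0,\Sigma)$, and since $\Sigma$ is assumed positive definite, \eqref{eqn:clt} follows.

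The step I expect to be the main obstacle is making the homogeneous-chain CLT input rigorous under precisely the trade-off $q\ge(1+\gamma)(1+2/\gamma^*)$, $\pi(\|f\|^{2+\gamma^*})<\infty$: one must check that the polynomial $\alpha$-mixing rate together with the moment of order $2+\gamma^*$ meets the hypotheses of whichever stationary-mixing CLT is invoked (this is the role of the slack factor $1+\gamma$), either citing a ready multivariate version or reducing to one dimension by Cram\'er--Wold. The covariance identification is a close second, since one has to track the transpose convention for negative lags in \eqref{eqn:xi} so that the symmetrization implicit in $V$ matches $\sum_{j,l}k^{-1}\cov(j,l)$; by contrast the blocking and Slutsky steps are routine.
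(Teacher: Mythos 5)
Your proposal follows essentially the same route as the paper: pass to the homogeneous block chain, obtain absolute summability of the autocovariances from polynomial ergodicity via an $\alpha$-mixing covariance inequality (the paper cites Kuelbs and Philipp's Lemma 2.1; your Davydov bound is the same estimate), prove the block-chain CLT by a univariate mixing CLT plus Cram\'er--Wold (the paper packages this as Lemma~\ref{lem:homoclt}, resting on Jones's Markov-chain CLT), identify the limiting covariance with $k\Sigma$ by rearranging the double sum over time indices (the paper applies the series rearrangement theorem to the autocovariance series rather than to $\lim_m m^{-1}\mathrm{Var}$, which is equivalent under the summability you established), and finish with a Slutsky argument for the at most $k$ leftover terms.

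The one place your write-up falls short of the statement is the initial distribution. You fix $X_0\sim\pi$ at the outset and invoke a CLT for \emph{stationary} $\alpha$-mixing sequences, but Theorem~\ref{thm:clt} is asserted for a chain started from an arbitrary initial distribution (and your own remainder step, which appeals to $X_t\xrightarrow{d}\pi$, implicitly presumes a non-stationary start). The paper sidesteps this by invoking a Markov-chain CLT (Jones 2004, Theorem 9(2)) that holds for Harris ergodic chains under any initial law; with your route you would need to add the standard de-initialization observation that, for Harris ergodic chains, a CLT established under stationarity extends to every initial distribution. Apart from this missing sentence, the argument is correct and matches the paper's.
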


The proof of Theorem \ref{thm:clt} relies on an application of \pcite{jones2004markov} Theorem 9 (2) on the block chain, and \pcite{kuelbs1980almost} Lemma 2.1. Further details are given in Appendix \ref{proof:clt}. 

\subsection{The multivariate strong invariance principle}\label{ssec:sip} 

\subsubsection{Sufficient conditions for an SIP} \label{sssec:sip}

To establish the SIP using a regeneration argument, we shall impose the following condition on $\{X_t\}_{t=0}^\infty$.  


\begin{condition}\label{con:min}
	There exist a measurable space $(\mathcal{X}_U, \mathcal{F}_U)$, a Markov transition kernel $K_U: \mathcal{X}_U \times \mathcal{F}_U \to [0,1]$, an integer $k_0 \in \mathbb{N}_+$, a measurable function $g_0: \mathcal{X}^{k_0} \to \mathcal{X}_U$, and measurable functions $g_i: \mathcal{X}_U^2 \to \mathcal{X}$, $i = 1,\dots,k$, that satisfy the following conditions:
	\begin{enumerate}
		\item [(i)] 
		Regardless of the initial distribution of $\{X_t\}_{t=0}^{\infty}$, the following hold: the process $$U_t = g_0(X_{kt}, X_{kt+1}, \dots, X_{kt+k_0-1}), \; t \in \mathbb{N},$$ forms a homogeneous Markov chain following the transition law $K_U$;
		moreover,  for $i\in\{1,\dots,k\}$ and $t\in\mathbb{N}$, $X_{tk + i} = g_i(U_t, U_{t+1})$.
		\item [(ii)] (1-step minorization) The transition kernel $K_U$ has a stationary distribution $\pi_U$.
		Moreover, there exists a function $h:\mathcal{X}_U\to[0,1]$ with $\pi_U(h)>0$ and a probability measure $\mu(\cdot)$ on $(\mathcal{X}_U,\mathbb{F}_U)$ such that
		\begin{equation}\label{eqn:minorization}
			K_U(u, A) \geq h(u)\mu(A), \, \quad u \in \mathcal{X}_U, \; A\in\mathcal{F}_U.
		\end{equation}	
	\end{enumerate}
\end{condition}


Condition \ref{con:min}, while seemingly technical, is satisfied for a variety of samplers, including the most common deterministic scan Gibbs-type algorithms. 
Examples will be given after we state Theorem~\ref{thm:sip}, the main result of this section below.

\begin{theorem}\label{thm:sip}
	Assume that Condition \ref{con:min} holds. 
	Suppose that $K_U$ is at least polynomially ergodic of order 
	$q>1+\gamma + (2+\gamma)^2/\gamma^*$ 
	and $ \pi (\|f\|^{2+\gamma + \gamma^*} ) <\infty$ for some $\gamma>0$ and $\gamma^*>0$. 
	Finally, assume that $\Sigma$ in \eqref{eqn:varsigma} is positive definite.
	Then, for any $ \rho > \max\{1/(2+\gamma), 1/4\}$,  on a richer probability space,
	an SIP \eqref{eqn:sip0} holds for $\{X_t\}_{t=0}^{\infty}$ with rate $\phi(n) = n^ \rho$.
\end{theorem}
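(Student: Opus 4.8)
The plan is to recast the non-homogeneous partial sums as partial sums of a \emph{stationary} functional of the homogeneous driving chain $\{U_t\}$ supplied by Condition~\ref{con:min}, and then to invoke a strong approximation theorem for stationary causal sequences, with the $1$-step minorization \eqref{eqn:minorization} carrying the quantitative load through a regeneration argument. Concretely, set $\bar F(u,v)=\sum_{i=1}^{k}\bigl(f(g_i(u,v))-\theta\bigr)$. Condition~\ref{con:min}(i) gives, for every $N$, $\sum_{t=1}^{kN}\bigl(f(X_t)-\theta\bigr)=\sum_{m=1}^{N}\xi_m$ with $\xi_m:=\bar F(U_{m-1},U_m)$, so it suffices to prove the SIP for $\sum_{m=1}^{N}\xi_m$ along $n=kN$ and to absorb three $o(n^\rho)$ rounding errors: (a) the partial-cycle remainder $\max_{0\le j<k}\bigl\|\sum_{t=kN+1}^{kN+j}(f(X_t)-\theta)\bigr\|$, which is $o(N^\rho)$ a.s.\ by Markov's inequality and Borel--Cantelli, since each summand has a finite $(2+\gamma+\gamma^*)$-th moment under the stationary law and $\rho>1/(2+\gamma)>1/(2+\gamma+\gamma^*)$; (b) $\|B(kN+j)-B(kN)\|$, a sum of at most $k$ standard Gaussian vectors, handled the same way; and (c) the discrepancy produced by an arbitrary initial distribution, which by \eqref{eqn:minorization} (couple $\{U_t\}$ with a stationary copy they meet at an a.s.\ finite time) is a fixed finite random variable. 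Hence one may assume $U_0\sim\pi_U$, so that $\{\xi_m\}$ is stationary, centered, and $\mathbb{R}^d$-valued.

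Writing $U_m=\Phi(U_{m-1},\varepsilon_m)$ with i.i.d.\ innovations $\{\varepsilon_m\}$, the sequence $\xi_m=H(\dots,\varepsilon_{m-1},\varepsilon_m)$ is a causal Bernoulli shift with $E\|\xi_0\|^{2+\gamma+\gamma^*}<\infty$ (each summand of $\bar F$ has the law of $f-\theta$ under $\pi$). I would apply the $\mathbb{R}^d$-valued form of \pcite{liu2009strong} strong approximation theorem for stationary processes --- available through the blocking and multivariate Gaussian-approximation arguments used by \citet{banerjee2022multivariate} for homogeneous chains --- which requires, for $p=\min\{2+\gamma,4\}$, that $E\|\xi_0\|^p<\infty$, that the limiting covariance be non-degenerate, and that the dependence measure $\delta_p(m)$ of $\{\xi_m\}$ satisfy a short-range-dependence (e.g.\ summability) condition. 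To control $\delta_p(m)$ I would perform Nummelin splitting of $\{U_t\}$ using \eqref{eqn:minorization} (cf.\ \citealp{nummelin2004general}): on the event that $\{U_t\}$ has regenerated within $\{1,\dots,m\}$, the law of $\xi_m$ has forgotten the initial segment, so Hölder's inequality yields $\delta_p(m)\le 2\|\xi_0\|_{2+\gamma+\gamma^*}\,\bigl[\,P(\text{no regeneration in }\{1,\dots,m\})\,\bigr]^{1/p-1/(2+\gamma+\gamma^*)}$. Since $K_U$ is at least polynomially ergodic of order $q$, the regeneration-time tail decays polynomially at an order governed by $q$ (the standard return-time characterization of polynomial ergodicity; see e.g.\ \citealp{fort2003polynomial}), and a short exponent computation shows that $q>1+\gamma+(2+\gamma)^2/\gamma^*$ is exactly what makes $\sum_m\delta_p(m)<\infty$, the surplus moment $\gamma^*$ being precisely what feeds the Hölder step. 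The same decay makes $\Sigma_\xi:=\sum_{l\in\mathbb{Z}}E[\xi_0\xi_l^\top]$ absolutely convergent.

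The theorem then furnishes, on a richer space, i.i.d.\ standard Gaussian vectors and a $d$-dimensional Brownian motion with covariance $\Sigma_\xi$ approximating $\sum_{m=1}^N\xi_m$ up to $o(N^\rho)$ a.s.\ (its rate, of the form $N^{1/p}$ up to logarithmic factors, being dominated since $\rho>\max\{1/(2+\gamma),1/4\}\ge 1/p$). Undoing the time change $n=kN$ and rescaling (Brownian scaling: set $B(t)=\sqrt{k}\,W_0(t/k)$ for a standard Brownian motion $W_0$ with the Liu--Lin limit equal to $\Gamma_\xi W_0$, $\Gamma_\xi\Gamma_\xi^\top=\Sigma_\xi$) turns the approximant into $\Gamma B(n)$ with $\Gamma=k^{-1/2}\Gamma_\xi$, so $\Gamma\Gamma^\top=k^{-1}\Sigma_\xi$. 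Reindexing the double series defining $\Sigma_\xi$, using the $k$-periodic stationarity of $\{X_t\}$ and the definition~\eqref{eqn:xi}, gives $\Sigma_\xi=\sum_{j=0}^{k-1}\sum_{l\in\mathbb{Z}}\cov(j,l)=k\Sigma$ with $\Sigma$ as in \eqref{eqn:varsigma}; hence $\Gamma\Gamma^\top=\Sigma$, which is positive definite by hypothesis, supplying the non-degeneracy used above. Folding in the three $o(n^\rho)$ errors then gives \eqref{eqn:sip0} for $\{X_t\}_{t=0}^\infty$ with $\phi(n)=n^\rho$.

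I expect the regeneration step to be the main obstacle: turning polynomial ergodicity of the \emph{driver} $K_U$ into a summable dependence measure for the \emph{unbounded} functional $\bar F(U_{m-1},U_m)$, with the sharp order $q>1+\gamma+(2+\gamma)^2/\gamma^*$, requires a regeneration construction adapted to the cyclic structure, a careful Hölder trade-off exploiting the extra moment $\gamma^*$, and the quantitative link between polynomial ergodicity and regeneration-time moments. Everything else --- the rounding estimates, the coupling to stationarity, and the covariance identification --- is routine once this is in place, and the resulting rate matches the sharpest homogeneous-chain rate of \citet{csaki1995additive}.
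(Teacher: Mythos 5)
Your plan is a genuinely different route from the paper's: the paper applies only the \emph{simple} (1-dependent, two-block-factor) version of \pcite{liu2009strong} Theorem 2.1 to the regeneration block sums $\tilde Y_i=\tilde g(\Delta_i,\Delta_{i+1})$, where the $\Delta_i$ are i.i.d.\ by Nummelin splitting, and then pays for this with a random time change (replacing $B(\xi(n)-1)$ by $B$ at a deterministic time via a Cs\"org\H{o}--R\'ev\'esz increment bound) plus explicit control of the initial/terminal block errors; you instead want to apply the full dependent-process version of the theorem directly to the per-cycle increments $\xi_m=\bar F(U_{m-1},U_m)$, avoiding the random time change. The covariance identification $\Sigma_\xi=k\Sigma$ and the rounding/initialization estimates in your sketch are fine in spirit (the paper does the analogous work in Lemma~\ref{lem:error} and Theorem~\ref{thm:clt}).

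The gap is in the central step. First, to even define the physical dependence measure $\delta_p(m)$ you need a causal representation $\xi_m=H(\dots,\varepsilon_{m-1},\varepsilon_m)$ of the \emph{stationary} chain as a measurable function of i.i.d.\ innovations extending to $-\infty$; for a Harris chain that is merely polynomially ergodic this is not automatic (an iterated-random-function representation $U_m=\Phi(U_{m-1},\varepsilon_m)$ only makes $U_m$ a function of $U_0$ and finitely many innovations), and relating the innovation-replacement coupling in $\delta_p(m)$ to your regeneration coupling is exactly the kind of construction that would have to be carried out, not asserted. Second, and more decisively, summability of $\delta_p(m)$ is \emph{not} the hypothesis of Liu and Lin's Theorem 2.1: to obtain the rate $o(n^{1/\varrho})$ with $\varrho$ close to $\min\{2+\gamma,4\}$ their result requires a specific polynomial decay rate of the tail sums $\Theta_{m,p}=\sum_{i\ge m}\delta_p(i)$ (with an exponent depending on $p$), together with the moment condition. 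You offer no verification that polynomial ergodicity of order $q>1+\gamma+(2+\gamma)^2/\gamma^*$ yields that decay rate for the unbounded functional $\bar F$; your claim that this threshold is ``exactly what makes $\sum_m\delta_p(m)<\infty$'' is a different (and weaker) condition, and in the paper that threshold in fact arises from a different computation entirely, namely the H\"older/Minkowski bound showing $E\|\tilde Y_1\|^{2+\gamma}<\infty$ for the random-length regeneration blocks (Lemma~\ref{lem:4}, using $E\tau_1^{q'}<\infty$ for $q'<q+1$). Without (i) the causal representation and a rigorous bound on $\delta_p(m)$ and (ii) a check that the resulting decay meets the theorem's rate requirement, the invocation of the strong approximation theorem — and hence the claimed rate $n^\rho$ for all $\rho>\max\{1/(2+\gamma),1/4\}$ — is unsupported. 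The regeneration-block route used in the paper is precisely the device that sidesteps both difficulties, at the cost of the random-time-change step you were trying to avoid.
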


\begin{remark}
	We shall establish Theorem~\ref{thm:sip} when $\{X_t\}_{t=0}^{\infty}$ has an arbitrary initial distribution.
\end{remark}

Establishing Theorem \ref{thm:sip} would require some work, and we will present the proof in Sections~\ref{sssec:regenerative} and~\ref{sssec:sip-proof}.
We now list some common time in-homogeneous cyclic MCMC samplers that satisfy Condition \ref{con:min}.

When the Markov chain $\{X_t\}_{t=0}^\infty$ is time homogeneous, one can take $\{U_t\}_{t=0}^\infty$ = $\{X_t\}_{t=0}^\infty$. 
Then Condition \ref{con:min} is satisfied if the transition kernel of $\{X_t\}_{t=0}^\infty$ satisfies Condition \ref{con:min}-(ii).
This is a common assumption in works establishing the SIP for time homogeneous Markov chains \citep{csaki1995additive,jones2006fixed}.

Next, we consider deterministic scan Gibbs and Metropolis within Gibbs samplers.
 For $i \in \{1, \dots, k\}$, let $(\mathcal{X}_i, \sigma(\mathcal{X}_i))$ be a measurable space.
 Let $\mathcal{X} = \mathcal{X}_1\times \dots\times \mathcal{X}_k$ and $\mathcal{F} = \sigma(\mathcal{X}_1)\times\dots\times\sigma(\mathcal{X}_k)$. 
 For any $x \in \mathcal{X}$ and $i\in\{1,\dots, k\}$, denote by $x^{\{i\}}$ the $i$th component of~$x$, so that $x = (x^{\{1\}}, \dots, x^{\{k\}})$.
Suppose that $K_i$ is of the form
 \[
 K_i((x^{\{1\}}, \dots, x^{\{k\}}), \df (y^{\{1\}}, \dots, y^{\{k\}})) = H_i(x, \df y^{\{i\}}) \, \prod_{j \neq i} \Dirac_{x^{\{j\}}}(\df y^{\{j\}}),
 \]
 where $\Dirac_{x^{\{j\}}}: \sigma(\mathcal{X}_j) \to [0,1]$ is the point mass (Dirac measure) at $x^{\{j\}} \in \mathcal{X}_j$, and $H_i: \mathcal{X} \times \sigma(\mathcal{X}_i) \to [0,1]$ is a Markov transition kernel.
In this case, $X_{kt + i}$ and $X_{kt+i-1}$ differ only in the $i$th coordinate for $i \in \{1,\dots,k\}$ and $t \in \mathbb{N}$.
Take $\{U_t\}_{t=0}^\infty = \{X_{kt}\}_{t=0}^\infty$,  i.e., $\{U_t\}_{t=0}^{\infty}$ is the first homogeneous subchain of $\{X_t\}_{t=0}^{\infty}$. 
For $x \in \mathcal{X}$ and $i \in \{1,\dots,k\}$, let $x^{i-} = (x^{\{1\}}, \dots, x^{\{i-1\}}, x^{\{i\}})$  and $x^{i+} = (x^{\{i\}}, x^{\{i+1\}}, \dots, x^{\{k\}})$.
Then, for $t\in\mathbb{N}$ and $i\in\{1,\dots,k\}$, $X_{tk+i}^{i-} = U_{t+1}^{i-}$, and, if $i \leq k-1$, $X_{tk+i}^{(i+1)+} = U_{t}^{(i+1)+}$.
Then Condition \ref{con:min}-(i) holds with $K_U= K_1K_2\cdots K_k$.
Suppose, for example, that the $\mathcal{X}_i$'s are Euclidean spaces, and that
\[
H_i(x, A) \geq \int_A h_i(x, y^{\{i\}}) \, \df y^{\{i\}}, \quad x \in \mathcal{X}, \; A \in \sigma(\mathcal{X}_i),
\]
where $h_i: \mathcal{X} \times \mathcal{X}_i \to (0,\infty)$ is a strictly positive lower semicontinuous function.
Then $K_U$ satisfies Condition \ref{con:min}-(ii).
Evidently, this is satisfied by a wide range of common Gibbs-type samplers.

Consider another example of a time in-homogeneous cyclic sampler.
Suppose now that $k=2$ and $K'_1$ and $K'_2$ are Gibbs updates. 
To be specific, assume that $\pi$ is the joint distribution of some random element $X = (X^{\{1\}}, X^{\{2\}})$, where $X^{\{i\}}$ is $\mathcal{X}_i$-valued for $i = 1,2$.
Let $\pi_2(x^{\{1\}}, \cdot)$ be the conditional distribution of $X^{\{2\}}$ given $X^{\{1\}} = x^{\{1\}} \in \mathcal{X}_1$, and let $\pi_1(x^{\{2\}}, \cdot)$ be the conditional distribution of $X^{\{1\}}$ given $X^{\{2\}} = x^{\{2\}} \in \mathcal{X}_2$.
Let
\[
\begin{aligned}
K'_1((x^{\{1\}}, x^{\{2\}}), \df (y^{\{1\}}, y^{\{2\}})) = \pi_1(x^{\{2\}}, \df y^{\{1\}}) \, \Dirac_{x^{\{2\}}}(\df y^{\{2\}}), \\ 
K'_2((x^{\{1\}}, x^{\{2\}}), \df (y^{\{1\}}, y^{\{2\}})) = \pi_2(x^{\{1\}}, \df y^{\{2\}}) \, \Dirac_{x^{\{1\}}}(\df y^{\{1\}}).
\end{aligned}
\]
One may apply the update $K'_1$ multiple, say $k_1$, times before moving on to $K'_2$.
This can be beneficial if the computation cost of applying $K'_1$ is much less than that of applying $K'_2$~\citep{qin2022analysis}. 
We call the resulting sampler a modified deterministic scan Gibbs sampler.
We mandate that each update constitutes an iteration of the sampler, so this is a time in-homogeneous $k$-cyclic sampler, where $k=k_1+1$. 
We can specify the $k$ transition kernels of this sampler as follows: for $i \in \{1,\dots,k_1\}$, $K_i = K_1'$, and $K_k = K_2'$.
Denote by $\{X_t\}_{t=0}^{\infty}$ the underlying Markov chain.
For $t\in\mathbb{N}$, let 
$U_t = ( X_{tk+1}^{\{1\}}, \dots, X_{tk+k_1}^{\{1\}}, X_{tk}^{\{2\}})$.
Then $\{U_t\}_{t=0}^{\infty}$ is a homogeneous Markov chain with state space $\mathcal{X}_1^{k_1} \times \mathcal{X}_2$ whose transition kernel is
\begin{equation} \label{eqn:KU}
K_U((u^{\{1\}}, \dots, u^{\{k_1\}}, u^{\{k\}}), \df (v^{\{1\}}, \dots, v^{\{k_1\}}, v^{\{k\}})) = \pi_2(u^{\{k_1\}}, \df v^{\{k\}}) \prod_{i=1}^{k_1} \pi_1(v^{\{k\}}, \df v^{\{i\}}).
\end{equation}
Moreover, for $t\in\mathbb{N}$, $X_{tk+i} = (U_t^{\{i\}}, U_t^{\{k\}})$ if $i \in \{1,\dots,k_1\}$, and $X_{tk + k} = (U_t^{\{k_1\}}, U_{t+1}^{\{k\}})$.
Hence, Condition~\ref{con:min}-(i) is satisfied.
Condition \ref{con:min}-(ii) holds if there exists a function $h_*: \mathcal{X}_1 \to [0,\infty]$ and a probability measure $\nu(\cdot)$ on $\mathcal{X}_2$ such that $$\int_{\mathcal{X}} h_*(x^{\{1\}}) \, \pi(\df (x^{\{1\}}, x^{\{2\}})) > 0,$$ and that
\[
\pi_2(x^{\{1\}}, A) \geq h_*(x^{\{1\}}) \nu(A) , \quad x^{\{1\}} \in \mathcal{X}_1 , \; A \in \sigma(\mathcal{X}_2).
\]

The convergence properties of the chain $\{U_t\}_{t=0}^{\infty}$ is closely related to those of $\{X_t\}_{t=0}^{\infty}$. 
Indeed, we have the following lemma, proved in Appendix \ref{proof:ergodic}.

\begin{lemma} \label{lem:ergodic}
	Assume that Condition \ref{con:min}-(i) holds.
	If $K_U$ is Harris ergodic, then $\tilde{K}_1$, the transition kernel of
	the homogeneous subchain $\{Z_t^1\}_{t=0}^{\infty}$,
	is Harris ergodic.
	If $K_U$ is polynomially ergodic of order $q$ (resp. geometrically ergodic), then $\tilde{K}_1$ is polynomially ergodic of order~$q$ (resp. geometrically ergodic).
\end{lemma}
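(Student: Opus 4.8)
The plan is to transfer the ergodicity of $K_U$ to $\tilde K_1$ through a short chain of total-variation contraction (data-processing) inequalities, which is the quantitative mechanism underlying a de-initializing argument. The structural input is Condition~\ref{con:min}-(i) specialized to $i=k$: for every $t\in\mathbb{N}$, $X_{(t+1)k}=g_k(U_t,U_{t+1})$; equivalently $X_{sk}=g_k(U_{s-1},U_s)$ for all $s\ge1$. So, past the first block, the homogeneous subchain $\{X_{kt}\}$ is a fixed measurable image of the consecutive-pair process $\{(U_{t-1},U_t)\}$ of the homogeneous chain $\{U_t\}$, and only the dependence on the starting point needs to be controlled.

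First I would record how a start $X_0=x$ propagates. Given $X_0=x$, the variable $U_0=g_0(X_0,\dots,X_{k_0-1})$ has some law $\nu_x$ on $\mathcal{X}_U$, and by Condition~\ref{con:min}-(i) the process $\{U_t\}_{t\ge0}$ is then a $K_U$-chain started from $\nu_x$. Since $\pi K_i=\pi$ for all $i$, running $\{X_t\}$ from $X_0\sim\pi$ makes it stationary, hence $\{U_t\}$ stationary, so $\mathcal{L}(U_0)$ is a stationary law of $K_U$; under the ergodicity hypothesis of the lemma this is the unique stationary distribution $\pi_U$ of $K_U$. In particular $\pi\tilde K_1=\pi$, and for every $s\ge1$, $\pi$ equals the image under $g_k$ of the stationary law of $(U_{s-1},U_s)$.

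The core estimate applies $\|\mu L-\nu L\|_{\mathrm{TV}}\le\|\mu-\nu\|_{\mathrm{TV}}$ (for any Markov kernel $L$, in particular for pushforward by a measurable map) twice: for $t\ge1$,
\begin{equation*}
\|\tilde K_1^{\,t}(x,\cdot)-\pi\|_{\mathrm{TV}}
\;\le\;\bigl\|\mathcal{L}\bigl((U_{t-1},U_t)\mid U_0\sim\nu_x\bigr)-\mathcal{L}\bigl((U_{t-1},U_t)\mid U_0\sim\pi_U\bigr)\bigr\|_{\mathrm{TV}}
\;\le\;\|\nu_x K_U^{\,t-1}-\pi_U\|_{\mathrm{TV}},
\end{equation*}
the first step being pushforward by $g_k$, the second pushforward of $\mathcal{L}(U_{t-1})$ by $u\mapsto\delta_u\otimes K_U(u,\cdot)$ combined with $\pi_U K_U=\pi_U$. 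If $K_U$ is Harris ergodic, the right-hand side tends to $0$ for every $x$; averaging over an arbitrary initial law $\nu$ for $\tilde K_1$ and using dominated convergence gives $\|\nu\tilde K_1^{\,t}-\pi\|_{\mathrm{TV}}\to0$, which is equivalent to Harris ergodicity of $\tilde K_1$. If instead $\|K_U^{\,t}(u,\cdot)-\pi_U\|_{\mathrm{TV}}\le\rho_U(u)\,\eta(t)$ with $\eta(t)=t^{-q}$ (resp.\ $\eta(t)=c^{\,t}$) and $\pi_U(\rho_U)<\infty$, then $\|\tilde K_1^{\,t}(x,\cdot)-\pi\|_{\mathrm{TV}}\le\nu_x(\rho_U)\,\eta(t-1)$ for $t\ge2$; since $\eta(t-1)\le C\,\eta(t)$ there, with $C=2^q$ (resp.\ $C=1/c$), the envelope $\rho_1(x)=\max\{C\,\nu_x(\rho_U),\,c_0\}$ for a suitable constant $c_0$ satisfies $\|\tilde K_1^{\,t}(x,\cdot)-\pi\|_{\mathrm{TV}}\le\rho_1(x)\,\eta(t)$ for all $t\ge1$, and $\pi(\rho_1)<\infty$ because $\int\nu_x(\rho_U)\,\pi(\mathrm{d}x)=E\bigl[\rho_U(U_0)\mid X_0\sim\pi\bigr]=\pi_U(\rho_U)<\infty$. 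This yields polynomial ergodicity of order $q$ (resp.\ geometric ergodicity) of $\tilde K_1$.

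I expect the only delicate part to be this last bookkeeping — upgrading plain total-variation convergence to polynomial/geometric ergodicity \emph{with a $\pi$-integrable envelope} — where one must identify $\nu_x$, use uniqueness of the stationary law of $K_U$ to get $\mathcal{L}(U_0)=\pi_U$ under stationarity, and evaluate $\int\nu_x(\rho_U)\,\pi(\mathrm{d}x)=\pi_U(\rho_U)$. The contraction inequalities are routine, and the whole argument uses only Condition~\ref{con:min}-(i).
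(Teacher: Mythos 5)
Your proof is correct and takes essentially the same approach as the paper's: the two total-variation contraction (data-processing) steps reproduce the paper's key display bounding $|\tilde K_1^{\,t}(x,A)-\pi(A)|$ by $\nu_x(\rho_U)\,\eta(\cdot)$, and your integrability bookkeeping matches the paper's identification $\pi(\rho)=\pi_U(\rho_U)$ for the envelope $\rho(x)=\nu_x(\rho_U)$. The only (harmless) difference is that you identify the law of $U_0$ under $X_0\sim\pi$ as $\pi_U$ directly via stationarity and uniqueness, whereas the paper reaches the same identification through a limiting argument after first establishing Harris ergodicity of $\tilde K_1$.
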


\begin{remark} \label{rem:ergodic}
	Conversely, in the three examples above, $\tilde{K}_1$ being Harris ergodic implies that $K_U$ is Harris ergodic;
	$\tilde{K}_1$ being polynomially ergodic of order~$q$ (resp. geometrically ergodic) implies that $K_U$ is polynomially ergodic of order~$q$ (resp. geometrically ergodic).
	Indeed, when $\{X_t\}_{t=0}^{\infty}$ is homogeneous or a Metropolis within Gibbs chain, $\tilde{K}_1 = K_U$.
	When $\{X_t\}_{t=0}^{\infty}$ corresponds to the modified deterministic scan Gibbs sampler, the assertion can be verified through a de-initialization argument.
\end{remark}

\subsubsection{A regenerative construction} \label{sssec:regenerative}

The proof of Theorem \ref{thm:sip} relies on a regenerative construction, which we describe below.

Assume that Condition~\ref{con:min} holds for a Markov transition kernel $K_U: \mathcal{X}_U \times \mathcal{F}_U \to [0,1]$.
Define the residual measure as
$$
R(u,\mathrm{d}v) = 
\begin{cases}
	\frac{1}{1-h(u)} [K_U(u, \mathrm{d}v) - h(u)\mu(\mathrm{d}v)] , & h(u) < 1\\
	\mu(\mathrm{d}v), & h(u) = 1.
\end{cases}
$$
Then 
$$
K_U(u, \mathrm{d}v) = h(u)\mu(\mathrm{d}v) + \{1-h(u)\}R(u, \mathrm{d}v).
$$

Enriching the underlying probability space if necessary, one may define a sequence of Bernoulli random variables $\{\delta_t\}_{t=0}^{\infty}$ conditioning on $\{U_t\}_{t=0}^{\infty}$ through the following mechanism:
Given realization of $\{U_t\}_{t=0}^{\infty}$, sequentially generate $\delta_t$ for $t \in \mathbb{N}$ according to the Bernoulli distribution whose success probability is the Radon-Nikodym derivative of the measure $h(U_t) \mu(\cdot)$ with respect to $K_U(U_t, \cdot)$ evaluated at $U_{t+1}$.
It can then be shown that 
\[
\{\tilde{U}_t\}_{t=0}^\infty = \{(U_{0}, \delta_0), (U_{1}, \delta_1),  (U_{2}, \delta_2), \dots\}
\]
is a time homogeneous Markov chain.
This is called a split chain.
Given $\{(U_t, \delta_t)\}_{t=0}^{n-1}$ and $U_n$, the indicator $\delta_n$ is 1 with probability $h(U_n)$ and 0 otherwise.
Given $\{(U_t, \delta_t)\}_{t=0}^n$, $U_{n+1} \sim \mu(\cdot)$ if $\delta_n = 1$ and $U_{n+1} \sim R(U_n, \cdot)$ if $\delta_n = 0$.
See Section 3 of \cite{mykland1995regeneration}.

For $t \in \mathbb{N}$, given $\{U_i, \delta_i\}_{i < t}$, if $\delta_t=1$, then $U_{t+1}\sim\mu(\cdot)$  and does not depend on $U_{t}$.
The set of $t$ for which $\delta_{t-1}=1$ is called regeneration times, which can be defined  by $0=T_0 < T_1<\dots$ with $T_{i+1}=\inf\{t>T_i: \delta_{t-1}=1\}$.
Let $\tau_{i} = T_{i+1}-T_{i}$, $i\in\mathbb{N}$, be the length from the $i$th regeneration to the $(i+1)$th regeneration. 
The $\tau_i$s, $i\geq 1$,  are i.i.d. random variables \citep[see, e.g.,][]{mykland1995regeneration}. 
The following two lemmas highlight some important properties of the regeneration times.

\begin{lemma}[\cite{mykland1995regeneration} Theorem 1]\label{lem:finite}
Assume that $K_U$ is Harris ergodic.
Then regardless of the distribution of $(U_0, \delta_0)$, $T_1$ is finite almost surely.
\end{lemma}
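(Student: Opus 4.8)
The plan is to prove something slightly stronger than stated: almost surely $\delta_t=1$ for infinitely many $t$. This suffices, since $T_1=1+\min\{t\in\mathbb{N}:\delta_t=1\}$ (and, iterating, every $T_j$ is finite on the event that $\delta_t=1$ infinitely often). So the goal is to show that, regardless of the law of $(U_0,\delta_0)$, $\sum_{t=0}^{\infty}\delta_t=\infty$ almost surely. I would do this by comparing the random series $\sum_t\delta_t$ with the series $\sum_t h(U_t)$, the latter being controllable directly from Harris recurrence of $K_U$.

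\emph{Step 1: $\sum_{t\ge 0}h(U_t)=\infty$ a.s.} Since $\pi_U(h)>0$, there is $\varepsilon>0$ with $\pi_U(B_\varepsilon)>0$ for $B_\varepsilon=\{u\in\mathcal{X}_U:h(u)>\varepsilon\}$ (otherwise $h\le\varepsilon$ $\pi_U$-a.e. for every $\varepsilon$, forcing $\pi_U(h)=0$). As $K_U$ is Harris recurrent with invariant probability $\pi_U$, any set of positive $\pi_U$-measure is visited infinitely often by $\{U_t\}_{t\ge0}$ almost surely, from an arbitrary initial distribution; hence $\sum_t h(U_t)\ge\varepsilon\cdot\#\{t:U_t\in B_\varepsilon\}=\infty$ a.s. (One could equally invoke the strong law of large numbers for Harris ergodic chains applied to the bounded function $h$.)

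\emph{Step 2: a bounded-increment martingale comparison.} By the split-chain construction, given $\{(U_i,\delta_i)\}_{i<t}$ and $U_t$, the variable $\delta_t$ is Bernoulli with mean $h(U_t)$; hence, with respect to the filtration $\mathcal{H}_t=\sigma\big((U_i,\delta_i)_{i\le t},\,U_{t+1}\big)$, the process $M_n=\sum_{t=0}^{n-1}\big(\delta_t-h(U_t)\big)$ is a martingale whose increments are bounded by $1$ in absolute value. By the standard dichotomy for such martingales, almost surely either $M_n$ converges to a finite limit or $\limsup_n M_n=+\infty$; in particular $M_n\not\to-\infty$. Combining with Step 1: if $\sum_t\delta_t<\infty$ on some event of positive probability, then on that event $M_n=\sum_{t<n}\delta_t-\sum_{t<n}h(U_t)\to-\infty$, which is impossible. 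Therefore $\sum_{t\ge0}\delta_t=\infty$ a.s., so $\delta_t=1$ for infinitely many $t$, and in particular $T_1<\infty$ almost surely.

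\emph{Main obstacle.} The only delicate point is the gap between what recurrence controls, namely $\sum_t h(U_t)$, and what we actually need, namely information about the realized coin flips $\sum_t\delta_t$; this gap is exactly a mean-zero martingale with bounded increments, and its inability to drift to $-\infty$ closes the argument. Everything else (the choice of $\varepsilon$, the i.o.\ visiting, the choice of filtration) is routine. An alternative, less self-contained route would be to observe that the Nummelin split chain $\{(U_t,\delta_t)\}$ inherits Harris ergodicity from $K_U$ and that its ``regeneration set'' $\mathcal{X}_U\times\{1\}$ carries stationary mass $\pi_U(h)>0$, hence is entered infinitely often; this is essentially the content of \cite{mykland1995regeneration}.
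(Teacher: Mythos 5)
Your argument is correct in substance, but it takes a different route from the paper: the paper gives no proof of this lemma at all---it is imported verbatim as Theorem~1 of \cite{mykland1995regeneration}, i.e.\ exactly the ``less self-contained route'' you sketch at the end (the split chain is Harris ergodic and the atom $\mathcal{X}_U\times\{1\}$ carries stationary mass $\pi_U(h)>0$, hence is hit infinitely often). Your version replaces that citation with an elementary two-step argument: Harris recurrence of $K_U$ forces $\sum_t h(U_t)=\infty$ almost surely, and the bounded-increment martingale dichotomy excludes $\sum_t\delta_t<\infty$, since on that event $M_n=\sum_{t<n}(\delta_t-h(U_t))\to-\infty$. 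What your route buys is self-containedness---only recurrence of the $U$-chain and a textbook martingale fact are needed, nothing about the split kernel beyond the conditional law of $\delta_t$---while the citation buys brevity and the larger package of regeneration properties the paper uses later anyway. Two small repairs to your write-up: (i) as written, your filtration $\mathcal{H}_t$ already contains $\delta_t$ and $U_{t+1}$, so $M_n$ is predictable rather than a martingale with respect to $(\mathcal{H}_n)$; work instead with $\mathcal{G}_n=\sigma\bigl((U_i)_{i\le n},(\delta_i)_{i<n}\bigr)$, for which $E[\delta_n-h(U_n)\mid\mathcal{G}_n]=0$ by the split-chain construction and the increments remain bounded by $1$; (ii) since $(U_0,\delta_0)$ is allowed an arbitrary joint law, the time-zero coin need not have conditional mean $h(U_0)$, and $\{U_t\}$ evolves according to $K_U$ only from time $1$ onward (given $(U_0,\delta_0)$, the law of $U_1$ is $\mu$ or $R(U_0,\cdot)$, not $K_U(U_0,\cdot)$); starting both of your sums at $t=1$ fixes this at no cost, since a single bounded term cannot affect divergence, and Harris recurrence applied to the chain $\{U_t\}_{t\ge1}$ with initial law that of $U_1$ still gives infinitely many visits to $\{h>\varepsilon\}$.
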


The distribution of $\tau_1$  does not depend on the initial distribution. The lemma below provides conditions for finite moments.
\begin{lemma}\label{lem:tau}
	Suppose that $K_U$ is 	at least polynomially ergodic of order $q$ for some $q>1$. Then $E\tau_1^{q'}<\infty$ for all $q'\in[0,q+1)$.
\end{lemma}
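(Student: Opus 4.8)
The plan is to relate the tail of $\tau_1$ to the return-time tail of the split chain, which is in turn controlled by the polynomial ergodicity of $K_U$. Recall that $\tau_1 = T_2 - T_1$ is distributed (for the stationary split chain, or after one regeneration) as the first return time to the ``regeneration set'' $\{\delta = 1\}$ after a state drawn from the small measure $\mu$. A standard fact in regeneration theory is that the expected value of a functional summed over one regeneration block, started from $\mu$, can be written in terms of $\pi_U$ and the minorization function $h$; concretely, for any nonnegative $G$,
\begin{equation}\label{eqn:block-formula}
E_\mu \left( \sum_{t=0}^{\tau_1 - 1} G(U_t) \right) = \frac{1}{\pi_U(h)} \, \pi_U(G).
\end{equation}
Taking $G \equiv 1$ already gives $E\tau_1 = 1/\pi_U(h) < \infty$, but to get the $q'$-th moment for $q' < q+1$ I would instead bound $\tau_1$ by a hitting time of a drift-function sublevel set and use the drift characterization of polynomial ergodicity. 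Specifically, by the converse part of the Jarner--Roberts / Fort--Moulines theory (see e.g. \cite{fort2003polynomial}), polynomial ergodicity of order $q$ of $K_U$ implies the existence of a drift condition of the form $P V \le V - c V^{1-1/(q+1)} + b \mathbf 1_{\mathcal C}$ for a petite set $\mathcal C$, with $\pi_U(V) < \infty$; equivalently one has control of $E_x[\sigma_{\mathcal C}^{q+1}]$ and of $\pi_U$ of such hitting-time moments.

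The key steps, in order: (1) invoke the equivalence between polynomial ergodicity of order $q$ and polynomial moments of return times — more precisely, that $K_U$ at least polynomially ergodic of order $q$ yields $E_{\pi_U}[\sigma_{\mathcal C}^{r}] < \infty$ (and finiteness for each starting point modulo the $\rho$ function) for every $r < q+1$, where $\mathcal C$ is an accessible small set one may take inside $\{h > 0\}$; (2) show that the regeneration time $\tau_1$ is stochastically dominated, up to a geometric number of ``failed'' excursions, by the return time to $\{h > \epsilon\}$ for suitable $\epsilon > 0$ — each visit to $\{h>\epsilon\}$ produces a regeneration with probability at least $\epsilon$, independently, so $\tau_1 \le \sum_{j=1}^{N} \xi_j$ where $N$ is geometric and the $\xi_j$ are i.i.d. copies of the return time to $\{h>\epsilon\}$; (3) combine a geometrically-distributed number of summands with $(q+1)$-minus-epsilon moments: since $N$ has all moments and the $\xi_j$ have finite $q'$-th moment for $q' < q+1$, Minkowski's inequality together with conditioning on $N$ (or a direct computation using $\|\sum_{j\le N}\xi_j\|_{q'} \le \sum_n P(N \ge n)^{?}\|\xi_1\|_{q'}$-type bounds, more carefully a Rosenthal/Marcinkiewicz bound) gives $E\tau_1^{q'} < \infty$; (4) handle the geometrically ergodic case separately and more easily — there $\tau_1$ has an exponential moment, so certainly all polynomial moments, and $q+1$ can be taken arbitrarily large, consistent with the statement. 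I would also note the initial-distribution independence of the law of $\tau_1$ (from \cite{mykland1995regeneration}) so that it suffices to bound moments under the convenient starting law.

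The main obstacle I anticipate is step (2)–(3): making rigorous the reduction from $\tau_1$ (a regeneration time for the split chain on $\mathcal X_U$, governed by the randomized minorization with a general function $h$) to a clean hitting time of a fixed small set with a geometric number of retries, and then controlling the moment of a random (geometric) sum. The subtlety is that the excursions $\xi_j$ between successive visits to $\{h > \epsilon\}$ are i.i.d. only after the first one and are not independent of the Bernoulli regeneration flags in a completely trivial way, so one must set up the coupling/splitting carefully — essentially re-deriving a piece of the Nummelin splitting bookkeeping. An alternative, possibly cleaner route that I would fall back on is to work directly with the block formula \eqref{eqn:block-formula} applied to $G = V^{1 - 1/(q+1)}$ coming from the polynomial drift function, combined with a within-block drift argument (à la \cite{banerjee2022multivariate} or the moment bounds in \cite{jones2004markov}) to bound $E_\mu[\tau_1^{q'}]$ by $\pi_U$ of a drift function, which is finite; this avoids the geometric-sum gymnastics entirely. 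Either way, the arithmetic that $q' < q+1$ is exactly the exponent delivered by polynomial ergodicity of order $q$ is the crux, and I would make sure the drift-function exponent bookkeeping lines up to yield the open interval $[0, q+1)$ rather than just $[0,q]$.
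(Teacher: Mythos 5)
The central step of your plan does not go through as cited. Both of your routes (the geometric-sum reduction and the fallback block-formula argument) hinge on the assertion that polynomial ergodicity of order $q$, in the sense used in this paper ($\| K_U^t(u,\cdot) - \pi_U(\cdot)\|_{\text{TV}} \leq \rho_U(u) t^{-q}$ with $\pi_U(\rho_U) < \infty$), can be converted into a polynomial drift condition $PV \le V - cV^{1-1/(q+1)} + b\mathbf{1}_{\mathcal C}$, or equivalently into return-time moments $E_{\pi_U}[\sigma_{\mathcal C}^{r}]<\infty$ for all $r<q+1$. The results of \cite{fort2003polynomial} and the Jarner--Roberts theory go in the forward direction only (drift $\Rightarrow$ rate); there is no clean converse of the form you invoke in those references, and establishing one is essentially the whole content of the lemma. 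In other words, the ``rate $\Rightarrow$ return-time moment'' bridge is exactly the missing idea, and asserting it by appeal to a converse theorem that is not available leaves a genuine gap. (Your secondary step, dominating $\tau_1$ by a geometric number of excursions to $\{h>\epsilon\}$, has the bookkeeping issues you yourself flag --- the excursions start from the residual kernel and are not i.i.d., so one needs uniform-in-starting-point moment bounds rather than $E_{\pi_U}$-moments --- but this is repairable; the converse step is not, as written.)

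For comparison, the paper supplies that bridge by a different and shorter device: it reduces $\tau_1$ to the return time $S$ to the atom $\mathcal{X}_U\times\{1\}$ of the split chain via the strong Markov property, notes by a de-initialization argument \citep{roberts2001markov} that the split chain inherits polynomial ergodicity of order $q$, hence (started in stationarity) is $\alpha$-mixing with $\alpha(n) = O(n^{-q})$ \citep{jones2004markov}, and then applies \pcite{bolthausen1982berry} Lemma 3, which converts the summability $\sum_n n^{q^*-2}\alpha(n)<\infty$ (valid for any $q^* < q+1$) directly into $E\,S^{q^*}<\infty$. That mixing-based lemma is precisely the rate-to-moment converse your argument needs; if you want to salvage your drift-based plan, you would either have to prove such a converse yourself or route the argument through the stationary split chain and a Bolthausen-type bound as the paper does.
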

Lemma \ref{lem:tau} is proved using \pcite{bolthausen1982berry} Lemma 3. 
The details are provided in Appendix \ref{proof:tau}.

For the remainder of this subsection, assume that $K_U$ is Harris ergodic, which implies that the transition kernel of the split chain is also Harris ergodic.
For  $i\in\mathbb{N}_+$, let 
\begin{equation}\label{eqn:Delta}
\Delta_i = (U_{T_{i}},...,U_{T_{i+1}-1}, \tau_i)
\end{equation}
By the regenerative structure of the split chain, the $\Delta_i$s are i.i.d. random elements and the distribution of $\Delta_1$ does not depend on the initial distribution of the split chain~\citep[see, e.g.,][Section 5.3]{nummelin2004general}.

Recall that $f: \mathcal{X} \to \mathbb{R}^d$ is a measurable function, and $\theta = \pi(f)$.
For $i\in\mathbb{N}_+$, let 
$$Y_i = \sum_{t=kT_{i}+1}^{kT_{i+1}} f(X_t) = \sum_{t=T_i}^{T_{i+1}-1} \sum_{j=1}^k f(g_j(U_t, U_{t+1})).$$
Then $Y_i$ is a function of $\Delta_i$ and $\Delta_{i+1}$.
Define $\Xi_Y = E Y_1$ and $\Xi_{\tau} =  E \tau_1$. 
By Kac's theorem~\citep[Theorem 10.2.2]{meyn2005markov}, $\Xi_{\tau} = 1/(\pi_U(h))<\infty$.  Using an argument similar to the discussions in \pcite{hobert2002applicability} Section 2, we have Lemma \ref{lem:EY} below. The details of the proof is provided in Appendix \ref{proof:EY}.

\begin{lemma}\label{lem:EY}
	If $\pi(\|f\|_1)<\infty$, then $\Xi_Y = k\Xi_{\tau}\theta<\infty$.
\end{lemma}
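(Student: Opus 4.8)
The plan is to derive $\Xi_Y=k\Xi_\tau\theta$, and its finiteness, from the standard renewal--reward (regenerative) identity, in the spirit of \pcite{hobert2002applicability} Section~2. Working on the homogeneous chain $\{U_t\}_{t=0}^\infty$ with regeneration times $0=T_0<T_1<\cdots$, attach to the $t$th step the ``edge reward'' $r(U_t,U_{t+1}):=\sum_{j=1}^k f(g_j(U_t,U_{t+1}))$, which by Condition~\ref{con:min}-(i) equals $\sum_{j=1}^k f(X_{tk+j})$, so that $Y_i=\sum_{t=T_i}^{T_{i+1}-1} r(U_t,U_{t+1})$. Since $Y_1$ is $\mathbb{R}^d$-valued it is enough to work coordinatewise; fix $a\in\{1,\dots,d\}$ and let $r_a,f_a$ be $a$th coordinates.

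The first ingredient is the per-cycle identity, valid for any $\varphi$ with $\pi_V(|\varphi|)<\infty$,
\begin{equation}\label{eqn:cyc}
E\!\left[\sum_{t=T_1}^{T_2-1}\varphi(U_t,U_{t+1})\right]=\Xi_\tau\,\pi_V(\varphi),
\end{equation}
where $\pi_V$ denotes the law of $(U_0,U_1)$ when $U_0\sim\pi_U$, and $\Xi_\tau=1/\pi_U(h)\in(0,\infty)$ by Kac's theorem, as already noted. To see \eqref{eqn:cyc}, note that the $i$th cycle reward $\sum_{t=T_i}^{T_{i+1}-1}\varphi(U_t,U_{t+1})$ is a deterministic function of $\Delta_i$ except for its final term $\varphi(U_{T_{i+1}-1},U_{T_{i+1}})$, in which $U_{T_{i+1}}\sim\mu$ independently of $\Delta_i$; hence its conditional expectation given $\Delta_i$ is a function of $\Delta_i$ alone, and since the $\Delta_i$ are i.i.d.\ with law independent of the initialization, combining the SLLN for the cycle rewards, the ergodic theorem for $\{(U_t,U_{t+1})\}_{t\ge0}$ (available because $\pi_V(|\varphi|)<\infty$), and $T_M/M\to\Xi_\tau$ gives \eqref{eqn:cyc} --- this is the argument of \cite{hobert2002applicability}, adapted from a state reward to an edge reward.

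The second ingredient is $\pi_V(r_a)=k\theta_a$ with $\pi_V(|r_a|)\le k\pi(|f_a|)<\infty$. For this, initialize the $X$-chain at $\pi$: then $X_t\sim\pi$ for every $t$, the block chain is stationary (a short check using $\pi K_i=\pi$), and hence $\{U_t\}$ --- a measurable function of finitely many consecutive block states --- is a stationary process, so its marginal is an invariant distribution of $K_U$ and therefore equals $\pi_U$ by Harris ergodicity. Consequently $(U_0,U_1)\sim\pi_V$ while $g_j(U_0,U_1)=X_j\sim\pi$, so each $g_j$ pushes $\pi_V$ forward to $\pi$, whence $\pi_V(r_a)=\sum_{j=1}^k\pi(f_a)=k\theta_a$ and $\pi_V(|r_a|)\le\sum_{j=1}^k\pi(|f_a|)=k\pi(|f_a|)<\infty$. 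Now \eqref{eqn:cyc} with $\varphi=|r_a|$, summed over $a$, gives $E\|Y_1\|_1\le k\Xi_\tau\pi(\|f\|_1)<\infty$; \eqref{eqn:cyc} with $\varphi=r_a$ gives the $a$th coordinate of $EY_1$ equal to $k\Xi_\tau\theta_a$; and, $a$ being arbitrary, $\Xi_Y=EY_1=k\Xi_\tau\theta$, which is finite since $\|\theta\|_1\le\pi(\|f\|_1)<\infty$.

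The main obstacle is less one of depth than of care: keeping the time-inhomogeneous chain $\{X_t\}$, the homogeneous chain $\{U_t\}$, and the edge reward straight, so that the $X$-index range $kT_1+1,\dots,kT_2$ defining $Y_1$ matches exactly one regeneration cycle of $\{U_t\}$, and so that the push-forward of $\pi_V$ under each $g_j$ is correctly identified as $\pi$. Modulo this bookkeeping, the statement is a routine instance of the regenerative identity of \cite{hobert2002applicability}.
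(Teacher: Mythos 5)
Your proposal is correct and follows essentially the same route as the paper's own proof, which likewise adapts the regenerative ratio-limit argument of \cite{hobert2002applicability}: both identify $EY_1$ by equating two almost-sure limits of averages over regeneration cycles, using $T_M/M\to\Xi_{\tau}$ for the i.i.d.\ $\tau_i$, a Markov-chain SLLN for the time averages of the reward, and the SLLN (via an odd/even split) for the $1$-dependent cycle sums. The only differences are bookkeeping: you route the ergodic-average step through the bivariate chain $\{(U_t,U_{t+1})\}$ together with the push-forward identification $\pi_V\circ g_j^{-1}=\pi$, and you first apply the cycle identity to $|r_a|$ to secure integrability, whereas the paper applies its Theorem \ref{thm:slln} (through Lemma \ref{lem:ergodic}) directly to $\{X_t\}$ so the limit $\theta$ appears without a separate push-forward step.
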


Assume that $\pi(\|f\|_1)<\infty$.
For $i\in\mathbb{N}_+$,
define $$\tilde{Y}_i = Y_i - {\tau_{i}\Xi_Y}/{\Xi_{\tau}} = Y_i - k \theta \tau_i.$$
Then there exists an $\mathbb{R}^d$-valued measurable function $\tilde{g}$ such that 
 \begin{equation}\label{eqn:Y}
\tilde{Y}_i = \tilde{g}(\Delta_{i}, \Delta_{i+1}) := 
\sum_{t=T_i}^{T_{i+1}-1} \left\{\sum_{j=1}^k f(g_j(U_t, U_{t+1}))\right\} - k\theta \tau_i.
\end{equation}
Clearly, sequence $\{\tilde{Y}_i\}_{i=1}^{\infty}$  is a mean-zero stationary 1-dependent sequence.
The following two lemmas list some
important properties of $\{\tilde{Y}_t\}_{t=1}^{\infty}$.

\begin{lemma}\label{lem:4}
	Suppose that $K_U$ is
	at least polynomially ergodic of order $q> 1+\gamma + (2+\gamma)^2/\gamma^*$, and $\pi( \|f\|^{2+\gamma + \gamma^*}) <\infty$ for some $\gamma\geq 0$ and $\gamma^*>0$. Then 
	\[
	E \|Y_1\|^{2+\gamma} \leq E \left( \sum_{t = kT_1 + 1}^{kT_2} \|f(X_t)\| \right)^{2+\gamma} < \infty, \quad E\left\|\tilde{Y}_1\right\|^{2+\gamma}<\infty.
	\]
\end{lemma}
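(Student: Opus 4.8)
The plan is to reduce everything to a moment bound on the regeneration-cycle sums of $\|f\|$, and then control those via the polynomial ergodicity of $K_U$ together with the moment bound on $\tau_1$ from Lemma~\ref{lem:tau}. The first inequality is immediate from the triangle inequality applied inside the $\|\cdot\|^{2+\gamma}$, since $\|Y_1\| = \|\sum_{t=kT_1+1}^{kT_2} f(X_t)\| \leq \sum_{t=kT_1+1}^{kT_2}\|f(X_t)\|$ and $x \mapsto x^{2+\gamma}$ is increasing on $[0,\infty)$. The second inequality, $E\|\tilde Y_1\|^{2+\gamma} < \infty$, then follows from the first via Minkowski: $\tilde Y_1 = Y_1 - k\theta\tau_1$, so $\|\tilde Y_1\|^{2+\gamma} \leq 2^{1+\gamma}(\|Y_1\|^{2+\gamma} + k^{2+\gamma}\|\theta\|^{2+\gamma}\tau_1^{2+\gamma})$, and $E\tau_1^{2+\gamma} < \infty$ holds by Lemma~\ref{lem:tau} since $2+\gamma < q+1$ (which is guaranteed by $q > 1+\gamma+(2+\gamma)^2/\gamma^* > 1 + \gamma$, so $q + 1 > 2 + \gamma$). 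So the whole lemma rests on showing
\[
E\left(\sum_{t=kT_1+1}^{kT_2}\|f(X_t)\|\right)^{2+\gamma} < \infty.
\]

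For this I would follow the template of \cite{hobert2002applicability} (referenced already for Lemma~\ref{lem:EY}) and \cite{jones2004markov}. Write the cycle sum as $S = \sum_{t=T_1}^{T_2-1} F(U_t, U_{t+1})$ where $F(u,v) = \sum_{j=1}^k \|f(g_j(u,v))\|$; note $\pi_U$-a.e. finiteness and an integrability statement for $F$ follow from $\pi(\|f\|) < \infty$ together with the structural identities $X_{tk+i} = g_i(U_t,U_{t+1})$ as in the proof of Lemma~\ref{lem:EY}. The standard device is to split the $(2+\gamma)$-th power: since within a regeneration block $U_{T_1} \sim \mu$ and the block has random length $\tau_1$, one applies Hölder's inequality in the form
\[
E\, S^{2+\gamma} = E\left(\sum_{t=T_1}^{T_2-1} F(U_t,U_{t+1})\right)^{2+\gamma} \leq E\left[\tau_1^{1+\gamma}\sum_{t=T_1}^{T_2-1} F(U_t,U_{t+1})^{2+\gamma}\right],
\]
using $(\sum_{i=1}^m a_i)^{p} \leq m^{p-1}\sum_{i=1}^m a_i^p$ with $m = \tau_1$, $p = 2+\gamma$. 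The remaining expectation is handled by the split-chain / regeneration identity that expresses $E\sum_{t=T_1}^{T_2-1}(\cdot)$ in terms of $\pi_U$ and the return time; combined with a further Hölder split between $\tau_1^{1+\gamma}$ and $F^{2+\gamma}$ (or more carefully, a Hölder step with exponents tuned so the $\tau_1$ moment stays below $q+1$ and the $F$ moment stays at $2+\gamma+\gamma^*$), one reduces to two finite quantities: a moment of $\tau_1$ controlled by Lemma~\ref{lem:tau}, and $\pi_U(F^{2+\gamma+\gamma^*})$ or equivalently $\pi(\|f\|^{2+\gamma+\gamma^*}) < \infty$, which is assumed. The arithmetic constraint $q > 1 + \gamma + (2+\gamma)^2/\gamma^*$ is precisely what makes the Hölder exponent bookkeeping close: it ensures the required $\tau_1$-moment exponent is $< q+1$ when the $F$-moment exponent is $2+\gamma+\gamma^*$.

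The main obstacle is the bookkeeping in the Hölder step tying together the block length $\tau_1$ and the summands $F(U_t,U_{t+1})^{2+\gamma}$ — specifically, choosing the conjugate exponents so that (a) the resulting moment of $\tau_1$ is strictly below the threshold $q+1$ from Lemma~\ref{lem:tau}, and (b) the resulting moment of $F$ is at most $2+\gamma+\gamma^*$ so that $\pi(\|f\|^{2+\gamma+\gamma^*}) < \infty$ can be invoked, and then verifying that the two constraints are simultaneously satisfiable exactly when $q > 1+\gamma+(2+\gamma)^2/\gamma^*$. A secondary technical point is justifying the interchange of the expectation with the random sum over a regeneration cycle, which should be handled by the standard non-negativity / monotone convergence argument combined with the i.i.d.\ structure of the $\Delta_i$'s and the already-established finiteness of $E\tau_1$; since $\tilde Y_1$ is a function of $(\Delta_1, \Delta_2)$, one also needs that the contributions from both adjacent blocks are controlled, but $\Delta_2$ has the same law as $\Delta_1$ so no new estimate is needed. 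I would defer the full $\epsilon$-level exponent computation to the appendix and here just record the strategy and the role played by the hypotheses.
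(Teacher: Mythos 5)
Your outline is correct, and it takes a genuinely different route from the paper's on the key estimate. The paper works in the $L_{2+\gamma}$ norm under the atom-started split-chain measure: it bounds the cycle sum by $\sum_{t\geq k+1}\mathbb{I}(t\leq kT_2)\|f(X_t)\|$, applies Minkowski's inequality over this infinite series, then a per-term H\"older split between the indicator and $\|f(X_t)\|^{2+\gamma}$, and closes with a Markov tail bound $P(kT_2\geq t)\leq (t/k)^{-q'}E\,T_2^{q'}$ for some $q'\in\bigl((2+\gamma)\{1+(2+\gamma)/\gamma^*\},\,q+1\bigr)$, the per-term moment being controlled by $E\|f(X_t)\|^{2+\gamma+\gamma^*}\leq \pi(\|f\|^{2+\gamma+\gamma^*})/\pi_U(h)$ via the minorization $\pi_U(\cdot)\geq\pi_U(h)\mu(\cdot)$ and the identification of the law of $g_i(U_0,U_1)$ under $U_0\sim\pi_U$ with $\pi$. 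You instead pull the block length out pathwise, $S^{2+\gamma}\leq\tau_1^{1+\gamma}\sum_{t=T_1}^{T_2-1}F(U_t,U_{t+1})^{2+\gamma}$, and separate $\tau_1$ from $F$ by H\"older plus a Kac-type block identity; this does close: applying the same power-mean step once more with exponent $(2+\gamma+\gamma^*)/(2+\gamma)$ and then H\"older with conjugate exponents $(2+\gamma+\gamma^*)/\gamma^*$ and $(2+\gamma+\gamma^*)/(2+\gamma)$ leaves you needing $E\tau_1^{r}<\infty$ with $r=2+\gamma+(1+\gamma)(2+\gamma)/\gamma^*$, which is strictly below $q+1$ under the stated hypothesis (with slack, so on this route the condition is sufficient rather than exactly sharp), together with $E\sum_{t=T_1}^{T_2-1}F(U_t,U_{t+1})^{2+\gamma+\gamma^*}<\infty$. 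For that last block expectation you still need the same care the paper exercises: since the summand is a function of the pair $(U_t,U_{t+1})$, one must condition on $(U_t,\delta_t)$ (so the event $\{T_1\leq t<T_2\}$ is measurable and the summand becomes a split-chain functional) before invoking Kac or the minorization bound, and one must use the fact, proved in the paper through Harris ergodicity of the subchains, that $g_j(U_0,U_1)\sim\pi$ under stationarity, so the block quantity reduces to a multiple of $\pi(\|f\|^{2+\gamma+\gamma^*})/\pi_U(h)$. Your treatment of the easy parts (triangle inequality for the first bound; $\tilde{Y}_1=Y_1-k\theta\tau_1$ with Lemma~\ref{lem:tau} for the second) matches the paper, and your closing remark about controlling $\Delta_2$ separately is unnecessary, since the only term involving the second block, namely the pair $(U_{T_2-1},U_{T_2})$, is already inside the cycle sum. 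What the paper's route buys is that it never needs a Kac-type identity for pair functionals, reducing everything to one-dimensional tail bounds on $T_2$; what your route buys is a more compact exponent computation and no infinite-series Minkowski step.
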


The proof of Lemma \ref{lem:4} is given in Appendix \ref{proof:lem4}. We use some  arguments similar to the proofs of \pcite{banerjee2022multivariate} Lemma 3 and Lemma 6.

\begin{lemma}\label{lem:vfvar}
	Assume that the conditions in Theorem \ref{thm:sip} hold. Then $$\Sigma_Y :=\lim_{n\to\infty} n^{-1} \mathrm{Var} \left(\sum_{t=1}^n \tilde{Y}_t \right),$$ where $\mbox{Var}(\cdot)$ returns the variance of a random vector, is positive definite.
\end{lemma}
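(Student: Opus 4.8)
The plan is to relate $\Sigma_Y$ to the asymptotic covariance $\Sigma$ of the original chain, which is assumed positive definite, and thereby transfer positive definiteness. First I would recall that $\{\tilde Y_t\}_{t\ge 1}$ is a mean-zero stationary $1$-dependent sequence with $E\|\tilde Y_1\|^{2+\gamma}<\infty$ by Lemma~\ref{lem:4}, so the limit defining $\Sigma_Y$ exists and equals $\mathrm{Var}(\tilde Y_1) + \mathrm{Cov}(\tilde Y_1,\tilde Y_2) + \mathrm{Cov}(\tilde Y_2,\tilde Y_1)$. The key identity to establish is
\[
\Sigma_Y = k\,\Xi_\tau\,\Sigma = \frac{k}{\pi_U(h)}\,\Sigma,
\]
which, combined with $\Xi_\tau = 1/\pi_U(h) \in (0,\infty)$ (Kac) and the assumed positive definiteness of $\Sigma$, immediately yields that $\Sigma_Y$ is positive definite.

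To prove this identity I would argue via the regenerative decomposition of the partial sums. Set $S_n = \sum_{t=1}^{n}(f(X_t)-\theta)$. Along the subsequence of regeneration times, $\sum_{i=1}^{m}\tilde Y_i = S_{kT_{m+1}} - S_{kT_1}$, where the centering $k\theta\tau_i$ in $\tilde Y_i$ is exactly what makes $\sum_{i=1}^m \tilde Y_i$ equal to a centered partial sum over the block from regeneration $1$ to regeneration $m+1$ (using $\Xi_Y = k\Xi_\tau\theta$ from Lemma~\ref{lem:EY}). Since $m^{-1}T_m \to \Xi_\tau$ a.s. by the SLLN for the i.i.d. $\tau_i$, the number of iterations of the original chain covered by $m$ blocks is $\approx km\Xi_\tau$. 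Now invoke the CLT already available: by Theorem~\ref{thm:clt} (whose hypotheses are implied by those of Theorem~\ref{thm:sip}), $S_n/\sqrt n \xrightarrow{d}\mathcal N(0,\Sigma)$, and by the CLT for the $1$-dependent array, $m^{-1/2}\sum_{i=1}^m \tilde Y_i \xrightarrow{d}\mathcal N(0,\Sigma_Y)$. Equating the two descriptions of the fluctuations of the same (approximately block-aligned) partial sum — i.e. $\sqrt{m}$ versus $\sqrt{km\Xi_\tau}$ normalization — forces $\Sigma_Y = k\Xi_\tau\Sigma$. Making this rigorous is standard: one controls the "overhang" (the partial block between $kT_m$ and a general $n$, and the initial segment $S_{kT_1}$) using $E\tau_1^{2+\gamma}<\infty$ and $E\|Y_1\|^{2+\gamma}<\infty$ from Lemmas~\ref{lem:tau} and~\ref{lem:4} together with a maximal inequality, showing these terms are $o(\sqrt n)$ in probability; then a Cramér--Wold plus the regenerative CLT argument (as in \cite{hobert2002applicability} or \cite{bertail2006regenerative}) pins down the variance. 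Alternatively, one can compute $\Sigma_Y = \sum_{l\in\mathbb Z}\sum_{j=0}^{k-1}E_\pi\{\cdots\}$ directly by expanding $\tilde Y_i$ as a sum of $k\tau_i$ centered terms $f(X_t)-\theta$ and using the regenerative representation of $\pi$-stationary expectations, matching the series in \eqref{eqn:varsigma} up to the factor $k\Xi_\tau$.

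The main obstacle I anticipate is the bookkeeping in the second approach — justifying the interchange of summation and expectation when expanding $E[(\sum_{i}\tilde Y_i)(\sum_i\tilde Y_i)^\top]$ over regeneration blocks of random length, and correctly accounting for the $1$-dependence (cross terms $\mathrm{Cov}(\tilde Y_i,\tilde Y_{i+1})$), so that the resulting double series telescopes into exactly $k\Xi_\tau$ times the autocovariance series of Theorem~\ref{thm:clt}. The moment bounds in Lemma~\ref{lem:4} (with $\gamma>0$) give the absolute convergence needed to license Fubini, so this is a matter of careful combinatorics rather than a genuinely hard estimate; the cleaner route is probably the CLT-matching argument, where the only real work is the $o(\sqrt n)$ control of the boundary blocks.
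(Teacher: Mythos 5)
Your proposal is correct in substance but takes a genuinely different route from the paper. You prove positive definiteness by first establishing the identity $\Sigma_Y = k\Xi_\tau\Sigma$ (via a CLT-matching argument for the regeneration blocks), whereas the paper argues by contradiction: if $a^\top\Sigma_Y a=0$ for some $a\neq 0$, then by stationarity and $1$-dependence $\mathrm{Var}(\sum_{i=1}^m a^\top\tilde Y_i)$ stays bounded in $m$, so Chebyshev gives $n^{-1/2}\sum_{i=1}^{\lceil n/\Xi_\tau\rceil}a^\top\tilde Y_i\xrightarrow{P}0$; a Kolmogorov-maximal-inequality argument (splitting the $1$-dependent sequence into i.i.d.\ even- and odd-indexed parts, following Section 17.2.2 of \cite{meyn2005markov}) replaces the deterministic index by $\xi(n)-1$, and then Lemma~\ref{lem:error} converts this into $n^{-1/2}\sum_{t\le kn+r}(a^\top f(X_t)-a^\top\theta)\xrightarrow{P}0$, contradicting the nondegenerate CLT of Theorem~\ref{thm:clt} in direction $a$. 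The ingredients are the same (Lemma~\ref{lem:error} for the boundary blocks, Kolmogorov's inequality for the random-index substitution, Theorem~\ref{thm:clt}), but the paper's contradiction route never needs a CLT for the $\tilde Y_i$'s — Chebyshev suffices under the singularity hypothesis — while your route needs the $1$-dependent CLT plus an Anscombe-type random-index version of it. Your approach buys the explicit identity $\Sigma_Y=k\Xi_\tau\Sigma$ at this stage (the paper only extracts it at the end of the proof of Theorem~\ref{thm:sip}, after the SIP is in hand), and it is not circular since it uses only Theorem~\ref{thm:clt} and Lemma~\ref{lem:error}. One caution in executing it: carry out the time change with $n$ deterministic and the block count $\xi(n)-1$ random (applying Anscombe to the $1$-dependent sums, where the even/odd splitting gives the needed maximal inequality), rather than applying the chain CLT at the random time $kT_{m+1}$ — the paper provides no maximal inequality for partial sums of the chain itself, so that direction would leave a gap; your "direct computation of the autocovariance series" alternative is likewise the more delicate and dispensable of your two sketches.
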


We prove Lemma \ref{lem:vfvar} by contradiction. See Appendix \ref{proof:vfvar} for details. Part of the proof uses an argument from Section 17.2.2 of \cite{meyn2005markov}.
The proof relies on Lemma~\ref{lem:error} in Section~\ref{sssec:sip-proof}, which is of course developed independently of Lemma~\ref{lem:vfvar}.

\subsubsection{Establishing the SIP} \label{sssec:sip-proof}

To prove the SIP in Theorem \ref{thm:sip}, we make use of \pcite{liu2009strong} Theorem 2.1.
The original result is somewhat complicated, so we will state only a simplified and less general version of it.

\begin{lemma} \citep[][Theorem 2.1, simple version]{liu2009strong}\label{lem:liu}
	Let $2 < \varrho < 4$.
	Let $\varepsilon_1, \varepsilon_2, \dots$ be i.i.d. random elements and suppose that, for $n \in \mathbb{N}_+$, $\tilde{W}_n = g(\varepsilon_n, \varepsilon_{n+1})$, where $g$ is some $\mathbb{R}^d$-valued measurable function.
	Suppose further that $E \tilde{W}_1 = 0$ and $E \|\tilde{W}_1\|^{\varrho'} < \infty$ for some $\varrho' > \varrho$.
	Finally, assume that 
	$\Sigma_W  = \lim_{n\to\infty} n^{-1} \mathrm{Var}(\sum_{t=1}^n \tilde{W}_t)$
	is positive definite.
	Then on a suitable probability space, one can redefine the sequence  $\{\tilde{W}_i\}_{i=1}^\infty$ together with  $\{C(t)\}_{t=1}^\infty$, where 
	$\{C(t)\}_{t=1}^\infty$ are i.i.d. $d$ dimensional standard normal random vectors, such that 
	such that  with probability 1, for $n$ large enough, 
	\begin{equation}\nonumber
		\left\|\sum_{t=1}^n \tilde{W}_t - \Gamma_{W}B(n) \right\| = o(n^{1/\varrho}),
	\end{equation}
	where $B(n)=\sum_{t=1}^n C(t)$, and $\Gamma_W$ is some $d \times d$ constant matrix satisfying $\Gamma_W \Gamma_W ^\top = \Sigma_W$. 
\end{lemma}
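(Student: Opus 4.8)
The plan is to obtain Lemma~\ref{lem:liu} as a special case of the general Theorem~2.1 of \cite{liu2009strong}, which (in the generality relevant here) establishes an SIP with rate $o(n^{1/\varrho})$, possibly up to a polylogarithmic factor, for a stationary \emph{causal} process $X_n = H(\dots,\varepsilon_{n-1},\varepsilon_n)$ driven by i.i.d.\ innovations, under three hypotheses: (a) a moment condition $\|X_1\|_{\varrho'}<\infty$; (b) a polynomial-rate decay condition on the physical/predictive dependence measures $\delta_{n,\varrho'}$; and (c) non-degeneracy of the long-run covariance. Our task reduces to recasting the $1$-dependent sequence $\{\tilde W_n\}$ in this framework and checking (a)--(c).

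First I would resolve the indexing mismatch. The sequence $\tilde W_n = g(\varepsilon_n,\varepsilon_{n+1})$ looks ``one step into the future'', whereas the cited theorem wants a causal representation. Set $W'_n = g(\varepsilon_{n-1},\varepsilon_n)$, a stationary causal process in $(\dots,\varepsilon_{n-1},\varepsilon_n)$. Since $\sum_{t=1}^n \tilde W_t = \sum_{t=1}^n W'_t + (\tilde W_n - \tilde W_0)$, and the moment bound $E\|\tilde W_1\|^{\varrho'}<\infty$ forces $\|\tilde W_n\| = o(n^{1/\varrho'}) = o(n^{1/\varrho})$ almost surely (Borel--Cantelli, using stationarity), the two partial-sum processes agree up to an $o(n^{1/\varrho})$ error, so it suffices to prove the SIP for $\{W'_n\}$.

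Next I would verify the three hypotheses for $\{W'_n\}$. The moment condition (a) is immediate: $\|W'_1\|_{\varrho'} = \|\tilde W_1\|_{\varrho'} < \infty$, with $\varrho' > \varrho \in (2,4)$ lying in the admissible range. The dependence condition (b) holds vacuously: replacing $\varepsilon_0$ by an independent copy leaves $W'_n$ unchanged for every $n\ge 2$ (it does not involve $\varepsilon_0$), so the physical dependence measure $\delta_{n,\varrho'}(W')$ equals $0$ for $n\ge 2$ and at most $2\|W'_1\|_{\varrho'}$ for $n=1$; hence $\sum_n \delta_{n,\varrho'}(W') < \infty$ and any polynomial-decay requirement of the general theorem is trivially met. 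For (c), a stationary $1$-dependent sequence satisfies $n^{-1}\mathrm{Var}(\sum_{t=1}^n \tilde W_t) \to \Gamma(0)+\Gamma(1)+\Gamma(1)^\top =: \Sigma_W$, which is assumed positive definite; by stationarity this is the long-run covariance featuring in the general theorem, and the same limit holds for $\{W'_n\}$.

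Applying Theorem~2.1 of \cite{liu2009strong} to $\{W'_n\}$ on an enriched probability space then produces i.i.d.\ standard normal $\{C(t)\}$ with $\|\sum_{t=1}^n W'_t - \Gamma_W B(n)\| = o(n^{1/\varrho})$ almost surely, where $\Gamma_W\Gamma_W^\top = \Sigma_W$; transferring back through the partial-sum identity gives the conclusion for $\{\tilde W_n\}$. The only genuinely delicate point is bookkeeping: confirming that the precise form of the general theorem's conclusion (any $(\log n)^a$ factor, and the interplay between the moment exponent $\varrho'$, the rate exponent $\varrho$, and the required dependence-decay rate) is compatible with the clean statement $o(n^{1/\varrho})$. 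This is handled by the strict slack $\varrho' > \varrho$: if necessary one applies the general theorem with an intermediate exponent $\varrho < \varrho'' < \varrho'$, so that polylogarithmic factors at level $\varrho''$ are absorbed into $n^{1/\varrho}$.
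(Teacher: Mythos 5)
Your proposal is correct and matches the paper's treatment: the paper offers no proof of this lemma beyond citing it as a simplified special case of Theorem~2.1 of \cite{liu2009strong}, which is exactly the route you take. Your verification of the hypotheses (the causal re-indexing with an $o(n^{1/\varrho'})$ boundary term, the trivially vanishing physical dependence measure for a two-factor functional of i.i.d.\ innovations, and the $1$-dependent long-run covariance $\Gamma(0)+\Gamma(1)+\Gamma(1)^\top=\Sigma_W$) is sound and simply makes explicit what the paper leaves implicit.
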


For $n \in \mathbb{N}_+$, let 
$$
\xi(n) = \max\{ i: T_i \leq n\}.
$$
This is the number of regenerations up to time~$n$.
We shall approximate $\sum_{t=1}^n (f(X_t) - \theta)$ by 
\[
\sum_{i=1}^{\xi(n)-1} \tilde{Y}_i = \sum_{t = kT_1+1}^{kT_{\xi(n)}} f(X_t) - k \theta (T_{\xi(n)}-T_1) ,
\]
and apply Lemma~\ref{lem:liu} to $\sum_{i=1}^{\xi(n)-1} \tilde{Y}_i$.
The following lemma will be used to handle the errors of this approximation.

\begin{lemma} \label{lem:error}
	Assume that Condition \ref{con:min} holds. 
	Suppose that $K_U$ is at least polynomially ergodic of order 
	$q>1+\gamma + (2+\gamma)^2/\gamma^*$, 
	and $ \pi (\|f\|^{2+\gamma + \gamma^*} ) <\infty$, for some $\gamma\geq 0$ and $\gamma^*>0$. 
	Then each of the following holds, regardless of the initial distribution of $\{X_t\}_{t=0}^{\infty}$.
	\begin{enumerate}
		\item [(i)] With probability 1,
		\[
		\lim_{n\to\infty} n^{-1/(2+\gamma)} \left\| \sum_{t=1}^{kT_{1}}  f(X_t) \right\| = 0, \text{ and } \lim_{n\to\infty} n^{-1/(2+\gamma)}T_1  = 0.
		\]
		\item [(ii)] For $r \in \{1,\dots,k\}$, with probability 1, 
		\[
		\left\| \sum_{t = kT_{\xi(n)} + 1}^{kn + r} f(X_t) \right\| = O(n^{1/(2+\gamma)}), \text{ as } n\to\infty. 
		\]
		\item [(iii)] With probability 1,
		\[
		 |n - T_{\xi(n)}| = O(n^{1/(2+\gamma)}), \text{ as } n\to\infty.
		\]
		\item [(iv)] For $q' \in [1,2)$, with probability 1,
		\[
		\lim_{n\to\infty} {n}^{-1/q'}(n - \xi(n)\Xi_{\tau}) = 0.
		\]
	\end{enumerate}
\end{lemma}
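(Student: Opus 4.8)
The plan is to reduce each of the four claims to an elementary almost-sure growth-rate estimate for an i.i.d.\ (or merely identically distributed) sequence, obtained either from a moment bound via the first Borel--Cantelli lemma or from a Marcinkiewicz--Zygmund strong law, and then to transfer the estimate along the random index $\xi(n)$. Two structural facts make the transfer routine: since $\tau_j\geq 1$ for all $j$, we have $T_i\geq i$ and hence $\xi(n)\leq T_{\xi(n)}\leq n$; and $\xi(n)\to\infty$ almost surely, because $T_1<\infty$ a.s.\ (Lemma~\ref{lem:finite}, applicable since at-least-polynomial ergodicity of $K_U$ implies Harris ergodicity) and $\tau_i<\infty$ a.s.\ for every $i$. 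The device I will use repeatedly is: if $\{A_i\}_{i\geq 1}$ are identically distributed nonnegative random variables with $EA_1^p<\infty$, then $\sum_n P(A_1^p>cn)<\infty$ for every $c>0$, so Borel--Cantelli gives $A_n=o(n^{1/p})$ a.s.\ (no independence needed), and therefore $A_{\xi(n)}=o(\xi(n)^{1/p})=o(n^{1/p})$ a.s. Since the laws of $\Delta_i$ and $\tau_i$ for $i\geq 1$, as well as the a.s.\ finiteness of $T_1$, do not depend on the initial distribution of $\{X_t\}_{t=0}^\infty$, every assertion will hold for an arbitrary initial distribution.

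Part~(i) is immediate: $T_1<\infty$ a.s.\ and $f$ is $\mathbb{R}^d$-valued, so $T_1$ and $\sum_{t=1}^{kT_1}\|f(X_t)\|$ are a.s.\ finite random variables, while $n^{1/(2+\gamma)}\to\infty$. For part~(ii), since $n<T_{\xi(n)+1}$ and $r\leq k$ we have $kn+r\leq kT_{\xi(n)+1}$, hence $\|\sum_{t=kT_{\xi(n)}+1}^{kn+r}f(X_t)\|\leq S_{\xi(n)}$ for $n$ large enough (so that $\xi(n)\geq 1$), with $S_i:=\sum_{t=kT_i+1}^{kT_{i+1}}\|f(X_t)\|$; the $S_i$, $i\geq 1$, are functions of $(\Delta_i,\Delta_{i+1})$ and hence identically distributed, and $ES_1^{2+\gamma}<\infty$ by Lemma~\ref{lem:4}, so the device with $p=2+\gamma$ gives $S_{\xi(n)}=o(n^{1/(2+\gamma)})$, which is stronger than the asserted $O(n^{1/(2+\gamma)})$. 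For part~(iii), $T_{\xi(n)}\leq n<T_{\xi(n)+1}$ yields $0\leq n-T_{\xi(n)}<\tau_{\xi(n)}$; since $q>1+\gamma$ forces $2+\gamma<q+1$, Lemma~\ref{lem:tau} gives $E\tau_1^{2+\gamma}<\infty$, and the device applied to the i.i.d.\ sequence $\{\tau_i\}_{i\geq 1}$ gives $\tau_{\xi(n)}=o(n^{1/(2+\gamma)})$.

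Part~(iv) requires the only genuine computation. Write $T_i-i\Xi_\tau=(T_1-\Xi_\tau)+\sum_{j=1}^{i-1}(\tau_j-\Xi_\tau)$, recalling $\Xi_\tau=E\tau_1=1/\pi_U(h)<\infty$ by Kac's theorem. For $q'\in[1,2)$ we have $E\tau_1^{q'}<\infty$ by Lemma~\ref{lem:tau} (using $q+1>2+\gamma\geq 2$), so the Marcinkiewicz--Zygmund strong law gives $\sum_{j=1}^m(\tau_j-\Xi_\tau)=o(m^{1/q'})$ a.s.; combined with $T_1<\infty$ a.s.\ this yields $T_i-i\Xi_\tau=o(i^{1/q'})$ a.s. From the sandwich $T_{\xi(n)}-\xi(n)\Xi_\tau\leq n-\xi(n)\Xi_\tau<T_{\xi(n)+1}-\xi(n)\Xi_\tau=\big(T_{\xi(n)+1}-(\xi(n)+1)\Xi_\tau\big)+\Xi_\tau$, together with $\xi(n)\to\infty$ and $\xi(n)\leq n$, both endpoints are $o(\xi(n)^{1/q'})=o(n^{1/q'})$, hence so is $|n-\xi(n)\Xi_\tau|$.

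I do not expect a serious obstacle; the care is concentrated in part~(iv), where one must invoke the correct Marcinkiewicz--Zygmund exponent for every $q'\in[1,2)$ (legitimate precisely because Lemma~\ref{lem:tau} supplies moments up to order $q+1>2$) and perform the sandwiching cleanly, absorbing the additive constant $\Xi_\tau$ into the $o(\xi(n)^{1/q'})$ term via $\xi(n)\to\infty$. The only subtlety elsewhere is that the blocks $S_i$ in part~(ii) are $1$-dependent rather than independent, which is harmless since the Borel--Cantelli step uses only their common marginal law.
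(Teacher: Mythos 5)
Your proof is correct and follows essentially the same route as the paper's: (i) from a.s.\ finiteness of $T_1$, (ii)--(iii) from a Borel--Cantelli argument using the $(2+\gamma)$-moment bounds of Lemmas~\ref{lem:4} and~\ref{lem:tau} on the identically distributed block quantities, transferred through $\xi(n)\leq n$ and $\xi(n)\to\infty$, and (iv) from the Marcinkiewicz--Zygmund strong law applied to $\{\tau_i\}$ together with the sandwich $T_{\xi(n)}\leq n\leq T_{\xi(n)+1}$. The only (harmless) differences are cosmetic: you obtain $o(n^{1/(2+\gamma)})$ where the paper states $O(n^{1/(2+\gamma)})$, and in (iv) you pass from $\xi(n)$ to $n$ via $\xi(n)\leq n$ rather than via $n/\xi(n)\to\Xi_{\tau}$.
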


The proof of Lemma \ref{lem:error} is provided in Appendix \ref{proof:lemerror}. The proof of (ii)-(iv) in Lemma \ref{lem:error} uses some arguments from the proof of \pcite{banerjee2022multivariate} Theorem 1. 
We are now ready to prove Theorem~\ref{thm:sip}.

\begin{proof}[Proof of Theorem \ref{thm:sip}]
Fix $\rho > \max\{1/(2+\gamma), 1/4\}$.

We begin by checking the conditions in Lemma \ref{lem:liu}, taking $\varepsilon_n = \Delta_n$ and $\tilde{W}_n = \tilde{Y}_n$ for $n \in \mathbb{N}_+$.
Clearly, $E\tilde{Y}_1 = 0$. By Lemma \ref{lem:4}, $E\left\|\tilde{Y}_1\right\|^{2+\gamma}<\infty$. 
By Lemma \ref{lem:vfvar}, the matrix $\Sigma_Y$ is positive definite. 
Let $\varrho\in(2, \min\{2 + \gamma, 4\})$ be such that $1/\varrho < \rho$.
By Lemma \ref{lem:liu}, there exists a richer  probability space, on which one can redefine the sequence $\{\tilde{Y}_i\}_{i=1}^\infty$ together with $\{C(t)\}_{t=1}^\infty$, where   
$\{C(t)\}_{t=1}^\infty$ are i.i.d. $d$ dimensional standard  normal random vectors, such that
with probability 1, 
\begin{equation}\label{eqn:sippre}
	\left\| \sum_{i=1}^{m} \tilde{Y}_i - \Gamma_Y B(m) \right\| = o(m^{1/\varrho}), \text{ as } m\to\infty,
\end{equation}
where $B(m)= \sum_{t=1}^{m} C(t)$, and $\Gamma_Y$ is come $d\times d$ constant matrix satisfying $\Gamma_Y\Gamma_Y^\top = \Sigma_Y$.

Following Lemma \ref{lem:measureexist}, stated right after the proof, we may and shall, on a suitable probability space, redefine the sequences $\{f(X_t)\}_{t=1}^\infty$  and $\{\delta_t\}_{t=0}^\infty$ together with $\{B(t)\}_{t=0}^\infty$ and $\{B(t/k\Xi_{\tau})\}_{t=0}^\infty$, so that the following hold:
$\{B(t)\}_{t=0}^\infty$ and $\{B(t/k\Xi_{\tau})\}_{t=0}^\infty$ have the same law as the corresponding elements in a $d$ dimensional standard Brownian motion;
and
with probability 1,  \eqref{eqn:sippre} holds.

By Lemma~\ref{lem:error}-(iii), $\xi(n) \to \infty$ as $n \to \infty$, almost surely.
Then, by \eqref{eqn:sippre}, with probability 1,
$$
\left\| \sum_{i=1}^{\xi(n)-1} \left(\sum_{t=kT_{i}+1}^{kT_{i+1}} f(X_t)\right) - \sum_{i=1}^{\xi(n)-1} \tau_i k \theta - \Gamma_Y B(\xi(n)-1) \right\| = o\left((\xi(n)-1)^{1/\varrho}\right) = o(n^{1/\varrho}).
$$ 
That is, with probability 1, as $n\to\infty$,
\begin{equation}\label{eqn:sm}
	\left\| \sum_{t=kT_{1}+1}^{kT_{\xi(n)}} f(X_t)- k (T_{\xi(n)}-T_1) \theta - \Gamma_Y B(\xi(n)-1) \right\| =  o(n^{1/\varrho}).
\end{equation}

We shall now show, with probability 1, for $r \in \{1,\dots,k\}$, as $n \to \infty$, 
\begin{equation}\label{eqn:sm-1}
	\left\| \sum_{t=1}^{nk+r} f(X_t)- (nk+r) \theta - \Gamma_Y B(\xi(n)-1) \right\| =  o(n^{1/\varrho}).
\end{equation}
Comparing \eqref{eqn:sm} and \eqref{eqn:sm-1}, we see that it suffices to show the following four remainder terms are $o(n^{1/\varrho})$ almost surely: 
$\left\| \sum_{t=1}^{t=kT_{1}}  f(X_t) \right\|$, $T_1$, $\left\| \sum_{t=kT_{\xi(n)}+1}^{kn+r} f(X_t) \right\|$, and $n- T_{\xi(n)}$.
But this is implied by (i)-(iii) of Lemma~\ref{lem:error}.


The final stage of the proof is replacing $B(\xi(n)-1)$ in \eqref{eqn:sm-1} with the Brownian motion evaluated at a non-random time.

Let $q' \in [1,2)$ be such that $1/(2q') < \rho$.
By Lemma~\ref{lem:error}-(iv), with probability 1, for $r \in \{1,\dots,k\}$ and $n$ large enough, 
$$
|\xi(n)-1 - (nk+r)/(k\Xi_{\tau}) | \leq n^{1/q'}.
$$
By \pcite{csorgo2014strong} Theorem 1.2.1 (1.2.4), we have with probability 1, as $n\to\infty$,
$$
\sup_{\{\xi^*:|\xi^* - (nk+r)/(k\Xi_{\tau})|\leq n^{1/q'}\}}  \|B(\xi^*) - B((nk+r)/(k\Xi_{\tau})) \|  = O(\beta_n),
$$
where 
\[
\begin{aligned}
	\beta_n &= \left[2  (n^{1/q'}) \left\{\log \left(\frac{(n+1)/\Xi_{\tau}+n^{1/q'}}{n^{1/q'}}\right)+ \log\log((n+1)/\Xi_{\tau}+n^{1/q'})\right\}\right]^{1/2} \\
	&= O(n^{1/(2q')}\log n).
\end{aligned}
\]
Therefore, with probability 1,
\begin{equation}\label{eqn:bxi}
	\|B(\xi(n)-1) - B((nk+r)/(k\Xi_{\tau})) \|  =  O(n^{1/(2q')}\log n).
\end{equation}

Using the triangle inequality, by \eqref{eqn:sm-1} and \eqref{eqn:bxi}, we have with probability 1, for $r \in \{1,\dots,k\}$,
\begin{equation}\nonumber
	\begin{aligned}
		&\left\|\sum_{t=1}^{kn+r} f(X_t) - (n k+r) \theta -  \Gamma_Y B((nk+r)/(k\Xi_{\tau})) \right\| \\
		&\leq 
		\left\|\sum_{t=1}^{kn+r} f(X_t) - (n k+r) \theta-  \Gamma_Y B(\xi(n)-1)\right\|
		+ \left\|\Gamma_Y \{B(\xi(n)-1) - B((nk+r)/(k\Xi_{\tau})) \}\right\| \\
		&\leq O(n^{1/(2q')}\log n) + o(n^{1/\varrho}).
	\end{aligned}
\end{equation}
Recall that $1/\varrho < \rho$ and $1/(2q') < \rho$.
Then
\begin{equation}\label{eqn:tri}
	\begin{aligned}
		\left\|\sum_{t=1}^{kn+r} f(X_t) - (n k+r) \theta -  \Gamma_Y B((nk+r)/(k\Xi_{\tau})) \right\|  \leq O(n^{\rho}).
	\end{aligned}
\end{equation}

Define  $C'(t) = \sqrt{k \Xi_{\tau}} \{B(t/k\Xi_{\tau}) - B((t-1)/k\Xi_{\tau})\},$ $t\in\mathbb{N}_+$.
Then $\{C'(t) \}_{t=1}^\infty$ are i.i.d. $d$ dimensional standard normal random vectors. 
Let 
$
B'(n) = \sum_{t=1}^{n}C'(t)
$ for $n \in \mathbb{N}_+$.
Since, for $r \in \{1,\dots,k\}$, $$B((nk+r)/(k\Xi_{\tau}))=\frac{1}{\sqrt{k\Xi_{\tau}}} \sum_{t=1}^{nk+r} C'(t) = \frac{1}{\sqrt{k\Xi_{\tau}}} B'(nk+r),$$
by \eqref{eqn:tri}, we have with probability 1,  as $n\to\infty$,
$$
\left\|\sum_{t=1}^{kn+r} f(X_t) - (n k+r) \theta -  \frac{\Gamma_Y}{\sqrt{k \Xi_{\tau}}} B'(nk+r) \right\| \leq O(n^{\rho}),
$$
Therefore, with probability 1, as $n\to\infty$,
\begin{equation}\label{eqn:final}
	\left\|\sum_{t=1}^{n} f(X_t) - n \theta -  \frac{\Gamma_Y}{\sqrt{k \Xi_{\tau}}} B'(n) \right\| \leq O\left(n^{\rho}\right).
\end{equation}
By \eqref{eqn:final} and Theorem \ref{thm:clt}, $\Sigma_Y = k\Xi_{\tau}\Sigma$, so $\Gamma := \Gamma_Y/ \sqrt{k \Xi_{\tau}}$ satisfies $\Gamma \Gamma^{\top} = \Sigma$.
This establishes the desired SIP.
\end{proof}

\begin{lemma}\label{lem:measureexist}
	There exists a probability space, on which one redefine the sequences $\{f(X_t)\}_{t=1}^\infty$ and $\{\delta_i\}_{i=0}^\infty$ together with $\{B(t)\}_{t=0}^\infty$ and $\{B(t/k\Xi_{\tau})\}_{t=0}^\infty$, such that the following holds:
	$\{B(t)\}_{t=0}^\infty$ and $\{B(t/k\Xi_{\tau})\}_{t=0}^\infty$ have the same law as the corresponding elements in a $d$ dimensional standard Brownian motion;
	and
	with probability 1,  \eqref{eqn:sippre} holds. 
\end{lemma}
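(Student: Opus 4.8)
The plan is to keep the original process intact and to transport, onto a suitable extension of its probability space, the Brownian motion supplied by Lemma~\ref{lem:liu}, using a standard transfer (gluing) theorem for couplings. First I would record the measurability structure: by Condition~\ref{con:min}-(i) and the construction of the split chain, the pair $(\{f(X_t)\}_{t=1}^\infty,\{\delta_i\}_{i=0}^\infty)$ determines, measurably, the regeneration times $0=T_0<T_1<\cdots$, the i.i.d.\ blocks $\Delta_i$ of \eqref{eqn:Delta}, and hence the stationary $1$-dependent sequence $\{\tilde Y_i\}_{i=1}^\infty$ through \eqref{eqn:Y}; call this measurable map $\Phi$, so $\{\tilde Y_i\}_{i\ge1}=\Phi(\{f(X_t)\}_{t\ge1},\{\delta_i\}_{i\ge0})$ on the original space. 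Since $\{\tilde Y_i\}_{i\ge1}$ takes values in the Polish space $(\mathbb R^d)^{\mathbb N}$, no further regularity of $\mathcal X$ or $\mathcal X_U$ is needed.

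Next I would invoke Lemma~\ref{lem:liu} with $\varepsilon_n=\Delta_n$ and $\tilde W_n=\tilde Y_n$: on some auxiliary space there are sequences $\tilde{\mathbf Y}^\ast=\{\tilde Y_i^\ast\}_{i\ge1}$ and i.i.d.\ standard normal vectors $\{C^\ast(t)\}_{t\ge1}$ with $\tilde{\mathbf Y}^\ast$ having the same law as $\{\tilde Y_i\}_{i\ge1}$ and, writing $B^\ast(m)=\sum_{t=1}^m C^\ast(t)$, the bound $\|\sum_{i\le m}\tilde Y_i^\ast-\Gamma_Y B^\ast(m)\|=o(m^{1/\varrho})$ holding almost surely. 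This last statement says precisely that $(\tilde{\mathbf Y}^\ast,\{B^\ast(m)\}_{m\ge0})$ lies, with probability one, in a fixed Borel subset $A$ of $(\mathbb R^d)^{\mathbb N}\times(\mathbb R^d)^{\mathbb N}$. Because $\tilde{\mathbf Y}^\ast$ and $\{\tilde Y_i\}$ have the same law and both coordinate spaces are Borel, the transfer theorem (see, e.g., Kallenberg's \emph{Foundations of Modern Probability}) yields an extension of the original probability space carrying a sequence $\{B(m)\}_{m\ge0}$ with $(\{\tilde Y_i\}_{i\ge1},\{B(m)\}_{m\ge0})$ equal in law to $(\tilde{\mathbf Y}^\ast,\{B^\ast(m)\}_{m\ge0})$. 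Hence $(\{\tilde Y_i\},\{B(m)\})\in A$ almost surely, i.e.\ \eqref{eqn:sippre} holds almost surely, and $\{B(m)\}_{m\ge0}$ is a Gaussian random walk with $B(0)=0$; meanwhile the extension preserves $(\{f(X_t)\}_{t\ge1},\{\delta_i\}_{i\ge0})$ together with its joint law, and the identity $\{\tilde Y_i\}=\Phi(\cdot)$ still holds on the extension.

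Finally I would interpolate: on a further, independent extension, attach an independent Brownian bridge to each interval $[m,m+1]$ to obtain a $d$-dimensional standard Brownian motion $\{B(s)\}_{s\ge0}$ whose integer values coincide with $\sum_{t\le m}C(t)$, where $C(t):=B(t)-B(t-1)$, $t\ge1$, are i.i.d.\ standard normal. Then $\{B(t)\}_{t=0}^\infty$ and $\{B(t/(k\Xi_\tau))\}_{t=0}^\infty$ are restrictions of this Brownian motion and so have the stated law, while \eqref{eqn:sippre}, which refers only to the integer values $B(m)$, is unchanged; this is the probability space asserted in the lemma. The main obstacle is purely the bookkeeping around the transfer step: one must verify that the event in \eqref{eqn:sippre} is a Borel function of $(\{\tilde Y_i\},\{B(m)\})$ and that the relevant state spaces are Borel, so that the regular conditional distributions underlying the transfer exist; beyond that the argument is routine.
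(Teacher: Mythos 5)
Your proposal is correct and follows essentially the same route as the paper: the paper joins the law of $(\{f(X_{kt+1}),\dots,f(X_{kt+k}),\delta_t\}_{t\ge0},\{\tilde Y_i\})$ to the Liu--Lin coupling $(\{\tilde Y_i\},\{B(m)\})$ along the common $\{\tilde Y_i\}$ marginal via the gluing lemma on Polish sequence spaces, which is exactly what your transfer-theorem step accomplishes (including the measurability bookkeeping you flag). The only minor difference is that you produce $\{B(t/k\Xi_{\tau})\}_{t=0}^\infty$ by Brownian-bridge interpolation of the integer-time Gaussian walk and then restriction, whereas the paper repeats the gluing argument a second time; both are valid.
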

Lemma \ref{lem:measureexist} can be established in a straightforward manner through the gluing lemma. See Appendix \ref{proof:measure} for details.

\subsection{Strong consistency of the multivariate batch means estimator}\label{ssec:bm}

For time in-homogeneous cyclic Markov chains, we describe the multivariate batch means estimator, and give conditions  under which it is a strongly consistent estimator of the asymptotic covariance matrix $\Sigma$ with the help of the SIP established in Section \ref{ssec:sip}.

Let $n$ represent the length of the in-homogeneous cyclic Markov chain of interest. 
Suppose that $n$ can be written into the product of two positive integers, $a_n$ and $b_n$.
We can partition the chain into $a_n$ number of batches, and each batch contains $b_n$ consecutive elements.
We call $b_n$ the batch length.
For $i \in \{1,\dots , a_n\}$, let 
$$\hat{\theta}_n^{(i)} = \frac{\sum_{t=(i-1)b_n + 1}^{ib_n} f(X_t)}{b_n}$$ 
be the sample mean of the $i$th batch, and recall that $\hat{\theta}_n = \sum_{t=1}^{n} f(X_t)/n$ is the sample mean of the whole chain. 
Since $f(X_t)\in\mathbb{R}^d$, $\hat{\theta}_n^{(i)}$, $i \in \{1,\dots , a_n\}$,  and $\hat{\theta}_n$ are $d$ dimensional random vectors.
Define a multivariate batch means estimator as
\begin{equation}\label{eqn:bm}
\hat{\Sigma}^{\text{BM}}_{n} = \frac{b_n}{a_n-1} \sum_{i=1}^{a_n}(\hat{\theta}_n^{(i)} - \hat{\theta}_n)(\hat{\theta}_n^{(i)} - \hat{\theta}_n)^\top.
\end{equation}
For time homogeneous samplers, $\hat{\Sigma}^{\text{BM}}_{n}$ is a commonly used estimator for the asymptotic variance of $\hat{\theta}_n$.


The main contribution of this section is the following theorem, which states that, for in-homogeneous cyclic chains, $\hat{\Sigma}^{\text{BM}}_{n}$ is consistent for estimating the asymptotic covariance matrix~$\Sigma$.

 \begin{theorem}\label{thm:strongbm}
	Suppose that the assumptions in Theorem \ref{thm:sip} hold.
	If the batch length $b_n$ satisfies Conditions \ref{con:strongbm} and \ref{con:strongbm2} below, then with probability 1, $\hat{\Sigma}^{\text{BM}}_{n}\to\Sigma $ as $n\to\infty$.
 \end{theorem}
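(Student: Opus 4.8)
The plan is to reduce the consistency of $\hat{\Sigma}^{\text{BM}}_n$ to the SIP established in Theorem~\ref{thm:sip}, following the template of \cite{vats2019multivariate} for homogeneous chains but being careful about the time-inhomogeneity, which enters only through the fact that the approximating Brownian motion $B'(\cdot)$ is indexed by the discrete time $n$ (not a rescaled block time). First I would, on the richer probability space, replace $\sum_{t=1}^n(f(X_t)-\theta)$ everywhere by $\Gamma B'(n)$ at the cost of an $O(n^\rho)$ error with $\rho<1/2$, where $\Gamma\Gamma^\top=\Sigma$ and $\{B'(t)\}$ is a $d$-dimensional standard Brownian motion (this is exactly \eqref{eqn:final}). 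Writing $S(n)=\sum_{t=1}^n(f(X_t)-\theta)$ and $S_i = S(ib_n)-S((i-1)b_n)$, a short computation gives
\[
\hat{\Sigma}^{\text{BM}}_n = \frac{1}{b_n(a_n-1)}\sum_{i=1}^{a_n}\left(S_i - \frac{b_n}{n}S(n)\right)\left(S_i - \frac{b_n}{n}S(n)\right)^\top.
\]
I would substitute $S(n) = \Gamma B'(n) + R(n)$ with $\|R(n)\|=O(n^\rho)$, expand the quadratic form, and show that all terms involving $R$ are negligible: the dominant surviving piece is
\[
\tilde{\Sigma}_n := \frac{1}{b_n(a_n-1)}\sum_{i=1}^{a_n}\left(\Gamma B'_i - \tfrac{b_n}{n}\Gamma B'(n)\right)\left(\Gamma B'_i - \tfrac{b_n}{n}\Gamma B'(n)\right)^\top,
\]
where $B'_i = B'(ib_n)-B'((i-1)b_n)$, and the cross terms are controlled by Cauchy--Schwarz together with a bound on $\tilde\Sigma_n$ and on $b_n^{-1}(a_n-1)^{-1}\sum_i\|R_i - (b_n/n)R(n)\|^2 = O(b_n^{2\rho}/b_n) = o(1)$ under Condition~\ref{con:strongbm} (which presumably forces $b_n\to\infty$, $b_n^{-1}n^{2\rho-1}\to 0$ or similar) provided $b_n/n\to 0$.

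Next I would handle the Brownian part $\tilde\Sigma_n$ directly. Since $\Gamma\tilde\Sigma_n^{\mathrm{BM,std}}\Gamma^\top$ where $\tilde\Sigma_n^{\mathrm{BM,std}}$ is the batch means estimator built from the scalar-time standard Brownian motion, it is enough to prove the one-dimensional / vector statement that $b_n^{-1}(a_n-1)^{-1}\sum_{i=1}^{a_n}(B'_i - (b_n/n)B'(n))(B'_i-(b_n/n)B'(n))^\top \to I_d$ almost surely. The $B'_i$ are i.i.d. $\mathcal{N}(0, b_n I_d)$ for fixed $n$, but $b_n$ changes with $n$, so this is a triangular-array SLLN; the standard route, used in \cite{vats2019multivariate} and \cite{jones2006fixed}, is Brownian scaling $B'(ct)\stackrel{d}{=}\sqrt c\,B'(t)$ combined with a Borel--Cantelli / maximal-inequality argument along the sequence $n$, using the second Condition~\ref{con:strongbm2} on $b_n$ (e.g. $\sum_n (b_n/n)^{?}<\infty$ or $b_n = \lfloor n^\nu\rfloor$) to get summability of the tail probabilities of the fluctuations. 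The centering term $(b_n/n)B'(n)$ contributes $(b_n/n)\cdot O(\text{law of iterated logarithm})$ which is $o(1)$ a.s. and can be absorbed similarly. Assembling these gives $\hat{\Sigma}^{\text{BM}}_n \to \Gamma I_d \Gamma^\top = \Sigma$ a.s.

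I expect the main obstacle to be the triangular-array almost-sure convergence of the Brownian batch means functional: unlike an ordinary SLLN the batch size $b_n$ is itself growing, so one cannot just invoke Kolmogorov's SLLN and must instead exploit Brownian self-similarity to rewrite the estimator at scale $n$ in terms of a single Brownian path on $[0,1]$, then control the oscillation between consecutive values of $n$ via the modulus of continuity of Brownian motion and a Borel--Cantelli argument — this is precisely where Conditions~\ref{con:strongbm} and~\ref{con:strongbm2} on $b_n$ are consumed. A secondary but routine technical point is bookkeeping the inhomogeneity: because the SIP in Theorem~\ref{thm:sip} already delivers the clean form \eqref{eqn:final} with the Brownian motion indexed by $n$ and $\Gamma\Gamma^\top=\Sigma$, the time-inhomogeneous structure does not actually re-enter here, so the argument is formally identical to the homogeneous case once \eqref{eqn:final} is in hand; I would remark on this to justify importing the \cite{vats2019multivariate} machinery essentially verbatim.
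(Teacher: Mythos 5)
Your proposal is correct and follows essentially the same route as the paper: the paper proves Theorem \ref{thm:strongbm} simply by combining the SIP of Theorem \ref{thm:sip} with Theorem 4 in the supplement of \cite{vats2019multivariate} (SIP plus the batch-size conditions implies strong consistency of the batch means estimator), observing—exactly as in your closing remark—that once the SIP \eqref{eqn:final} is indexed by the raw time $n$ with $\Gamma\Gamma^\top=\Sigma$, the time-inhomogeneity plays no further role and the homogeneous-case argument (nine-term decomposition, Brownian batch means converging a.s.\ to $I_d$ via scaling and Borel--Cantelli, increment bounds of Cs\"org\H{o}--R\'ev\'esz type) applies verbatim. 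One small slip to fix in your sketch: the SIP remainder over the $i$th batch is $O(n^{\rho})$ (the bound is at the batch endpoints $ib_n\leq n$), not $O(b_n^{\rho})$, so the remainder quadratic term is $O(n^{2\rho}/b_n)$ rather than $O(b_n^{2\rho}/b_n)$; this is precisely what Condition \ref{con:strongbm2} ($b_n^{-1}n^{2\rho}\log n\to 0$) is designed to kill, whereas your guessed condition $b_n^{-1}n^{2\rho-1}\to 0$ would not suffice on its own.
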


\begin{condition}\label{con:strongbm}
	Each of the following holds.
	\begin{enumerate}[a.]
		\item \label{strongbma} The number of batches sequence $\{a_n\}_{n=1}^\infty$ is monotonically increasing, and $a_n \to \infty$ as $n \to \infty$. 
		\item \label{strongbmb} The batch length sequence $\{b_n\}_{n=1}^\infty$ is monotonically increasing, 
		and $b_{n}\to\infty$ as $n\to\infty$. 
		\item \label{strongbmc} There exists $c\geq1 $ such that $\sum_{t=1}^\infty (1/a_t)^c < \infty$.
	\end{enumerate}
\end{condition}

\begin{condition}\label{con:strongbm2}
	There exists a number $\rho > \max\{1/(2+\gamma), 1/4\}$, where $\gamma$ is defined in Theorem~\ref{thm:sip}, such that
	$b_n^{-1}n^{2 \rho} \log n \to 0$ as $n\to \infty$.
\end{condition}

\begin{remark}
	One natural choice of the batch length $b_n$ is $\lfloor n^{\kappa}\rfloor$, $\kappa\in(0,1)$,  which satisfies Condition \ref{con:strongbm}. In order to satisfy Condition \ref{con:strongbm2}, pick $\kappa>\max\{2/(2+\gamma), 1/2\}$.  
\end{remark}

Theorem~\ref{thm:strongbm} follows from the SIP in Theorem \ref{thm:sip} herein and Theorem 4 in \pcite{vats2019multivariate} Supplement.
\pcite{vats2019multivariate} Theorem 4 states that under regularity conditions, the SIP implies the consistency of the batch means estimator.
Although \cite{vats2019multivariate} focused on time homogeneous chains, their proof for that theorem remains valid for in-homogeneous chains.





\subsection{A multivariate sequential termination rule}\label{ssec:tr}

For time homogeneous Markov chains, \cite{vats2019multivariate} developed an asymptotically valid multivariate fixed volume sequential termination rule. 
Using the SIP and the multivariate batch means estimator developed in previous sections, we adopt their approach to time in-homogeneous cyclic Markov chains. We specify conditions under which the multivariate fixed volume sequential termination rule can  achieve fixed volume confidence region asymptotically.

Let $\alpha \in (0,1)$ be a constant.
Define $T^2_{1-\alpha, d, p}$ to be the $(1-\alpha)$th quantile of a Hotelling's $t$-squared distribution  with dimension parameter $d$ and degrees of freedom $p$.
Recall that $\hat{\Sigma}_n^{\text{BM}}$ defined in   \eqref{eqn:bm} is the multivariate batch means estimator of $\Sigma$. 
Assume that $\Sigma$ is positive definite, so almost surely, $\hat{\Sigma}_n^{\text{BM}}$ is positive definite for large $n$.
An asymptotic $(1-\alpha)$ confidence region for $\theta $ can be defined as
\begin{equation}\label{eqn:region}
S_{\alpha}(n) = \{x\in \mathbb{R}^d: n(\hat{\theta}_n - x)^\top (\hat{\Sigma}^{\text{BM}}_n)^{-1} (\hat{\theta}_n - x)< T^2_{1-\alpha, d, p_n}\},
\end{equation}
where the degrees of freedom $p_n = a_n-d$. 
Then $S_{\alpha}(n)$ is a $d$ dimensional ellipsoid with axes along the direction of the eigenvectors  of $\hat{\Sigma}^{\text{BM}}_n$.
The volume of $S_{\alpha}(n)$ is 
$$V(S_{\alpha}(n)) = \frac{2\pi^{d/2}}{dG(d/2)}\left(\frac{T^2_{1-\alpha, d, p_n}}{n}\right)^{d/2}|\hat{\Sigma}_n^{\text{BM}}|^{1/2},$$
where $G(\cdot)$ is the Gamma function.

Let the volume parameter $\varepsilon$ be a small positive value.
Let $\Psi = \text{Var}_{\pi} f$, the covariance matrix of $f(X)$, where $X\sim\pi$, and let $\hat{\Psi}_n$ be the usual sample covariance estimator of  $\Psi$. 
Let $\mathfrak{M}(f,\pi)>0$ be  
a number that is determined by $f$ and $\pi$,
and let $\hat{\mathfrak{M}}_n(f,\pi)>0$ be a strongly consistent estimator of $\mathfrak{M}(f,\pi)$. 
Some choices of $\mathfrak{M}(f,\pi)$ are: $\mathfrak{M}(f,\pi)=1$ and $\hat{\mathfrak{M}}_n(f,\pi)=1$, or $\mathfrak{M}(f,\pi)= |\Psi|^{1/(2d)}$ and $\hat{\mathfrak{M}}_n(f,\pi)=|\hat{\Psi}_n|^{1/(2d)}$. 
Let $n_0\in\mathbb{N}_+$, and let $\mathbb{I}(x<n_0)$ be the indicator function with $\mathbb{I}(x<n_0)=1$ if $x<n_0$, and $\mathbb{I}(x<n_0)=0$ otherwise.
Consider terminating the MCMC simulation at time
\begin{equation} \label{eqn:ter}
	\tN(\varepsilon) = \inf\{n\in\mathbb{N}: \{V(S_{\alpha}(n))\}^{1/d} +s(n, \varepsilon) \leq  \varepsilon \hat{\mathfrak{M}}_n(f,\pi)\},
\end{equation}
where $s(n, \varepsilon) = \varepsilon \hat{\mathfrak{M}}_n(f,\pi) \mathbb{I}(n<n_0) + n^{-1}$.
It is easy to show that, almost surely, $N(\varepsilon) < \infty$.
The term $s(n, \varepsilon)$ guarantees that the MCMC simulation executes at least $n_0$ iterations, preventing early stopping. 
Terminating the simulation at time $\tN(\varepsilon)$ roughly equates terminating when the volume of the $1-\alpha$ confidence region $S_{\alpha}(n)$ drops below $\varepsilon \mathfrak{M}_n(f, \pi)$.

When $\mathfrak{M}(f,\pi)= |\Psi|^{1/(2d)}$, the termination rule has a close connection to the concept of effective sample sizes (ESS).
Define the multivariate ESS 
\begin{equation}\label{eqn:ess}
\text{ESS} = n\left(\frac{|\Psi|}{|\Sigma|}  \right)^{1/d},
 \end{equation}
and the estimated ESS
$$\widehat{\text{ESS}}_n = n\left(\frac{|\hat{\Psi}_n|}{|\hat{\Sigma}_n^{\text{BM}}|}  \right)^{1/d},$$
and assume that $\hat{\Sigma}_n^{\text{BM}}$ and $\hat{\Psi}_n$ are strongly consistent estimators of $\Sigma$ and $\Psi$ respectively. 
Then,  with probability 1, 
$$\frac{\widehat{\text{ESS}}_n}{\text{ESS}}\to 1, \text{ as } n\to\infty.$$  
When $\mathfrak{M}(f,\pi)= |\Psi|^{1/(2d)}$, for $n>n_0$ and large enough, the sequential termination rule   $n \geq \tN(\varepsilon)$ can be rearranged as
$$
		\widehat{\text{ESS}}_n \geq   \left\{\left(\frac{2\pi^{d/2}}{dG(d/2)}\right)^{1/d} (T^2_{1-\alpha, d, p_n})^{1/2}  + \frac{1}{|\hat{\Sigma}_n^{\text{BM}}|^{1/(2d)} n^{1/2}}\right\}^2 \frac{1}{\varepsilon^2}  \approx  \left(\frac{2\pi^{d/2}}{dG(d/2)}\right)^{2/d} \frac{T^2_{1-\alpha, d, p_n}}{\varepsilon^2} 
$$
Define $\chi^2_{1-\alpha, d}$ to be the $(1-\alpha)$th quantile of a chi-square distribution with degrees of freedom $d$.
When $n\to\infty$, $p_n\to\infty$, and $T^2_{1-\alpha, d, p_n}\to \chi^2_{1-\alpha, d}$. Therefore, when $n$ is sufficiently large, the termination rule roughly becomes
\begin{equation}\nonumber
\widehat{\text{ESS}}_n  \geq  \left(\frac{2\pi^{d/2}}{dG(d/2)}\right)^{2/d}\frac{\chi^2_{1-\alpha, d}}{\varepsilon^2}.
\end{equation}
By this display, one can a priori determine the minimum ESS required to obtain a $(1-\alpha)$ confidence region with volume $\varepsilon |\Psi|^{1/(2d)}$.


The theorem below shows that the fixed volume sequential termination rule in   \eqref{eqn:ter} is asymptotically valid.

\begin{theorem}\label{thm:ter}
	Suppose that the assumptions in Theorem \ref{thm:sip} hold, and that $\hat{\mathfrak{M}}_n(f,\pi)$ is a strongly consistent estimator of $\mathfrak{M}(f,\pi)$.
	If the batch length $b_n$ satisfies Conditions \ref{con:strongbm} and \ref{con:strongbm2}, 
	then, for $\alpha \in (0,1)$, as $\varepsilon\to 0$, $\tN(\varepsilon)\to\infty$ and $P(\theta \in S_{\alpha}(\tN(\varepsilon)))\to 1-\alpha$.
\end{theorem}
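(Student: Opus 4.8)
The plan is to adapt the fixed-volume sequential-stopping argument of \cite{glynn1992asymptotic} and \cite{vats2019multivariate} to the present setting, feeding in the SIP of Theorem~\ref{thm:sip} and the strong consistency of $\hat\Sigma^{\text{BM}}_n$ from Theorem~\ref{thm:strongbm}. Since the SIP construction preserves the joint law of $\{X_t\}_{t=0}^\infty$ and $\tN(\varepsilon)$, $\hat\theta_n$, $\hat\Sigma^{\text{BM}}_n$ and $\hat{\mathfrak M}_n(f,\pi)$ are all functionals of the chain, I would run the entire argument on the richer probability space of Theorem~\ref{thm:sip}, on which a process $B'$ with the law of a standard $d$-dimensional Brownian motion at integer times is available together with $\Gamma$ satisfying $\Gamma\Gamma^\top=\Sigma$; the coverage probability $P(\theta\in S_\alpha(\tN(\varepsilon)))$ is unaffected. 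The first step, that $\tN(\varepsilon)\to\infty$ as $\varepsilon\to0$, is elementary: for $n<n_0$ the stopping inequality can never hold, and for $n\ge n_0$ its left-hand side equals $\{V(S_\alpha(n))\}^{1/d}+n^{-1}$, a fixed positive quantity not depending on $\varepsilon$, whereas the right-hand side $\varepsilon\hat{\mathfrak M}_n(f,\pi)\to0$; hence $\tN(\varepsilon)$ eventually exceeds any fixed level, and by Condition~\ref{con:strongbm} the degrees of freedom $p_{\tN(\varepsilon)}=a_{\tN(\varepsilon)}-d\to\infty$.

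The key preparatory step is to pin down the exact growth rate of $\tN(\varepsilon)$. With $c_d=\big(2\pi^{d/2}/(dG(d/2))\big)^{1/d}$, the stopping inequality for $n\ge n_0$ rearranges to $g(n)\le\varepsilon$, where $g(n)=\big(c_d (T^2_{1-\alpha,d,p_n})^{1/2}|\hat\Sigma^{\text{BM}}_n|^{1/(2d)}+n^{-1/2}\big)\big/\big(\sqrt n\,\hat{\mathfrak M}_n(f,\pi)\big)$, so $\tN(\varepsilon)=\inf\{n\ge n_0:g(n)\le\varepsilon\}$. By Theorem~\ref{thm:strongbm}, $\hat\Sigma^{\text{BM}}_n\to\Sigma$ almost surely; by hypothesis $\hat{\mathfrak M}_n(f,\pi)\to\mathfrak M(f,\pi)$ almost surely; and $T^2_{1-\alpha,d,p_n}\to\chi^2_{1-\alpha,d}$ since $p_n\to\infty$. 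Hence $\sqrt n\,g(n)\to\sqrt\lambda$ almost surely, where $\lambda:=c_d^2\,\chi^2_{1-\alpha,d}\,|\Sigma|^{1/d}/\mathfrak M(f,\pi)^2\in(0,\infty)$. Since $g(n)>0$ for every $n$ and $g(n)\to0$, a squeezing argument — for large $n$, $(1-\delta)\sqrt{\lambda/n}\le g(n)\le(1+\delta)\sqrt{\lambda/n}$ — together with $\tN(\varepsilon)\to\infty$ from the previous step yields $\varepsilon^2\tN(\varepsilon)\to\lambda$ almost surely as $\varepsilon\to0$.

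With the rate in hand I would transfer the SIP to the random index. Fixing $\rho\in\big(\max\{1/(2+\gamma),1/4\},\,1/2\big)$, Theorem~\ref{thm:sip} gives $\|n(\hat\theta_n-\theta)-\Gamma B'(n)\|=O(n^\rho)$ almost surely, so $\sqrt n(\hat\theta_n-\theta)-\tfrac{1}{\sqrt n}\Gamma B'(n)=o(1)$ almost surely; evaluating at $n=\tN(\varepsilon)\to\infty$ gives $\sqrt{\tN(\varepsilon)}(\hat\theta_{\tN(\varepsilon)}-\theta)=\tfrac{1}{\sqrt{\tN(\varepsilon)}}\Gamma B'(\tN(\varepsilon))+o(1)$ almost surely. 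Setting $m_\varepsilon=\lfloor\lambda/\varepsilon^2\rfloor$, Step~2 gives $\tN(\varepsilon)/m_\varepsilon\to1$ almost surely; writing $B'(\tN(\varepsilon))/\sqrt{\tN(\varepsilon)}=\sqrt{m_\varepsilon/\tN(\varepsilon)}\cdot B'(m_\varepsilon)/\sqrt{m_\varepsilon}+\big(B'(\tN(\varepsilon))-B'(m_\varepsilon)\big)/\sqrt{\tN(\varepsilon)}$, the first factor tends to $1$, $B'(m_\varepsilon)/\sqrt{m_\varepsilon}\sim\mathcal N(0,I_d)$ exactly, and the fluctuation is $o_p(1)$ by a maximal inequality for the random walk $B'$ together with $|\tN(\varepsilon)-m_\varepsilon|=o(m_\varepsilon)$ — equivalently via the Brownian increment control of \cite{csorgo2014strong} Theorem~1.2.1 used in the proof of Theorem~\ref{thm:sip}, or an appeal to Anscombe's random-index central limit theorem. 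Hence $\sqrt{\tN(\varepsilon)}(\hat\theta_{\tN(\varepsilon)}-\theta)\xrightarrow{d}\mathcal N(0,\Sigma)$. Combining this with $(\hat\Sigma^{\text{BM}}_{\tN(\varepsilon)})^{-1}\to\Sigma^{-1}$ almost surely (Theorem~\ref{thm:strongbm}) via Slutsky's theorem gives $\tN(\varepsilon)(\hat\theta_{\tN(\varepsilon)}-\theta)^\top(\hat\Sigma^{\text{BM}}_{\tN(\varepsilon)})^{-1}(\hat\theta_{\tN(\varepsilon)}-\theta)\xrightarrow{d}\chi^2_d$, while $T^2_{1-\alpha,d,p_{\tN(\varepsilon)}}\to\chi^2_{1-\alpha,d}$ almost surely; since the $\chi^2_d$ law is continuous, $P(\theta\in S_\alpha(\tN(\varepsilon)))\to P(\chi^2_d<\chi^2_{1-\alpha,d})=1-\alpha$.

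I expect the main obstacle to be exactly this passage from a deterministic index to the random stopping time $\tN(\varepsilon)$: one must first identify $\varepsilon^2\tN(\varepsilon)$ with the \emph{deterministic} constant $\lambda$ (Step~2, where the non-monotonicity of $g$ and the auxiliary term $s(n,\varepsilon)$ require care), and then control the Brownian path between $\tN(\varepsilon)$ and its deterministic proxy $m_\varepsilon$; the remaining ingredients — fixed-$n$ coverage, consistency of $\hat\Sigma^{\text{BM}}_n$, and convergence of the Hotelling quantile — are established above or standard. Alternatively, once Theorems~\ref{thm:sip} and~\ref{thm:strongbm} are in place, one could observe that the proof of \cite{vats2019multivariate} for the homogeneous case applies here essentially verbatim, since it only uses an SIP together with a consistent batch-means estimator.
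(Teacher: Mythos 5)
Your proposal is correct and follows essentially the same route as the paper's proof: show $\tN(\varepsilon)\to\infty$, pin down the almost-sure rate $\varepsilon^2\tN(\varepsilon)\to a^2|\Sigma|^{1/d}/\mathfrak{M}^2$ from the two-sided stopping inequalities, transfer the SIP to the random index by comparing $B'(\tN(\varepsilon))$ with a deterministic proxy via Kolmogorov-type maximal-inequality control, and finish with Slutsky, the consistency of $\hat{\Sigma}^{\text{BM}}_n$, and $T^2_{1-\alpha,d,p_n}\to\chi^2_{1-\alpha,d}$. The only differences (normalizing by $\sqrt{\tN(\varepsilon)}$ rather than $\varepsilon \tN(\varepsilon)$, invoking Anscombe as an alternative, and replacing the paper's $\alpha_1<\alpha<\alpha_2$ sandwich by a direct Slutsky argument for the quantile) are cosmetic.
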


\begin{remark}
Consider $\mathfrak{M}(f,\pi)=|{\Psi}|^{1/(2d)}$ and $\hat{\mathfrak{M}}_n(f,\pi)=|\hat{\Psi}_n|^{1/(2d)}$. 
Then $\hat{\mathfrak{M}}_n(f,\pi)$ is a strongly consistent estimator of $\mathfrak{M}(f,\pi)$ by Theorem \ref{thm:slln} and the continuous mapping theorem. 
\end{remark}

The proof of Theorem~\ref{thm:ter}, given in Appendix \ref{proof:thmter},  relies on the SIP in Theorem \ref{thm:sip} as well as the consistency of $\hat{\Sigma}_n^{\text{BM}}$ established in Theorem \ref{thm:strongbm}.
Part of its proof uses some arguments from earlier works on sequential stopping rules \citep{glynn1992asymptotic, vats2019multivariate}.


\section{Numerical Experiments}\label{sec:num}


With our uncertainty assessment tools, we are able to perform output analysis for time in-homogeneous cyclic MCMC samplers. 
We use numerical examples to illustrate the methods described in Sections \ref{ssec:bm} and \ref{ssec:tr}.
In each of the following numerical examples, we specify a target distribution $\pi$, a function $f$, and estimate $\pi (f)$ using MCMC samplers. 
We compare the efficiency of two samplers:  a time in-homogeneous cyclic MCMC  sampler with stationary distribution $\pi$ and a natural homogeneous counterpart of that sampler.
Each example comprises two experiments: a fixed length experiment, and a termination rule experiment. 
The experiments are replicated multiple times.

In the fixed length experiment, we employ the two samplers to generate two Markov chains with the same fixed length. 
For the two chains, we calculate the ESS  and record the computation time to generate them. 
We compare the in-homogeneous chain and the homogeneous chain in terms of their ESS per minute (ESSpm).
When the function $f$ is multivariate, the ESS defined in \eqref{eqn:ess} is based on determinants, tracking the  products of the eigenvalues of $\Psi$ and $\Sigma$, respectively. 
In one of our examples, we will also assess sampler performance by estimating the traces of $\Psi$ and $\Sigma$, which correspond to the sums of the eigenvalues. 
Define the trace effective sample size (TESS) as
\begin{equation}\label{eqn:tess}
	\text{TESS} = n \frac{\text{tr}(\Psi)}{\text{tr}(\Sigma)},
\end{equation} 
which is estimated by
$$
\widehat{\text{TESS}}_n = n \frac{\text{tr}(\hat{\Psi}_n)}{\text{tr}(\hat{\Sigma}^{\text{BM}}_n)}.
$$
We compare the in-homogeneous chain and the homogeneous chain based on their TESS per minute (TESSpm).
The  asymptotic covariance matrices are estimated by the batch means approach with batch length $b_n \approx n^{0.51}$.
We create $90\%$ confidence regions for $\theta = \pi(f)$ via~\eqref{eqn:region}.
We use repeated experiments to find the empirical coverage rates of the confidence regions.


In the termination rule experiment, we execute both the  time in-homogeneous cyclic MCMC  sampler and its corresponding time homogeneous sampler under the same fixed volume sequential termination rule. 
The computation costs up to termination are then compared.
We also construct $90\%$ confidence regions 
for $\theta$ at the time of termination via \eqref{eqn:region}, and we compute their empirical coverage rates through repeated experiments.
The sequential termination rule is based on $90\%$ confidence regions with volume parameter $\varepsilon=0.05$. 
We check the rule every $20\%$ increments of iterations.

\subsection{Bivariate uniform models}

Let $h(\cdot)$ be a real valued function on $\mathbb{R}$. 
Consider the area under the curve $S = \{(x_1,x_2)\in\mathbb{R}\times\mathbb{R}:0\leq x_2\leq h(x_1)\}$.
Let $\pi$ be the uniform distribution on the area $S$, and let $f(x_1,x_2) = x_1\times x_2$ for $(x_1,x_2)\in\mathbb{R}^2$. 
We aim to estimate  $\pi (f)$. 
The conditional distribution of $X_2$ given $X_1=x_1$ is uniform on internal $[0,h(x_1)]$. Given $(x_1, x_2) \in S$, a y-axis step draws a point $x'_2$ from the conditional distribution of $X_2$ given $X_1=x_1$ and outputs $(x_1, x'_2)$. 
The conditional distribution of $X_1$ given $X_2=x_2$ is uniform on region $\{x_1\in\mathbb{R}:  h(x_1)\geq x_2\}$. Given $(x_1, x_2) \in S$, an  x-axis step draws a point $x'_1$ from the conditional distribution of $X_1$ given $X_2=x_2$ and outputs $(x'_1, x_2)$. 
Alternating between the two steps constitutes a two-component Gibbs sampler. 
Implementing the x-axis step involves solving $h(x_1)=x_2$, and is apparently more difficult than the y-axis step. 
To achieve smaller variance under a given computation time,   one can use a modified deterministic scan Gibbs sampler that conducts the  y-axis step $k-1$ times, where $k$ may be greater than 2, before conducting the x-axis step once \citep{qin2022analysis}.
As noted in Section \ref{sssec:sip}, this is a time in-homogeneous cyclic sampler. 
For this experiment, we take $k=4$.
We also consider a time homogeneous counterpart, which performs a y-axis step followed by an x-axis step in each iteration.
Note that the chain associated with the homogeneous sampler has the same transition law as the first homogeneous subchain of the in-homogeneous sampler.

Let $h(x) = 2x+1-e^x$. 
The conditional distribution of $X_1$ given $X_2=x_2$ 
is uniform on region  
$\{x_1\in\mathbb{R}:  h(x_1)\geq x_2\} = [r_l, r_r]$,
where
$r_l$ and $r_r$ are the left and right roots for $h(x_1) - x_2=0$. The roots are found using the \texttt{uniroot} function in the \texttt{stats} R package~\citep{rcore}.

By the discussion in Section \ref{sssec:sip}, Condition \ref{con:min} holds.
In addition, by~\pcite{roberts2004general} Theorem 8, the transition law of the first  homogeneous subchain is geometrically ergodic. 
By Remark \ref{rem:ergodic}, $K_U$ defined in \eqref{eqn:KU} is geometrically ergodic.
This guarantees that Theorems \ref{thm:sip}, \ref{thm:strongbm}, and \ref{thm:ter} all apply to the two samplers being investigated.

We carry out the fixed length and termination experiments described at the beginning of this section, and we repeat the process $1000$ times.


In the fixed length segment, we run each sampler for $3 \times 10^4$ iterations.
The results from 1000 repeated experiments are shown in Table \ref{table:11} with standard errors in parenthesis.

\begin{table}[!htbp]
	\centering
	\caption{
		Bivariate uniform: fixed length experiment.
		The Sampler column labels the samplers used; the Time column stores the computation time in seconds; the ESS and ESSpm columns record the ESS and ESS per minute, respectively; the Coverage column stores the empirical coverage probability of the 90\% confidence region.
	}
	\label{table:11}
	\begin{tabular}{lllll}
		\hline
		Sampler & Time & ESS &ESSpm & Coverage  \\ \hline
		In-homo & 33.11 (0.06) & 9885 (36)  & 17963 (72) & 0.895 (0.010) \\
		Homo & 123.50 (0.28) & 22694 (83)   & 11071 (46) & 0.895 (0.010)\\ \hline
	\end{tabular}
\end{table}

From Table \ref{table:11}, 
the ESSpm of the in-homogeneous sampler is 62.3\% greater than that of the homogeneous sampler.
The empirical coverage probability is close to 90\%, suggesting that our estimated asymptotic
 variance in the CLT is accurate.
After taking both the ESS and the computation time into account, the  modified deterministic scan Gibbs sampler is more efficient than its time homogeneous counterpart. This highlights the usefulness of tools for the output analysis of in-homogeneous cyclic chains.

Table \ref{table:13} gives the result of 1000 repetitions of the termination rule experiment.

\begin{table}[!htbp]
	\centering
	\caption{
	Bivariate uniform: termination rule experiment.
	The Sampler column indicates samplers used; the Time column records the computation time in seconds at termination; the Iter (x-axis) column and Iter (y-axis) column store the numbers of x-axis and y-axis steps used, respectively; the Coverage column stores the empirical coverage probability of the 90\% confidence region. 
	}
	\label{table:13}
\begin{tabular}{lllll}
	\hline
	Samplers & Time  & Iter (x-axis) & Iter (y-axis) & Coverage \\ \hline
	In-homo   & 17.97 (0.09) & 3815 (17) & 11444 (51)  &  0.909 (0.009) \\
	Homo   & 31.25 (0.21) & 6970 (34) & 6970 (34) & 0.891 (0.010)  \\ \hline
\end{tabular}
\end{table}

From Table \ref{table:13}, 
under the same termination rule,
the computation time of the in-homogeneous sampler is 42.5\% shorter than that of the homogeneous sampler. 
Compared with the homogeneous sampler, the modified deterministic scan Gibbs sampler uses much less number of costly x-axis iterations and implements more y-axis steps. 
The empirical coverage probabilities are close to 90\%, which is consistent with our Theorem \ref{thm:ter}.

\subsection{Bayesian linear mixed models}

We consider a Bayesian linear mixed model with a proper prior  and a Gibbs sampler described in \cite{roman2015geometric} targeting the posterior.

Let $g$ be the number of subjects, and suppose that, in each subject $i$, $i\in\{1,\dots, g\}$, there are $n_i$ observations.
Consider the following  linear mixed model
$$\tilde{Y}_{i,j} = x_{i,j}^{\top} \beta +  \gamma_i + \epsilon_{i,j}, $$
where for $i\in\{1,\dots, g\}$ and 
$j\in\{1,\dots,n_i\}$, 
the feature $x_{i,j}$s are $p\times 1$ known  vectors, the fixed effects $\beta$ is a $p\times 1$ vector,  the $\gamma_i$s are i.i.d. $\mathcal{N}(0,\lambda_{\gamma}^{-1})$, and the $\epsilon_{ij}$s are i.i.d. $\mathcal{N}(0,\lambda_{e}^{-1})$  and independent of the $\gamma_i$s. The random variables $\gamma_1, \dots, \gamma_g$ are usually called subject specific intercepts and the $\epsilon_{i,j}$s are within subject errors.
Let  $\tilde{n}=\sum_{i=1}^g n_i$ be  the total number of observations.
Let $Y$ be an $ \tilde{n} \times 1$ vector obtained by stacking all responses $\tilde{Y}_{i,j}$s so that the $b$th element of $Y$ is $\tilde{Y}_{i,j}$ if $b=\sum_{l=1}^i n_l+ j$. 
Similarly, let $X$ be a $\tilde{n}\times p$ known feature matrix obtained by stacking all  $x_{i,j}^{\top}$s so that the $b$th row of $X$ is $x^{\top}_{i,j}$ if $b=\sum_{l=1}^i n_l+ j$, and let $E$ be an $\tilde{n}\times 1$ vector obtained by stacking all errors $\epsilon_{ij}$s so that the $b$th element of $E$ is $\epsilon_{i,j}$ if $b=\sum_{l=1}^i n_l+ j$.
Define $\gamma=(\gamma_1,\dots,\gamma_g)^{\top}$, and let $Z={(Z_{s,t})_{s=1}^{\tilde{n}}}_{t=1}^g$ be an $\tilde{n}\times g$ matrix. 
For $t\in\{1,\dots,g\}$, $Z_{s,t}=1$ if $s\in\{\sum_{i=1}^{t-1}n_i +1,\dots, \sum_{i=1}^{t}n_i\}$  and $Z_{s,t}=0$ otherwise. 
In other words, $Z_{s,t} = 1$ if and only if, after the stacking, the $s$th observation belongs to the $t$th subject.
Then the linear mixed model can be summarized as
$$Y = X\beta + Z\gamma + E.$$

The Bayesian linear mixed model with proper priors can be described in the following hierarchical form. The first stage is 
$$Y|\beta,\gamma,\lambda_e \sim \mathcal{N}_{\tilde{n}}(X\beta + Z\gamma, \lambda_{e}^{-1}I_{\tilde{n}\times \tilde{n}}),$$
where $I_{\tilde{n}\times \tilde{n}}$ is $\tilde{n}\times \tilde{n}$ identity matrix and $\mathcal{N}_{\tilde{n}}$ represents $\tilde{n}$ dimensional multivariate normal distribution. The second stage is
$$\gamma|\lambda_\gamma, \lambda_e \sim \mathcal{N}_{g}(0, \lambda_{\gamma}^{-1}I_{g\times g}), \ \text{and } 
\beta|\lambda_\gamma, \lambda_e \sim \mathcal{N}_{p}(\mu_{\beta}, \Sigma_{\beta}),
$$
where given $\lambda_\gamma$ and $\lambda_e$, $\gamma$ and $\beta$ are mutually independent.
Denote a Gamma distribution with shape parameter $a$ and rate parameter $b$ by Gamma$(a,b)$. The density of Gamma$(a,b)$ is 
$f_G(x;a,b)\propto x^{a-1}e^{-bx},$ $x>0$. The third stage is
$$
\lambda_{\gamma}\sim \text{Gamma}(a_{\gamma}, b_{\gamma}) \text{ and } \lambda_{e}\sim \text{Gamma}(a_e,b_e), 
$$
where $\lambda_{\gamma}$ and $\lambda_{e}$ are mutually independent. The hyperparameters $\mu_{\beta}, \Sigma_{\beta}, a_{\gamma}, b_{\gamma}, a_e,$ and $b_e$ are known, where $\Sigma_{\beta}$ is positive definite, and $a_{\gamma}, b_{\gamma}, a_e,$ and $b_e$ are positive real values.
Let $f_\mathcal{N}(x;a,B)$ denote the density of $\mathcal{N}(a,B)$. Given $Y=y$, the posterior $\pi(\cdot|y)$ is
$$
\begin{aligned}
	\pi(\beta, \gamma, \lambda_\gamma, \lambda_e|y) \propto & f_\mathcal{N}(y; X\beta + Z\gamma, \lambda_{e}^{-1}I_{n\times n}) f_\mathcal{N}(\gamma; 0,  \lambda_{\gamma}^{-1}I_{g\times g}) f_\mathcal{N}(\beta; \mu_{\beta}, \Sigma_{\beta})  \\
	& f_G(\lambda_{\gamma}; a_{\gamma}, b_{\gamma}) f_G(\lambda_{e}; a_e,b_e).
\end{aligned}
$$

The posterior is intractable in the sense that it is hard to analytically calculate its expectation. This motivates the use of Gibbs samplers targeting the posterior. 

Given $(\beta, \gamma)$ and $Y=y$, $\lambda_\gamma $ and $\lambda_e$ are independent with
\begin{equation}\label{eqn:lambda}
\begin{aligned}
	& \lambda_\gamma| (\beta, \gamma), y \sim \text{Gamma} (a_\gamma + g/2, b_\gamma + \|\gamma\|^2/2), \\
	& \lambda_e| (\beta, \gamma), y \sim \text{Gamma} (a_e + \tilde{n}/2, b_e + \|y - X\beta + Z\gamma\|^2/2),
\end{aligned}
\end{equation}
 where $\|\cdot\|$ is the Euclidean norm of a vector. 
 We refer to the process of sampling from it as the lambda step.
 
Given $\lambda_\gamma $, $\lambda_e$, and $Y=y$, $(\beta, \gamma)$ is a multivariate normal distribution.
Define $A_\lambda = (\lambda_e X^\top X + \Sigma_\beta^{-1})^{-1}$ , $B_\lambda = I_{\tilde{n}\times\tilde{n}} - \lambda_e X A_\lambda X^{\top}$, and $C_\lambda = (\lambda_e Z^\top B_\lambda Z + \lambda_\gamma I_{g\times g})^{-1}$. 
The mean of the multivariate normal distribution is
$$\mu((\beta, \gamma)|\lambda_\gamma, \lambda_e, y) = 
\begin{pmatrix}
	A_\lambda(\lambda_eX^\top y + \Sigma_\beta^{-1}\mu_\beta) - \lambda_e^2 A_\lambda X^\top Z C_\lambda Z^\top (B_\lambda y - XA_\lambda \Sigma_\beta^{-1} \mu_\beta)  \\
	\lambda_e C_\lambda Z^\top (B_\lambda y - XA_\lambda\Sigma_{\beta}^{-1} \mu_\beta)  
\end{pmatrix},$$
and the covariance matrix is
$$
\Sigma((\beta, \gamma)|\lambda_\gamma, \lambda_e, y) =
\begin{pmatrix}
	A_\lambda + \lambda_e^2 A_\lambda X^\top Z C_\lambda Z^\top X A_\lambda & -\lambda_e A_\lambda X^\top Z C_\lambda \\
	 -\lambda_e C_\lambda Z^\top X A_\lambda   & C_\lambda 
\end{pmatrix}.
$$ We call sampling from it beta-gamma step.

Alternating between the two steps constitutes a two-component Gibbs sampler, and the beta-gamma  step requires significantly more computational resources compared to the lambda step.  To achieve smaller Monte Carlo variance and computation time, we use a modified deterministic scan Gibbs sampler suggested by \cite{qin2022analysis}. In this experiment, we update the lambda step three times before updating the  beta-gamma step once, i.e., $k=4$. The modified deterministic scan Gibbs sampler is thus a time in-homogeneous 4-cyclic MCMC sampler. 
Denote the underlying chain by $\{(\beta_t, \gamma_t, \lambda_{\gamma t}, \lambda_{et})\}_{t=0}^\infty$.
This chain will be compared with one of its homogeneous counterpart, which has the same transition law as
 its first homogeneous subchain $\{(\beta_{4t}, \gamma_{4t}, \lambda_{\gamma, 4t}, \lambda_{e, 4t})\}_{t=0}^\infty$.

By Section \ref{sssec:sip}, Condition \ref{con:min} holds. 
Assume that that $X$ has full column rank, $g \geq 2$, and that there exist $i,j \in \{1,\dots,g\}$ such that $i \neq j$ and $n_i \geq 2$, $n_j \geq 2$.
Then the transition kernel of the first homogeneous subchain is geometrically ergodic by the proposition below. 
Thus, the kernel $K_U$ defined in \eqref{eqn:KU} is geometrically ergodic by Remark~\ref{rem:ergodic}.

\begin{proposition} \label{pro:mixedgeo}
	The transition kernel of the first homogeneous subchain is geometrically ergodic if the feature matrix
	$X$ has full column rank, the number of subjects $g\geq 2$, and there exist at least two subjects $i,j \in\{1,\dots,g\}$, such that the number of observations within the $i$th subject are larger than 1, i.e., $n_i \geq 2$ and $n_j \geq 2$.
\end{proposition}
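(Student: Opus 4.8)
The plan is to reduce the statement to the geometric ergodicity of the two-block Gibbs sampler analyzed in \cite{roman2015geometric} and then transfer that property to the first homogeneous subchain using Lemma~\ref{lem:1}. First I would note that the lambda step updates only $(\lambda_\gamma,\lambda_e)$, resampling them from the conditional distribution \eqref{eqn:lambda} given $(\beta,\gamma)$; since this conditional does not involve the current value of $(\lambda_\gamma,\lambda_e)$, performing the lambda step twice (or three times) in succession has exactly the same effect as performing it once, i.e.\ the lambda-step kernel is idempotent. Writing $K_\lambda$ for the lambda step and $K_{\beta\gamma}$ for the beta-gamma step, it follows that the kernel of the first homogeneous subchain, $\tilde K_1 = K_\lambda K_\lambda K_\lambda K_{\beta\gamma}$, is in fact equal to $K_\lambda K_{\beta\gamma}$, and the kernel of the fourth homogeneous subchain equals $K_{\beta\gamma} K_\lambda$. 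Thus $\tilde K_1$ coincides, up to scan order, with the standard two-component block Gibbs sampler of \pcite{roman2015geometric}. Applying Lemma~\ref{lem:1} to the $2$-cyclic chain with kernels $(K_\lambda, K_{\beta\gamma})$ shows moreover that $K_\lambda K_{\beta\gamma}$ is geometrically ergodic if and only if $K_{\beta\gamma} K_\lambda$ is, so it suffices to prove that the Gibbs sampler of \cite{roman2015geometric}, in whichever scan order is used there, is geometrically ergodic.

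Second, I would invoke the main geometric ergodicity result of \cite{roman2015geometric} and check that the hypotheses of the present proposition supply the sufficient conditions it requires. The assumption that $X$ has full column rank identifies the fixed-effects block and makes $\|X\beta\|$ comparable to $\|\beta\|$, so a drift function built from residual and random-effect sums of squares actually controls $\|\beta\|$. The assumption $g\geq 2$, with $a_\gamma>0$, gives posterior shape parameter $a_\gamma+g/2>1$ for $\lambda_\gamma$, hence $E_\pi(\lambda_\gamma^{-1})<\infty$; similarly, the existence of two subjects with at least two observations each forces $\tilde n\geq 4$, so $a_e+\tilde n/2>1$ (indeed $>2$) and $E_\pi(\lambda_e^{-1})<\infty$, and it is this same condition that keeps the within-subject residual structure non-degenerate, which \cite{roman2015geometric} exploit to obtain a strict contraction in their drift inequality. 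Granting their theorem, the Gibbs sampler is geometrically ergodic, hence so is $\tilde K_1$; the proposition follows, and (via Remark~\ref{rem:ergodic}) so does geometric ergodicity of $K_U$ in \eqref{eqn:KU}.

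For completeness I would also sketch the self-contained drift-and-minorization argument underlying the cited result, in case one prefers not to black-box it. One works with the $(\beta,\gamma)$-marginal chain, which is itself Markov because $(\lambda_\gamma,\lambda_e)$ is drawn afresh each sweep, and a drift function of the form $V(\beta,\gamma)=\|y-X\beta-Z\gamma\|^2 + c\,\|\gamma\|^2$ for a suitable constant $c>0$ (possibly with a judicious two-step modification so that only the available inverse moments of $\lambda_\gamma,\lambda_e$ are needed). After the beta-gamma step the conditional second moments of $(\beta,\gamma)$ are bounded uniformly in $(\lambda_\gamma,\lambda_e)$ via $A_\lambda\preceq\Sigma_\beta$ and $C_\lambda\preceq(\lambda_e^{-1}+\|X\Sigma_\beta X^\top\|)(Z^\top Z)^{-1}$; after the lambda step the conditional inverse-Gamma moments $E(\lambda_\gamma^{-1}\mid\gamma)$ and $E(\lambda_e^{-1}\mid\beta,\gamma)$ are affine in $\|\gamma\|^2$ and $\|y-X\beta-Z\gamma\|^2$ with slopes $(2a_\gamma+g-2)^{-1}$ and $(2a_e+\tilde n-2)^{-1}$, which are small when the shape parameters are large. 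Combining these gives $PV\leq\varrho V+L$ with $\varrho<1$, the rank and sample-size hypotheses being precisely what drives $\varrho<1$; a minorization $P(x,\cdot)\geq\epsilon\,\nu(\cdot)$ on each sublevel set $\{V\leq R\}$ then follows from the strict positivity and joint continuity of the Gibbs transition densities, and geometric ergodicity follows from the standard drift-and-minorization criterion \citep[see, e.g.,][]{roberts2004general}.

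I expect the main obstacle to be exactly this drift step: controlling the cross-covariance between $\beta$ and $\gamma$ in the joint beta-gamma update (the off-diagonal blocks $-\lambda_e A_\lambda X^\top Z C_\lambda$) so that the contraction coefficient $\varrho$ can be pinned strictly below $1$, and bookkeeping which inverse moments of $\lambda_\gamma$ and $\lambda_e$ must be finite (hence which lower bounds on $g$ and $\tilde n$ are genuinely needed). This is why the cleanest route is to appeal to the analysis already carried out in \cite{roman2015geometric} and to contribute only the short reduction of the first homogeneous subchain to their sampler.
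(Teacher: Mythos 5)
Your proposal is correct and follows essentially the same route as the paper: reduce the first homogeneous subchain to the two-block Gibbs sampler analyzed by \cite{roman2015geometric} and then verify the hypotheses of their geometric ergodicity result (their Proposition 1) under the stated conditions on $X$, $g$, and the $n_i$. The only cosmetic difference is in the reduction step—you use idempotence of the lambda kernel ($K_\lambda K_\lambda = K_\lambda$) together with Lemma~\ref{lem:1} to identify $\tilde K_1$ with their sampler up to scan order, whereas the paper invokes a de-initializing argument \citep{roberts2001markov}; both are valid, and your additional drift-and-minorization sketch is not needed once their result is cited.
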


\begin{proof}
By a de-initialzing argument, the first homogeneous subchain has the same convergence properties as the underlying Markov chain of the data augmentaiton algorithm that~\cite{roman2015geometric} studied.
Then we can show geometric ergodicity by checking the conditions in \pcite{roman2015geometric} Proposition 1. 
\end{proof}

Consider the Orthodont data from \cite{potthoff1964generalized}. The investigators followed the change in an orthdontic measurement every 2 years from age 8 until age 14 for several young subjects. There are $g=27$ subjects including 16 boys and 11 girls.  The response is the distance from the pituitary to the pterygomaxillary fissure (in mm) measured on X-ray images of the skull. The features include the age and sex of the subject. 
We use the Bayesian linear mixed model with fixed effects including  the intercept, age effect, and sex effect. Set the hyperparameters as follows:  $(a_\gamma, b_\gamma) = (1,1)$ for $\lambda_\gamma$, $(a_\epsilon, b_\epsilon) = (1,1)$ for $\lambda_\epsilon$, and $\mu_\beta = (0,0,0)$ and $\Sigma_\beta = I_{3\times 3}$ for $\beta$.
The slope parameter for gender and the precision of the random intercept  $(\beta_{\text{Male}}, \lambda_\gamma)$  is the function $f$ of interest.
We use the modified deterministic scan Gibbs sampler and its homogeneous counterpart to sample from the posterior. 
It can be checked that, for this dataset, the conditions of Proposition~\ref{pro:mixedgeo} are satisfied, so Theorems~\ref{thm:sip},~\ref{thm:strongbm}, and~\ref{thm:ter} all apply to the two samplers.
The experiment process, which  is described in the beginning of Section \ref{sec:num}, is repeated $400$ times.


In the fixed length segment, we run each sampler for $16000$ iterations.
The results from 400 repeated experiments are given in Table \ref{table:21}.
In order to estimate the empirical coverage rate, we need to know the true value $\pi(f)$, which is approximated using an exceedingly long chain with a length of $3\times10^6$.

\begin{table}[!htbp]
	\centering
	\caption{Bayesian linear mixed model: 
		fixed length experiment.
		The Sampler column specifies the samplers used, with Inhomo and Homo representing In-homogeneous and Homogeneous, respectively; the Time column records the computation time in seconds; the ESS, ESSpm, TESS, and TESSpm columns record the ESS, ESS per minute, TESS and TESS per minute, respectively; the Coverage column stores the empirical coverage probability of the 90\% confidence region.
	}
	\label{table:21}
	\begin{tabular}{lllllll}
		\hline
		Sampler & Time & ESS & ESSpm & TESS & TESSpm & Coverage  \\ \hline
		Inhomo & 85.40 (0.05) & 6056 (28) & 4255 (20)  & 9062 (57) & 6368 (41)  &  0.890 (0.016) \\
		Homo  & 350.09 (0.77) & 15218 (67) & 2613 (13) & 14930 (91) & 2563 (17)  &   0.898 (0.015) \\ \hline
	\end{tabular}
\end{table}

From Table \ref{table:21}, 
the ESSpm of the in-homogeneous sampler is 62.8\% greater than that of the homogeneous sampler, and the TESSpm of the in-homogeneous sampler is 148.5\% greater than that of the homogeneous sampler.
The empirical coverage probability, closely approximating 90\%, indicates the accuracy of our estimated covariance matrix in the CLT.
Considering both the Monte Carlo estimator performance and the computation time, the modified deterministic scan Gibbs sampler proves to be more efficient than its time homogeneous counterpart. 
This again shows the importance of having specialized output analysis tools for time in-homogeneous samplers.

Table \ref{table:23} gives the result of 400 repetitions of the termination rule experiment.

\begin{table}[!htbp]
	\centering
	\caption{Bayesian linear mixed model: 
		termination rule experiment.
		The Sampler column indicates samplers used;  the Time column records computation time in seconds until terminate; the Iter (beta-gamma) column and Iter (lambda) column store the numbers of beta-gamma and lambda steps used, respectively;
		the Coverage column stores the empirical coverage probability of the 90\% confidence region. 
	}
	\label{table:23}
	\begin{tabular}{lllll}
		\hline
		Samplers & Time  & Iter (beta-gamma)  & Iter (lambda) & Coverage \\ \hline
		In-homo   & 91.51 (0.55) & 4364 (23) & 13093 (70) & 0.890 (0.016)  \\
		Homo & 152.09 (0.70) & 7249 (30) & 7249 (30) & 0.910 (0.014) \\ \hline
	\end{tabular}
\end{table}

From Table \ref{table:23}, 
the computation time of the in-homogeneous sampler is 39.8\% shorter than that of the homogeneous sampler. 
This is because the number of computationally expensive beta-gamma steps iterated in the in-homogeneous sampler is much smaller than in the homogeneous sampler. 
The empirical coverage probability is close to 90\%, which is consistent with our Theorem \ref{thm:ter}.

\section*{Acknowledgments}

The authors thank Prof. Galin Jones for introducing the general SIP topic, and for his valuable suggestions regarding the exact form of the covariance matrix in the CLT and the design of numerical experiments for sampler comparison.
The first and second authors were supported by the National Science Foundation [grant number DMS-2112887].

\newpage
\vspace{1cm}
\noindent{\bf\LARGE Appendix}
\appendix
\section{Proofs for Section \ref{ssec:sllnclt}}



\subsection{Proof of Theorem \ref{thm:slln}}\label{proof:thmslln}


By Lemma~\ref{lem:1}, the transition kernel of the $i$th homogeneous subchain $\{Z_t^i\}_{t=0}^\infty$, $i\in\{1,\dots,k\}$, is Harris ergodic. 
Applying the SLLN for homogeneous chains \citep[][Theorem 17.0.1]{meyn2005markov} to each subchain, we have with probability 1, for $i\in\{1,\dots, k\}$, 
\[
\hat{\theta}_{m,i} := \sum_{t=0}^{m-1} \frac{f(Z_t^i)}{m} = \sum_{t=0}^{m-1} \frac{f(X_{kt+i-1})}{m} \to \theta, \text{ as } m\to\infty.
\]
Then, with probability 1, for $r \in \{1,\dots,k\}$,
\[
\hat{\theta}_{km + r - 1} = \frac{m+1}{km+r-1} \sum_{i=1}^r \hat{\theta}_{m+1,i} - \frac{f(X_0)}{km+r-1} + \frac{m}{km+r-1} \sum_{i=r+1}^k \hat{\theta}_{m,i} \to \theta,  \text{ as } m\to\infty.
\]
That is, with probability 1, for any $\epsilon > 0$ and $r \in \{1,\dots,k\}$, one can find a (random) integer $M_{\epsilon,r}$ large enough so that $\|\hat{\theta}_{km+r-1} - \theta\| < \epsilon$ whenever $m \geq M_{\epsilon,r}$.
Take $M_{\epsilon} = \max_r M_{\epsilon,r}$.
Then, with probability 1, for $\epsilon > 0$, $\|\hat{\theta}_n - \theta\| < \epsilon$ whenever $n/k \geq M_{\epsilon}$.
Thus, $\hat{\theta}_n \to \theta$ almost surely.


\subsection{Proof of Theorem \ref{thm:clt}}\label{proof:clt}

We begin by stating a CLT for homogeneous chains.

\begin{lemma}\label{lem:homoclt}
Let $\{Y_t\}_{t=0}^\infty$ be a time homogeneous Markov chain with state space $(\mathcal{Y}, \mathcal{F}^*)$ and stationary distribution $\pi^*$, starting from an arbitrary initial distribution.
Let $f:\mathcal{Y}\to R^d$ be a measurable function.
Assume that the transition kernel of  $\{Y_t\}_{t=0}^\infty$ is at least polynomially ergodic of order  $q\geq (1+\gamma)(1+2/\gamma^*)$, and that $\pi^* (\|f\|^{2+\gamma^*})<\infty$ for some $\gamma>0$ and $\gamma^*>0$.
Define $\cov'(l) = E^* \{(f(Y_0) - \theta)(f(Y_{l}) -\theta)^\top\}$ if $l\geq 0$, and 
$\cov'(l) = E^* \{(f(Y_{-l}) -\theta) (f(Y_0) - \theta)^\top\}$ if $l<0$,
where $E^*(\cdot)$ denotes the expectation taken if $Y_0 \sim \pi^*$, and
define 
\begin{equation}\label{eqn:varsigmahomo}
	\Sigma^* = \sum_{l=-\infty}^{+\infty} \cov' (l).
\end{equation}
Assume further that $\Sigma^*$ is positive definite.
Then
the multivariate CLT holds:
\begin{equation}\label{eqn:homoclt}
	\sqrt m \left(\frac{\sum_{t=1}^m f(Y_t)}{m} - \pi^*(f)\right) \xrightarrow{d} \mathcal{N}(0, \Sigma^*), \text{ as } m \to \infty.
\end{equation}
\end{lemma}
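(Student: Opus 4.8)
The plan is to reduce the claim to a known CLT for geometrically/polynomially ergodic homogeneous Markov chains, which is exactly what \pcite{jones2004markov} Theorem~9(2) provides in the univariate case. First I would handle the summability of the autocovariances and the identification of $\Sigma^*$; then I would lift the univariate CLT to $\mathbb{R}^d$ via the Cram\'er--Wold device; and finally I would deal with the arbitrary initial distribution.

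\emph{Step 1 (finiteness of $\Sigma^*$ and summability).} For a fixed unit vector $\vartheta \in \mathbb{R}^d$, set $h_\vartheta(y) = \vartheta^\top (f(y) - \theta)$, a scalar function with $\pi^*(|h_\vartheta|^{2+\gamma^*}) < \infty$ by the moment assumption on $f$. Under polynomial ergodicity of order $q \geq (1+\gamma)(1+2/\gamma^*)$, standard mixing bounds (e.g. via the coupling/regeneration inequalities, or directly as in \pcite{jones2004markov}) give that the stationary autocovariances $E^*\{h_\vartheta(Y_0) h_\vartheta(Y_l)\}$ decay fast enough to be absolutely summable; in fact the order $q$ here is precisely the threshold used in \pcite{jones2004markov} Theorem~9(2) to ensure $\sum_l |\mathrm{Cov}^*(h_\vartheta(Y_0), h_\vartheta(Y_l))| < \infty$. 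Applying this with $\vartheta = e_a + e_b$ and polarization yields $\sum_{l} |\cov'_{a,b}(l)| < \infty$ for all $a,b$, so $\Sigma^* = \sum_{l=-\infty}^\infty \cov'(l)$ is a well-defined finite symmetric matrix; one also checks $\vartheta^\top \Sigma^* \vartheta = \sum_l E^*\{h_\vartheta(Y_0)h_\vartheta(Y_l)\}$, the scalar long-run variance of $\{h_\vartheta(Y_t)\}$.

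\emph{Step 2 (Cram\'er--Wold).} Fix $\vartheta$ with $\|\vartheta\| = 1$ and apply \pcite{jones2004markov} Theorem~9(2) to the scalar chain functional $h_\vartheta$: since $\{Y_t\}$ is at least polynomially ergodic of order $q$ and $\pi^*(|h_\vartheta|^{2+\gamma^*}) < \infty$, one gets $\sqrt m\,(m^{-1}\sum_{t=1}^m h_\vartheta(Y_t)) \xrightarrow{d} \mathcal{N}(0,\, {\sigma^*_\vartheta}^2)$ with ${\sigma^*_\vartheta}^2 = \vartheta^\top \Sigma^* \vartheta$ by Step~1. Since $\Sigma^*$ is assumed positive definite, ${\sigma^*_\vartheta}^2 > 0$ for every unit $\vartheta$, so the limit is nondegenerate. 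This holds for every $\vartheta$, so by the Cram\'er--Wold theorem $\sqrt m\,(m^{-1}\sum_{t=1}^m (f(Y_t) - \theta)) \xrightarrow{d} \mathcal{N}(0,\Sigma^*)$.

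\emph{Step 3 (arbitrary initial distribution).} The CLT from \pcite{jones2004markov} is typically stated for a stationary start, so I would remove the stationarity assumption. Since the chain is Harris ergodic (polynomial ergodicity of order $q > 0$ implies this) and the target functional has a finite $(2+\gamma^*)$-th moment, the CLT is insensitive to the initial distribution; this can be argued by a coupling/regeneration argument as in \pcite{meyn2005markov} Chapter~17, or simply by invoking that \pcite{kuelbs1980almost} Lemma~2.1 (cited in the paper right after this lemma) lets one pass from a CLT valid under $\pi^*$ to one valid under an arbitrary initial law, because the difference of the two partial-sum processes is negligible after normalization by $\sqrt m$. I expect \textbf{Step~1}, verifying that the polynomial-ergodicity order $q \geq (1+\gamma)(1+2/\gamma^*)$ is exactly what makes the autocovariances summable and the scalar CLT applicable, to be the main technical point; once the scalar result is in hand, Steps~2 and~3 are routine.
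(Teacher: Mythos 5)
Your proposal matches the paper's own argument: the paper proves this lemma in one line by applying \pcite{jones2004markov} Theorem~9(2) to obtain a univariate CLT and then invoking the Cram\'er--Wold device, exactly as in your Step~2. Your Steps~1 and~3 (autocovariance summability identifying $\vartheta^\top\Sigma^*\vartheta$, and insensitivity to the initial distribution, which follows from Harris ergodicity as a solidarity property) are details the paper leaves implicit, and they are sound, though the reference to \pcite{kuelbs1980almost} Lemma~2.1 is really a covariance-decay bound used elsewhere for summability rather than a tool for changing the initial law.
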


\begin{proof}
	By \pcite{jones2004markov} Theorem 9 (2), a univariate CLT holds. Then the multivariate CLT~\eqref{eqn:homoclt} holds by the Cramér–Wold Theorem.
\end{proof}
	

We now prove Theorem \ref{thm:clt}. 
We shall begin with a few technical results.

By Lemma \ref{lem:1}, the transition kernel of the block chain is at least polynomially ergodic of order $q$. If $X_0\sim\pi$, by \pcite{jones2004markov} Section 3, the block chain is $\alpha$-mixing with $\alpha(n)\leq O(n^{q})$, where $q\geq (1+\gamma)(1+2/\gamma^*)$.

Recall that $\cov(j,l)$, $j\in\{0,\dots,k-1\}$ and $l\in\mathbb{Z}$, is the $(j,l)$th autocovariance matrix defined in \eqref{eqn:xi}.
We will show the absolute convergence of $\sum_{l=-\infty}^{+\infty}\sum_{j=0}^{k-1} |\cov_{a,b}(j, l)|,$ where $\cov_{a,b}(j, l)$ denotes the $(a,b)$th element of $\cov(j, l)$, and $a$ and $b$ are in $\{1,\dots,d\}$.
Since $\pi(\|f\|^{2+\gamma^*})<\infty$, for $X\sim\pi$, we have $\|f(X)\|_{2+\gamma^*}<\infty$, where $\|\cdot\|_p$ denote the $L_p$-norm of random vectors.
By~\pcite{kuelbs1980almost} Lemma 2.1 (see also \pcite{deo1973note} Lemma 1) and the $\alpha$-mixing coefficient of the block chain, we have  for all $a\in\{1,\dots,d\}$, $b\in\{1,\dots,d\}$, $j\in\{0,\dots,k-1\}$, and $l$ large enough,
$$|\cov_{a,b}(j, l)| \leq O(\lfloor l/k\rfloor^{-q})^{\gamma^*/(2+\gamma^*)}= O(l^{-q\gamma^*/(2+\gamma^*)})\leq O(l^{-1-\gamma}).$$
Therefore,
$\sum_{l=-\infty}^{+\infty}\sum_{j=0}^{k-1} |\cov_{a,b}(j, l)|<\infty.$

Consider the function $f': \mathcal{X}^k \to \mathbb{R}^d$, where $f'(x_1,\dots ,x_{k}) = f(x_1)+\cdots +f(x_{k})$.
Let $$
\Sigma' = \sum_{l=-\infty}^{0} E_{\pi} \{(f'(Z^{\scriptsize\mbox{blo}}_{-l} )   -   k\theta) (f'(Z^{\scriptsize\mbox{blo}}_0) - k\theta)^\top\} + \sum_{l=1}^{+\infty} E_{\pi} \{(f'(Z^{\scriptsize\mbox{blo}}_0) - k\theta)(f'(Z^{\scriptsize\mbox{blo}}_l) - k\theta)^\top\}.
$$
Define $\sum_{l=0}^{-1}\cov(j,l) = 0$.
By the series rearrangement theorem \citep[Theorem 3.55]{rudin1964principles}, we get
	$$
\begin{aligned}
	\Sigma'   
	=& \sum_{l=-\infty}^{-1} E_{\pi} \left\{\sum_{i=0}^{k-1} (f(X_{-lk+i} ) - \theta)\right\}\left\{\sum_{j=0}^{k-1} (f(X_{j} ) - \theta)^\top\right\} \\ 
	&+  E_{\pi} \left\{\sum_{i=0}^{k-1} (f(X_{i} ) - \theta)\right\}\left\{\sum_{j=0}^{k-1} (f(X_{j} ) - \theta)^\top\right\} \\
	& + \sum_{l=1}^{+\infty} E_{\pi}
	\left\{\sum_{j=0}^{k-1} (f(X_{j} ) - \theta)\right\}
	\left\{\sum_{i=0}^{k-1} (f(X_{lk+i} ) - \theta)^\top\right\}\\
	  =& \sum_{j=0}^{k-1} \left\{\sum_{l=-\infty}^{-1} \sum_{i=0}^{k-1} \cov(j, j+lk-i) + \sum_{l=j-k+1}^{-1}\cov(j,l)  + \sum_{l=0}^{k-1-j}\cov(j,l) +
	\sum_{l=1}^{+\infty} \sum_{i=0}^{k-1} \cov(j, lk+i-j)\right\}\\
	  =& \sum_{j=0}^{k-1} \left\{\sum_{l=-\infty}^{j-k}  \cov(j, l) + \sum_{l=j-k+1}^{-1}\cov(j,l)  + \sum_{l=0}^{k-j-1}\cov(j,l) +
	\sum_{l=k-j}^{+\infty}  \cov(j, l)\right\}\\
	 =& \sum_{l=-\infty}^{+\infty}\sum_{j=0}^{k-1} \cov(j, l).
\end{aligned}
$$
That is, $\Sigma' = k \Sigma$ where $\Sigma$ is given by \eqref{eqn:varsigma}.

For an arbitrary initial distribution, we apply Lemma \ref{lem:homoclt} to the block chain and function $f'$. 
Given the assumption in Theorem \ref{thm:clt} that $\Sigma$ is positive definite,  $\Sigma'$ is also positive definite. By Lemma \ref{lem:homoclt}, we have
\begin{equation} \nonumber
	\sqrt {m} \left(\sum_{t=0}^{mk-1} \frac{f(X_t)}{m} - k \theta \right) \xrightarrow{d} \mathcal{N}(0, \Sigma'), \text{ as } m \to \infty,
\end{equation}
It follows that
\begin{equation} \label{eqn:CLT-block-2}
	\sqrt {mk} \left(\sum_{t=0}^{mk-1} \frac{f(X_t)}{mk} - \theta \right) \xrightarrow{d} \mathcal{N}(0, \Sigma), \text{ as } m \to \infty,
\end{equation}

For a distribution $\nu$ on $(\mathcal{X}, \mathcal{F})$ and $A \in \mathcal{F}^{\mathbb{N}}$ (where $\mathcal{F}^{\mathbb{N}} = \prod_{i=0}^{\infty} \mathcal{F}$ means the product $\sigma$ algebra generated by cylinder sets), we use $P_{\nu}(\{X_t\}_{t=0}^{\infty} \in A)$ to denote the probability of $\{X'_t\}_{t=0}^{\infty} \in A$ where $\{X_t'\}$ is a chain with the same transition law as $\{X_t\}$ and $X'_0 \sim \nu$.
By the ergodicity of the transition kernel of the subchain $\{X_{kt+i-1}\}_{t=0}^{\infty}$, for $i \in \{1,\dots,k\}$ and $\varepsilon > 0$, 
\[
\begin{aligned}
	&P_{\nu}(\|f(X_{mk + i-1})\|/\sqrt{mk} \geq \varepsilon) \\
	\leq & P_{\pi}(\|f(X_{i-1})\|/\sqrt{mk} \geq \varepsilon) + \|\nu (K_i K_{i+1} \cdots K_k K_1 \cdots K_{i-1})^m(\cdot) -\pi(\cdot)\|_{\text{TV}} \to 0,
\end{aligned}
\]
as $m \to \infty$.
Then, regardless of the initial distribution, for $r \in \{1,\dots,k\}$,
\begin{equation} \label{eqn:CLT-block-3}
	 \sum_{t=mk}^{mk+r} \frac{f(X_t)}{\sqrt{mk}} \xrightarrow{P} 0, \text{ as } m \to \infty,
\end{equation}
By \eqref{eqn:CLT-block-2}, \eqref{eqn:CLT-block-3}, and Slutsky's theorem, for $r \in \{1,\dots,k\}$, the law of $\sqrt{mk+r}(\hat{\theta}_{mk+r-1} - \theta)$ converges weakly to $\mathcal{N}(0, \Sigma)$ as $m \to \infty$.
Evidently,  this implies that the law of $\sqrt{n}(\hat{\theta}_n - \theta)$ converges weakly to $\mathcal{N}(0, \Sigma)$ as $n \to \infty$.

\section{Proofs for Section \ref{ssec:sip}}

\subsection{Proof of Lemma \ref{lem:ergodic}} \label{proof:ergodic}

Suppose that $K_U$ is Harris ergodic.
Let $\nu$ be the starting distribution of $\{X_t\}_{t=0}^{\infty}$.
Let $\tilde{\pi}_U: \mathcal{F}_U^2 \to [0,1]$ be such that
\[
\tilde{\pi}_U(\df u, \df u') = \pi_U(\df u) K_U(u, \df u').
\]
Since $K_U$ is Harris ergodic, routine calculations show that the law of $(U_t, U_{t+1})$ converges in the total variation distance to $\tilde{\pi}_U$ as $t \to \infty$.
Recall that $X_{kt+k} = g_k(U_t, U_{t+1})$ for $t \in \mathbb{N}$.
Then the law of $X_{kt+k}$ converges to the distribution $\tilde{\pi}_U \circ g_k^{-1}$.
Since $\nu$ is arbitrary, $\tilde{K}_1$, the transition kernel of the first homogeneous subchain $\{Z_{t}^1\}_{t=0}^{\infty} = \{X_{kt}\}_{t=0}^{\infty}$, is Harris ergodic.

Suppose further that $K_U$ is at least polynomially ergodic of order $q$.
That is, for $u \in \mathcal{X}_U$ and $t \in \mathbb{N}_+$,
\[
\|K_U^t(u, \cdot)  - \pi_U(\cdot) \|_{\text{TV}} \leq \rho_U(u) t^{-q},
\]
where $\pi_U(\rho_U) < \infty$.
Let $x \in \mathcal{X}$.
Consider a copy of $\{U_t\}$ and $\{X_t\}$ such that $X_0 = x$, $U_t = g_0(X_{kt}, X_{kt+1}, \dots, X_{kt+k_0-1})$ for $t \in \mathbb{N}$, and $X_{kt+k} = g_k(U_t, U_{t+1})$ for $t \in \mathbb{N}_+$.
Denote the distribution of $U_0$ by $\nu$. 
Recall that $\tilde{K}_1 = K_1 \cdots K_k$.
Then, by the polynomial ergodicity of $K_U$, for $t \in \mathbb{N}_+$ and $A \in \mathcal{F}$,
\begin{equation} \label{ine:Kpoly}
\begin{aligned}
	| \tilde{K}_1^t(x, A) - \pi(A) | &= \left| \int_{\mathcal{X}_U^2} \{ \nu K_U^t (\df u) - \pi_U(\df u) \} K_U(u, \df u') \, \mathbb{I}\{ g_k(u,u') \in A \} \right| \\
	&\leq \|\nu K_U^t(\cdot) - \pi_U(\cdot)\|_{\text{TV}} \\
	&= \sup_{B \in \mathcal{F}_U} \left| \int_{\mathcal{X}_U} \nu(\df u) \left\{ K_U^t(u, B) - \pi_U(B) \right\} \right| \\
	&\leq \sup_{B \in \mathcal{F}_U}  \int_{\mathcal{X}_U} \nu(\df u) \left| K_U^t(u, B) - \pi_U(B) \right| \\
	&\leq \nu(\rho_U) t^{-q}.
\end{aligned}
\end{equation}
Note that, if $k_0 - 1 = kt'+r$ for some $t' \in \mathbb{N}$ and $r \in \{1,\dots,k\}$,
\[
\nu(\rho_U) = \rho(x) := \int_{\mathcal{X}^{k_0-1}} \rho_U\{ g_0(x, x_1, \dots, x_{k_0-1}) \} K_1(x, \df x_1) \, \cdots \, K_r(x_{k_0-2}, \df x_{k_0-1}),
\]
so
\[
\pi(\rho) = \int_{\mathcal{X}^{k_0}} \rho_U\{ g_0(x, x_1, \dots, x_{k_0-1}) \} \, \pi(\df x) \, K_1(x, \df x_1) \, \cdots \, K_r(x_{k_0-2}, \df x_{k_0-1}).
\]
(If $k_0 = 1$ then $\pi(\rho)$ is just $\pi(\rho_U \circ g_0)$.)
It was already shown that $\tilde{K}_1$ is Harris ergodic, so as $t \to \infty$, the law of $X_{kt}$ converges in the total variation distance to $\pi$.
Then, by the cyclic nature of the transition laws of $\{X_t\}_{t=0}^{\infty}$, the law of $U_t = g_0(X_{kt}, X_{kt+1}, \dots, X_{kt+k_0-1})$ must converge in the total variation distance to the law of $g_0(X_0', X_1', \dots, X_{k_0-1}')$ where $X_0' \sim \pi$ and $\{X_t'\}_{t=0}^{\infty}$ is a Markov chain with the same transition laws as $\{X_t\}_{t=0}^{\infty}$.
The limiting distribution of $\nu K_U^t$ must coincide with the stationary distribution of $K_U$, so the law of $g_0(X_0', X_1', \dots, X_{k_0-1}')$ is $\pi_U$.
The most recent display then implies that $\pi(\rho) = \pi_U(\rho_U) < \infty$.
It then follows from \eqref{ine:Kpoly} that the first homogeneous subchain $\tilde{K}_1$ is at least polynomially ergodic of order $q$.

The proof for geometric ergodicity is analogous.

\subsection{Proof of Lemma \ref{lem:tau}}\label{proof:tau}

	Denote by $\check{E}_{\nu}$ the expectation of a function of the split chain if $(U_0, \delta_0) \sim \nu$, where $\nu$ is an arbitrary initial distribution.
	To be more precise, for a function $V$, $\check{E}_{\nu} V(\{U_t, \delta_t\}_{t=0}^{\infty})$ denotes the expected value of $V(\{U_t,\delta_t\}_{t = 0}^{\infty})$ when the chain $\{U_t,\delta_t\}_{t = 0}^{\infty}$ is redefined on some probability space so that it has the same transition law as before, but $(U_0,\delta_0) \sim \nu$.
	Our goal is to show that $\check{E}_{\nu} (T_2 - T_1)^{q'} < \infty$ where $q'$ is an arbitrary number in $[0,q+1)$.
	
	Let $S = \inf\{t > 0: \ \delta_t = 1\}$.
	Note that $T_2 - T_1 = \inf\{t > 0: \ \delta_{T_1 - 1 + t} = 1\}$, $T_1 - 1$ is a stopping time, and $\delta_{T_1-1} = 1$.
	Then, by the strong Markov property~\citep[see, e.g.,][Proposition 3.4.6]{meyn2005markov} and the split chain construction, $\check{E}_{\nu} (T_2 - T_1)^{q'} = \check{E}_{\varpi} S^{q'}$, where $\varpi$ is any distribution concentrated on $\mathcal{X}_U \times \{1\}$.
	Indeed, $\mathcal{X}_U \times \{1\}$ is an atom for the split chain, so $T_2 - T_1 = (T_2 - 1) - (T_1 - 1)$, being the time difference between the first and second times that chain is in the atom, has the same distribution as $S$, which is the first return time to the atom if the split chain starts from the atom.
	Hence, it suffices to show that $\check{E}_{\varpi} S^{q'} < \infty$.

	By a de-initialization argument \citep[][Theorem~1]{roberts2001markov}, the transition kernel of $\{U_t, \delta_t\}_{t=0}^{\infty}$ is at least polynomially ergodic of order $q$.
	Then, if $(U_0, \delta_0) \sim \pi_U^*$, where $\pi_U^*$ is the stationary distribution of the split chain $\{U_t, \delta_t\}_{t=0}^{\infty}$, the process $\{U_t, \delta_t\}_{t=0}^{\infty}$ is  $\alpha$-mixing with $\alpha(n)\leq Cn^{-q}$, where $C$ is a constant \citep[Section 3]{jones2004markov}. 
	Let $q^*=\max\{2,q'\}$.
	Then, recalling that $q > 1$  and $q' < q+1$, we see that  $\sum_{n=1}^{\infty} n^{q^*-2} n^{-q} < \infty$, and $q^*-2\geq 0$.
	It then follows from \pcite{bolthausen1982berry} Lemma 3 that $\check{E}_{\varpi} S^{q^*} < \infty$. Thus, we have $\check{E}_{\varpi} S^{q'} < \infty$.

\subsection{Proof of Lemma \ref{lem:EY}}\label{proof:EY}

By Lemma \ref{lem:finite}, $T_1$ is finite almost surely.
Since $\{\tau_t\}_{t=1}^\infty$ are i.i.d. random variables, and $T_{2n+1} = \sum_{t=1}^{2n}\tau_t + T_1$, 
by the classic SLLN, we have with probability 1, 
\begin{equation}\label{eqn:sllnT}
	\frac{1}{2n}T_{2n+1} = \frac{1}{2n}\left(\sum_{t=1}^{2n} \tau_t \right) + \frac{1}{2n}T_1 \to  \Xi_{\tau}, \text{ as } n\to\infty.
\end{equation}
Since with probability 1, $T_{2n+1}\to\infty$ as $n\to\infty$, by Lemma~\ref{lem:ergodic} and Theorem \ref{thm:slln}, we have with probability 1,
\begin{equation}\label{eqn:sllnx}
	\frac{1}{kT_{2n+1}}\left(\sum_{t=1}^{kT_{2n+1}}f(X_t) \right)\to\theta, \text{ as } n\to\infty.
\end{equation}
Noticing that 
$$
\sum_{t=1}^{2n} Y_t= \sum_{t=1}^{2n}\sum_{i=kT_{t}+1}^{kT_{t+1}} f(X_i) = \sum_{t=kT_1+1}^{kT_{2n+1}}f(X_t) = \sum_{t=1}^{kT_{2n+1}}f(X_t) - \sum_{t=1}^{kT_1}f(X_t),
$$
by \eqref{eqn:sllnT} and \eqref{eqn:sllnx}, we have with probability 1,
$$
\frac{1}{2n}\left(\sum_{t=1}^{2n} Y_i\right)  = \frac{T_{2n+1}}{2n}\frac{1}{T_{2n+1}}\left(\sum_{t=1}^{kT_{2n+1}}f(X_t) \right) -  \frac{1}{2n}\left(\sum_{t=1}^{kT_1}f(X_t)\right)\to \Xi_{\tau} k\theta, \text{ as } n\to\infty.
$$

On the other hand, it is easy to see that $\{Y_{2t-1}\}_{t=1}^\infty$ are i.i.d. random vectors, and so are $\{Y_{2t}\}_{t=1}^\infty$.
Then, by the classic SLLN, we have with probability 1,
\begin{equation}\nonumber
	\frac{1}{2n}\left(\sum_{t=1}^{2n} Y_i\right) = \frac{1}{2n}\left(\sum_{t=1}^n Y_{2t-1}\right)+\frac{1}{2n}\left(\sum_{t=1}^n Y_{2t}\right) \to  \Xi_Y, \text{ as } n\to\infty.
\end{equation}
Combining the two most recent displays shows that $\Xi_Y = k\Xi_{\tau}\theta<\infty$.

\subsection{Proof of Lemma \ref{lem:4}}\label{proof:lem4}

Denote by $\check{E}_{\nu}$ the expectation of a function of $\{U_t, \delta_t\}_{t=0}^{\infty}$ if $(U_0, \delta_0) \sim \nu$.
Denote by $\hat{E}_{\nu'}$ the expectation of a function of $\{U_t\}_{t=0}^{\infty}$ if $U_0 \sim \nu'$.
Again, to be precise, when we write $\hat{E}_{\nu'} V(\{U_t\}_{t=0}^{\infty})$ for some function~$V$, we are redefining the chain $\{U_t\}_{t=0}^{\infty}$ on some probability space so that it has the same transition law as before, but $U_0 \sim \nu'$.
For $t \in \mathbb{N}$ and $i \in \{1,\dots,k\}$, when $X_{kt + i}$ appears in an expectation like $\hat{E}_{\nu'}$, it is redefined in terms of the redefined sequence of $\{U_n\}_{n=0}^{\infty}$ through the formula $X_{tk+i} = g_i(U_t, U_{t+1})$.
The same goes when $X_{kt + i}$ appears in an expectation like $\check{E}_{\nu}$.

Let $\varpi$ be some distribution concentrated on $\mathcal{X}_U \times \{1\}$.
By Condition \ref{con:min}, we have  for all $A\in\mathcal{F}_U$,
\begin{equation}\nonumber
	\pi_U(A) = \pi_U K_U(A) = \int_{\mathcal{X}_U} \pi_U(du)K(u, A)\geq \mu(A)\int_{\mathcal{X}_U} h(u) \,\pi_U(du) = \mu(A) \pi_U(h).
\end{equation}
Then, by the Markov property, for $t\in\mathbb{N}_+$ and $i \in \{1,\dots.k\}$,
\begin{equation} \label{ine:Efmoment-1}
	\begin{aligned}
		& \check{E}_{\varpi} (\|f(X_{kt+i})\|^{2+\gamma+\gamma^*})  = \int_{\mathcal{X}_U} \check{E}_{\varpi} \{\|f(X_{kt+i})\|^{2+\gamma+\gamma^*} \mid U_1=u\} \, \mu(\df u)  \\ 
		& \leq  \frac{1}{\pi_U(h)}  \int_{\mathcal{X}_U} \check{E}_{\varpi} \{\|g_i(U_t, U_{t+1})\|^{2+\gamma+\gamma^*} \mid U_1=u\} \, \pi_U(\df u)  \\ 
		&= \frac{1}{\pi_U(h)}  \hat{E}_{\pi_U} \|g_i(U_{t-1}, U_t)\|^{2+\gamma+\gamma^*} \\
		&=  \frac{1}{\pi_U(h)}  \hat{E}_{\pi_U} \|g_i(U_0, U_1)\|^{2+\gamma+\gamma^*}.
	\end{aligned}
\end{equation}
Define the distribution $\pi_i: \mathcal{F} \to [0,1]$ so that 
\[
\pi_i(A) = \hat{E}_{\pi_U} \mathbb{I}\{ g_i(U_0, U_1) \in A \} = \int_{\mathcal{X}_U^2} \mathbb{I}\{ g_i(u_0, u_1) \in A \} \, \pi_U(\df u_0) \, K_U(u_0, \df u_1).
\] 
Since $K_U$ is Harris ergodic, for the original sequence of $\{X_t\}_{t=0}^{\infty}$, the distribution of $X_{kt+i} = g_i(U_t, U_{t+1})$ converges in the total variation distance to $\pi_i$ as $t \to \infty$.
By Lemmas~\ref{lem:1} and~\ref{lem:ergodic}, the transition kernel of the subchain $\{X_{kt+i}\}_{t=0}^{\infty}$ is Harris ergodic, so $\pi$ is the unique limiting distribution of $\{X_{kt+i}\}_{t=0}^{\infty}$, and $\pi = \pi_i$.
It then follows from \eqref{ine:Efmoment-1} that, for $t \geq k+1$,
\begin{equation} \label{ine:Efmoment-2}
	\check{E}_{\varpi} \| f(X_t) \|^{2+\gamma+\gamma^*}  \leq \frac{1}{\pi_U(h)}  \pi(\|f\|^{2+\gamma+\gamma^*}) < \infty.
\end{equation}

To continue, note that
$$
\check{E}_{\varpi} \left\| Y_1\right\|^{2+\gamma}
\leq \check{E}_{\varpi} \left(\sum_{t=kT_1 + 1}^{kT_{2}} \| f(X_t)  \|\right)^{2+\gamma} \leq \check{E}_{\varpi} \left(\sum_{t=k+1}^{\infty}\mathbb{I}(t\leq kT_{2})  \| f(X_t)  \|\right)^{2+\gamma}.
$$
By  Minkowski's inequality,
$$
\left\{\check{E}_{\varpi} \left(\sum_{t=k+1}^{\infty}\mathbb{I}(t\leq kT_{2})  \| f(X_t)  \|\right)^{2+\gamma}\right\}^{1/(2+\gamma)}
\leq \sum_{t=k+1}^{\infty} \left\{ \check{E}_{\varpi} \, \mathbb{I}(t\leq kT_{2}) \| f(X_t)  \|^{2+\gamma} \right\}^{1/(2+\gamma)}.
$$
By Hölder's inequality, 
$$
\begin{aligned}
	& \check{E}_{\varpi} \, \mathbb{I}(t\leq kT_{2}) \| f(X_t)  \|^{2+\gamma} \\
	& \leq 
	\left\{\check{E}_{\varpi} \mathbb{I}(t\leq kT_{2}) \right\}^{\gamma^*/(2+\gamma+\gamma^*)} \left( \check{E}_{\varpi} \| f(X_t) \|^{2+\gamma+\gamma^*}  \right) ^{(2+\gamma)/(2+\gamma+\gamma^*)}.  
\end{aligned}
$$	
Therefore, 
\begin{equation}\label{eqn:y20}	
	\begin{aligned}
		& \left(\check{E}_{\varpi} \left\| Y_1\right\|^{2+\gamma}\right)^{1/(2+\gamma)} 
		\\ 
		&\leq \left\{ \check{E}_{\varpi} \left(\sum_{t=kT_1 + 1}^{kT_{2}} \| f(X_t)  \|\right)^{2+\gamma} \right\}^{1/(2+\gamma)} \\
		& \leq \sum_{t=k+1}^{\infty}	\left\{	\left\{ \check{E}_{\varpi} \mathbb{I}(t\leq kT_{2}) \right\}^{\gamma^*/(2+\gamma+\gamma^*)}\left( \check{E}_{\varpi} \| f(X_t)^{2+\gamma+\gamma^*}  \|\right) ^{(2+\gamma)/(2+\gamma+\gamma^*)}\right\}^{1/(2+\gamma)}     \\
		& \leq \left\{ \frac{1}{\pi_U(h)}\pi(\| f\|^{2+\gamma+\gamma^*}) \right\}^{1/(2+\gamma+\gamma^*)} \sum_{t=k+1}^{\infty}		\left\{\check{E}_{\varpi} \mathbb{I}(t\leq kT_{2}) \right\}^{\gamma^*/\{(2+\gamma+\gamma^*)(2+\gamma)\}} ,
	\end{aligned}
\end{equation}
where the last line follows from \eqref{ine:Efmoment-2}.
Let $q'$ be a constant such that $(2+\gamma)\{1+(2+\gamma)/\gamma^*\}<q'<q+1$. 
Let $S = \inf\{t > 0: \, \delta_t = 1\}$.
Then $T_1 \leq S+1$.
It was shown in the proof of Lemma~\ref{lem:tau} that $\check{E}_{\varpi} S^{q'} = \check{E}_{\varpi} \tau_1^{q'} < \infty$.
By Jensen's inequality,
\[
\check{E}_{\varpi} T_2^{q'} = \check{E}_{\varpi} (T_1 + \tau_1)^{q'} \leq 2^{q'-1} ( \check{E}_{\varpi} T_1^{q'} + \check{E}_{\varpi} \tau_1^{q'} ) < \infty.
\]
By  Markov's inequality and the fact that $q'\gamma^*/\{(2+\gamma+\gamma^*)(2+\gamma)\}>1$,
$$
\begin{aligned}
	& \sum_{t=k+1}^{\infty}	\left\{\check{E}_{\varpi} \mathbb{I}(t\leq kT_{2}) \right\}^{\gamma^*/\{(2+\gamma+\gamma^*)(2+\gamma)\}} \\
	 =& \sum_{t=k+1}^{\infty}	\left(	\check{E}_{\varpi} \mathbb{I} \left\{T_2^{q'}\geq (t/k)^{q'}\right\} \right)^{\gamma^*/\{(2+\gamma+\gamma^*)(2+\gamma)\}} \\
	\leq &  \left\{ \check{E}_{\varpi}  T_2^{q'}  \right\}^{\gamma^*/\{(2+\gamma+\gamma^*)(2+\gamma)\}}\sum_{t=k+1}^{\infty} \left(t/k\right)^{-q'\gamma^*/\{(2+\gamma+\gamma^*)(2+\gamma)\} }<\infty.
\end{aligned}
$$
Therefore, by \eqref{eqn:y20},
\[
\check{E}_{\varpi} \left\| Y_1\right\|^{2+\gamma}
\leq \check{E}_{\varpi} \left(\sum_{t=kT_1 + 1}^{kT_{2}} \| f(X_t)  \|\right)^{2+\gamma} < \infty.
\]
By Lemma \ref{lem:tau}, $\check{E}_{\varpi} {\tau_1}^{2+\gamma} <\infty$. 
Thus, 	$\check{E}_{\varpi} \left\|\tilde{Y}_1\right\|^{2+\gamma}<\infty.$
But the distributions of the $\Delta_i$'s do not depend on the initial distribution of the split chain \citep[][Section 5.3]{nummelin2004general}, so these expectations are finite regardless of the initial distribution.

\subsection{Proof of Lemma \ref{lem:vfvar}}\label{proof:vfvar}

We prove Lemma \ref{lem:vfvar} by contradiction. 
To this end assume that that $\Sigma_Y$ is not positive definite.
Then there exists $a\in\mathbb{R}^d$ such that $a \neq 0$ and $a^\top \Sigma_Ya =0$. 
We have
$$
a^\top\Sigma_Y a = \lim_{n\to\infty} n^{-1} \mathrm{Var}\left(\sum_{i=1}^n a^\top\tilde{Y}_i\right) = \mathrm{Var}(a^\top\tilde{Y}_1) + 2\mathrm{Cov}(a^\top\tilde{Y}_1, a^\top\tilde{Y}_{2}) = 0.
$$
For all $m\geq 2$, 
$$
\mathrm{Var}\left(\sum_{i=1}^m a^\top\tilde{Y}_i\right) = m\mathrm{Var}(a^\top\tilde{Y}_1) + 2(m-1)\mathrm{Cov}(a^\top\tilde{Y}_1, a^\top\tilde{Y}_{2}) = -2\mathrm{Cov}(a^\top\tilde{Y}_1, a^\top\tilde{Y}_{2})
$$
is a constant.

Let
$	\xi(n) = \sup\{ i: T_i \leq n\}.$ 
Let $\lceil\cdot\rceil$ and $\lfloor\cdot\rfloor$ denote the ceiling and floor functions, respectively. 
Define $n^* = \lceil n/\Xi_{\tau} \rceil$. 
By Chebyshev's inequality, for all $\epsilon>0$, we have 
$$
	\begin{aligned}
		P\left(\left|n^{-1/2}\sum_{i=1}^{n^*}a^\top\tilde{Y}_i\right|\geq\epsilon\right)\leq 
		n^{-1}\epsilon^{-2}\mathrm{Var}\left(\sum_{i=1}^{n^*} a^\top\tilde{Y}_i\right) \\ = -2 n^{-1}\epsilon^{-2}\mathrm{Cov}(a^\top\tilde{Y}_1, a^\top\tilde{Y}_{2})\to 0,\text{ as }n\to\infty.
	\end{aligned}
$$
Therefore, 
\begin{equation}\label{eqn:xis}
n^{-1/2} \sum_{i=1}^{n^*}a^\top\tilde{Y}_i \xrightarrow{P} 0, \text{ as }n\to\infty.
\end{equation}

We now use an argument from Section 17.2.2 of \cite{meyn2005markov} to show that~\eqref{eqn:xis} continues to hold when $n^*$ is replaced by a particular random time.
Let $\epsilon>0$ be arbitrary.
Let $\underline n = \lceil (1/\Xi_{\tau}-\epsilon) n\rceil$, $\overline n = \lfloor (1/\Xi_{\tau}+\epsilon) n\rfloor$.
By (iv) of Lemma \ref{lem:error}, with probability 1, $\xi(n)/n \to 1/\Xi_{\tau}$ as $n\to\infty$.
Then, there exists some $n_{\epsilon}\in\mathbb{N}_+$ such that, for all $n>n_{\epsilon}$,
$$
P(\underline n \leq \xi(n)-1 \leq \overline n)\geq 1-\epsilon.
$$
For $n>n_{\epsilon}$ and $b>0$,
\begin{equation} \label{eqn:pmstart}
\begin{aligned}
& P\left( \left| n^{-1/2} \sum_{i=1}^{\xi(n)-1}a^\top\tilde{Y}_i-  n^{-1/2} \sum_{i=1}^{n^*}a^\top\tilde{Y}_i \right|>b\right)\\
& \leq \epsilon 
+ P\left( \max_{\underline n\leq t\leq n^*} \left|\sum_{i=t}^{n^*}a^\top\tilde{Y}_i\right|>b n^{1/2} \right)
+P\left( \max_{n^* \leq t\leq \overline n} \left|\sum_{i=n^*}^{t}a^\top\tilde{Y}_i\right|>b n^{1/2} \right).
\end{aligned}
\end{equation}
Notice that
\begin{equation}\label{eqn:pm0}
\begin{aligned}
	& P\left( \max_{\underline n\leq t\leq n^*} \left|\sum_{i=t}^{n^*}a^\top\tilde{Y}_i\right|>b n^{1/2} \right)\\
	& \leq
	P\left( \max_{\underline n\leq t\leq n^*} \left|\sum_{i\in I_t^{\text{even}}}a^\top\tilde{Y}_i\right|>b n^{1/2}/2 \right)+
	P\left( \max_{\underline n\leq t\leq n^*} \left|\sum_{i\in I_t^{\text{odd}}}a^\top\tilde{Y}_i\right|>b n^{1/2}/2 \right),
\end{aligned}
\end{equation}
where $I_t^{\text{even}} = \{2j: t \leq 2j \leq n^*, j\in\mathbb{N}\}$, and $I_t^{\text{odd}} = \{2j+1: t \leq 2j+1 \leq n^*, j\in\mathbb{N}\}$.
Since $\{a^\top\tilde{Y}_t\}_{t=1}^\infty$ is a mean-zero stationary 1-dependent sequence, by Kolmogorov's inequality \citep[see, e.g.,][Theorem D.6.3]{meyn2005markov}, we have
$$
P\left( \max_{\underline n\leq t\leq n^*} \left|\sum_{i\in I_t^{\text{even}}}a^\top\tilde{Y}_i\right|>b n^{1/2}/2 \right)\leq \frac{n\epsilon/2+2}{nb^2/4}\mathrm{Var}(a^\top \tilde{Y}_1).
$$
Similarly,
$$
P\left( \max_{\underline n\leq t\leq n^*} \left|\sum_{i\in I_t^{\text{odd}}}a^\top\tilde{Y}_i\right|>b n^{1/2}/2 \right)\leq \frac{n\epsilon/2+2}{nb^2/4}\mathrm{Var}(a^\top \tilde{Y}_1).
$$
Therefore,
\begin{equation}\label{eqn:pm1}
P\left( \max_{\underline n\leq t\leq n^*} \left|\sum_{i=t}^{n^*}a^\top\tilde{Y}_i\right|>b n^{1/2} \right)\leq \frac{4n\epsilon+16}{nb^2}\mathrm{Var}(a^\top \tilde{Y}_1).
\end{equation}
Using the same argument, we have
\begin{equation}\label{eqn:pm2}
	P\left(\max_{n^* \leq t\leq \overline n} \left|\sum_{i=n^*}^{t}a^\top\tilde{Y}_i\right|>b n^{1/2} \right)\leq \frac{4n\epsilon+12}{nb^2}\mathrm{Var}(a^\top \tilde{Y}_1).
\end{equation}
By \eqref{eqn:pmstart} to \eqref{eqn:pm2}, we have 
$$
 P\left( \left| n^{-1/2} \sum_{i=1}^{\xi(n)-1}a^\top\tilde{Y}_i-  n^{-1/2} \sum_{i=1}^{n^*}a^\top\tilde{Y}_i \right|>b\right)\leq \epsilon+  \frac{8n\epsilon+28}{nb^2}\mathrm{Var}(a^\top \tilde{Y}_1).
$$
Since $\epsilon>0$ is arbitrary, 
\begin{equation}\label{eqn:pm3}
  n^{-1/2} \sum_{i=1}^{\xi(n)-1}a^\top\tilde{Y}_i-  n^{-1/2} \sum_{i=1}^{n^*}a^\top\tilde{Y}_i  \xrightarrow{P}0,  \text{ as }n\to\infty.
\end{equation}
By \eqref{eqn:xis} and \eqref{eqn:pm3}, 
\begin{equation} \label{eqn:pmend}
n^{-1/2} \sum_{t=1}^{\xi(n)-1} a^\top \tilde{Y}_i \xrightarrow{P} 0,  \text{ as }n\to\infty.
\end{equation}

Since 
$$
\sum_{t=1}^{\xi(n)-1} a^\top \tilde{Y}_i = \sum_{t=kT_1+1}^{kT_{\xi(n)}} a^\top f(X_t) - a^\top (T_{\xi(n)} - T_1) k \theta,
$$
by (i)-(iii) of Lemma~\ref{lem:error} and \eqref{eqn:pmend},  for $r \in \{1,\dots,k\}$,
$$
n^{-1/2} \sum_{t=1}^{kn+r} \left(a^\top f(X_t) - a^\top k \theta \right)  \xrightarrow{P} 0, \text{ as }n\to\infty.
$$
On the other hand, since $\Sigma$ is positive definite and $a \neq 0$, $a^\top \Sigma a>0$. Applying Lemma~\ref{lem:ergodic} and Theorem \ref{thm:clt}, we have 
$$
n^{-1/2} \sum_{t=1}^{kn+r} (a^\top f(X_t) - a^\top \theta) \xrightarrow{d}\mathcal{N}(0, ka^\top \Sigma a), \text{ as }n\to\infty.
$$
The two most recent displays contradict with each other.
Therefore, $\Sigma_Y$ must be positive definite.

\subsection{Proof of Lemma \ref{lem:error}}\label{proof:lemerror}

By Lemma \ref{lem:finite}, $T_1$ is almost surely finite. 
So (i) holds.

Next, observe that with probability 1, $\xi(n) \to \infty$.
To see this, note that $T_{\xi(n)+1} = T_1 + \sum_{t=1}^{\xi(n)} \tau_t$, and, 
by Lemma \ref{lem:tau}, $\tau_t<\infty$ for $t \in \mathbb{N}_+$ almost surely. 
Since $n\leq T_{\xi(n)+1}$, it holds that $P(\lim_{n\to\infty}T_{\xi(n)+1} =\infty)=1$.
If $P(\liminf_{n\to\infty} \xi(n)<\infty) >0$, then  $P(\liminf_{n\to\infty}T_{\xi(n)+1} <\infty)>0$, which leads to a contradiction.

To prove (ii), let $\tilde{Q}_{n,r} = \sum_{t=kT_{\xi(n)}+1}^{kn+r}  f(X_t)$ for $n \in \mathbb{N}$ and $r \in \{1,\dots,k\}$, and let $Q_m = \sum_{t=kT_{m}+1}^{kT_{m+1}}  \|f(X_t)\|$ for $m\in\mathbb{N}_+$. 
By the regeneration property, 
$
E Q_m ^{2+\gamma}= E Q_1  ^{2+\gamma}$,
and by Lemma \ref{lem:4},  $E Q_1  ^{2+\gamma}<\infty$.
Then
$$
\begin{aligned}
	\sum_{m=1}^\infty P(Q_m > m^{1/(2+\gamma)}) = \sum_{m=1}^\infty P(Q_1^{2+\gamma} > m) 
	\leq \sum_{m=1}^\infty \int_{m-1}^{m} P(Q_1^{2+\gamma} > x)dx \\= \int_{0}^\infty P(Q^{2+\gamma}_1 > x)dx = E Q_1^{2+\gamma}<\infty,	
\end{aligned}
$$
where  the last equality holds by \pcite{cinlar2013introduction} Theorem 1.9. 
By the Borel-Cantelli Lemma, we have with probability 1,
$\limsup_{m\to\infty} m^{-1/(2+\gamma)}Q_m \leq 1.$  Therefore, with probability 1, 
\begin{equation}\label{eqn:r}
	Q_m = O(m^{1/(2+\gamma)}), \text{ as }m\to\infty.
\end{equation}
Since with probability 1, $\xi(n)\to\infty$ as $n\to\infty$, with probability 1,
$$
Q_{\xi(n)} = O\left((\xi(n))^{1/(2+\gamma)}\right) = O(n^{1/(2+\gamma)}), \text{ as } n\to\infty.
$$
By the triangle inequality,
$$ \|\tilde{Q}_{n,r}\|\leq \sum_{t=kT_{\xi(n)}+1}^{kn+r}  \|f(X_t)\| \leq \sum_{t=kT_{\xi(n)}+1}^{kT_{\xi(n)+1}}  \|f(X_t)\|=Q_{\xi(n)}.$$ 
Combing the two most recent displays yields (ii).

We now establish (iii).
By Lemma \ref{lem:tau} and regeneration property, for all $m\in\mathbb{N}_+$, $E \tau_{m}^{2+\gamma}<\infty$. 
In the same way \eqref{eqn:r} is derived, we get with probability 1,
$$
\tau_m = O(m^{1/(2+\gamma)}), \text{ as } m\to\infty.
$$
Therefore, with probability 1,
$$
\tau_{\xi(n)} = O\left((\xi(n))^{1/(2+\gamma)}\right) = O(n^{1/(2+\gamma)}), \text{ as } n\to\infty.
$$
Since $|n-T_{\xi(n)}|\leq \tau_{\xi(n)}$, (iii) holds.

It remains to establish (iv).
By Lemma \ref{lem:tau},
$E\tau_1^{q'}<\infty$.
By the Marcinkiewics–Zygmund strong law of large numbers \citep[Introduction]{boukhari2020marcinkiewics}, with probability 1,
\begin{equation}\nonumber
	\lim_{m\to\infty}m^{-1/q'}\sum_{t=1}^m (\tau_t - \Xi_{\tau}) =0.
\end{equation}
It follows that, with probability 1,
$$
\lim_{n\to\infty}(\xi(n))^{-1/q'}\left(T_{\xi(n)+1} - \xi(n)\Xi_{\tau}\right) =0, \text{ and }
\lim_{n\to\infty}(\xi(n))^{-1/q'}\left\{T_{\xi(n)} - (\xi(n)-1)\Xi_{\tau}\right\} =0.
$$
Since $ T_{\xi(n)}\leq n \leq T_{\xi(n)+1}$, we have 
$$ T_{\xi(n)} - (\xi(n)-1)\Xi_{\tau} -  \Xi_{\tau}   \leq  n-\xi(n)\Xi_{\tau}\leq  T_{\xi(n)+1}-\xi(n)\Xi_{\tau}.$$
Therefore, with probability 1,
$$
\lim_{n\to\infty} \xi(n)^{-1/q'}\left(n - \xi(n)\Xi_{\tau}\right) =0, \text{ and } \lim_{n\to\infty} n / \xi(n) = \Xi_{\tau} .
$$
We have with probability 1, 
$$
\lim_{n\to\infty} \left({\frac{n}{\xi(n)}}\right)^{-1/q'} \left\{{(\xi(n))}^{-1/q'}(n - \xi(n)\Xi_{\tau})\right\} = 0.
$$
Thus, with probability 1, 
\begin{equation} \nonumber
	\lim_{n\to\infty} {n}^{-1/q'}(n - \xi(n)\Xi_{\tau}) = 0.
\end{equation}

\subsection{Proof of Lemma \ref{lem:measureexist}}\label{proof:measure}

First, we review some well-known topological results. 
\begin{lemma}\label{lem:product}
	For $i\in\mathbb{N}_+$, let $\mathcal{X}_i$ be a Polish space.
	Let $\mathbb{J}$ be a subset of $\mathbb{N}_+$.
	Then each of the following holds: 
	\begin{enumerate}
		\item [(i)] The space $\mathcal{X} = \prod_{i \in \mathbb{J}} \mathcal{X}_i$ equipped with the product topology is Polish.
		\item [(ii)] The Borel algebra of $\mathcal{X}$ coincides with the $\sigma$ algebra generated by cylinder sets, i.e., sets of the form $\{ (x_i)_{i \in \mathbb{J}} \in \mathcal{X}: \, x_j \in A_j \}$ where $j \in \mathbb{J}$ and $A_j \in \mathcal{B}(\mathcal{X}_j)$.
	\end{enumerate}
\end{lemma}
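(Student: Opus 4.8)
The plan is to handle the two assertions separately by entirely classical arguments: for (i) I would exhibit an explicit complete, separable metric inducing the product topology, and for (ii) I would use the fact that separable metric spaces are second countable. Throughout I may assume $\mathbb{J} = \mathbb{N}_+$, since the case of a finite $\mathbb{J}$ is a trivial specialization, and I fix an enumeration so that the coordinates are indexed by $k \in \mathbb{N}_+$.

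For (i), the first step is to replace each complete metric $d_k$ on $\mathcal{X}_k$ by the topologically equivalent bounded metric $(x,y)\mapsto\min\{d_k(x,y), 1\}$, which is still complete. Then I would set $d(x,y) = \sum_{k \geq 1} 2^{-k} \min\{d_k(x_k, y_k), 1\}$ on $\mathcal{X} = \prod_k \mathcal{X}_k$. The next step is to verify that $d$ metrizes the product topology: this reduces to checking that $d$-convergence is equivalent to coordinatewise convergence, which follows by splitting the series at a large index $N$ and bounding the tail $\sum_{k > N} 2^{-k}$. Completeness of $d$ is then immediate, since a $d$-Cauchy sequence is coordinatewise Cauchy, hence coordinatewise convergent by completeness of each $\min\{d_k,1\}$, and coordinatewise convergence together with the tail bound upgrades to $d$-convergence. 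For separability, I would fix a countable dense $D_k \subseteq \mathcal{X}_k$ and a reference point $a_k \in \mathcal{X}_k$ for each $k$, and take the countable set of sequences agreeing with $(a_k)_k$ outside a finite set of indices and lying in the corresponding $D_k$'s on that finite set; a tail argument shows this set is $d$-dense.

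For (ii), I would prove two inclusions. Each coordinate projection $\pi_j : \mathcal{X} \to \mathcal{X}_j$ is continuous, hence Borel measurable, so every cylinder $\pi_j^{-1}(A_j)$ with $A_j \in \mathcal{B}(\mathcal{X}_j)$ lies in $\mathcal{B}(\mathcal{X})$; therefore the $\sigma$-algebra generated by cylinder sets is contained in $\mathcal{B}(\mathcal{X})$. Conversely, each $\mathcal{X}_j$, being separable metric, has a countable base $\mathcal{U}_j$, and the finite intersections of the cylinders $\pi_j^{-1}(U)$ with $U \in \mathcal{U}_j$ form a countable base for the product topology; hence every open subset of $\mathcal{X}$ is a countable union of such basic sets and so belongs to the cylinder $\sigma$-algebra, giving that $\mathcal{B}(\mathcal{X})$ is contained in it. The two inclusions together prove (ii).

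The step I expect to be the genuine obstacle --- in the sense of being the only point where anything beyond routine $\varepsilon$-bookkeeping occurs --- is the verification that $d$ in (i) induces exactly the product topology and not a strictly finer one, which is precisely where countability of $\mathbb{J}$ and summability of the weights $2^{-k}$ enter, together with the companion observation in (ii) that a countable product of second countable spaces is second countable. The remaining items are standard.
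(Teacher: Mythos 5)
Your argument is correct, but there is nothing in the paper to compare it against: Lemma~\ref{lem:product} is stated there only as a review of ``well-known topological results'' and is used without proof (it is the standard fact that a countable product of Polish spaces is Polish and that its Borel $\sigma$-algebra is the product $\sigma$-algebra, as found in standard references on Polish spaces). Your sketch is the classical textbook proof and all of its steps go through: the truncated-and-weighted metric $d(x,y)=\sum_k 2^{-k}\min\{d_k(x_k,y_k),1\}$ is complete and separable and induces the product topology, and the two inclusions in (ii) follow from continuity of the coordinate projections and from second countability of a countable product of second countable spaces. The only point worth stating explicitly when you write it out is why ``$d$-convergence coincides with coordinatewise convergence'' suffices to identify the two topologies: either compare $d$-balls with basic product-open sets directly, or note that both topologies are first countable (hence sequential), so agreement of convergent sequences forces agreement of the topologies. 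Also handle the trivial edge cases (finite or empty $\mathbb{J}$, or an empty factor) in a sentence, as you indicate. With those small touches your proposal is a complete and correct proof of the lemma.
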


\begin{proof} [Proof of Lemma \ref{lem:measureexist}]
	Let $\mathcal{X}_A = \prod_{i=1}^{\infty} \mathbb{R}^{kd+1}$, and let $\mathcal{X}_B = \mathcal{X}_C = \prod_{i=1}^{\infty} \mathbb{R}^{d}$, each equipped with the corresponding product topology.
	By (i) of Lemma \ref{lem:product}, $\mathcal{X}_A$, $\mathcal{X}_B$, and $\mathcal{X}_C$ are Polish spaces.
	Denote by $\mathcal{B}(\mathcal{X}_A)$, $\mathcal{B}(\mathcal{X}_B)$, and $\mathcal{B}(\mathcal{X}_C)$ their respective Borel algebras.
	
	Denote by $(\Omega_1, \mathcal{F}_1, P_1)$ the probability space on which $\{X_t\}_{t=0}^{\infty}$ and $\{\delta_t\}_{t=1}^{\infty}$ are defined, and recall that the sequence $\{\tilde{Y}_i\}_{i=1}^{\infty}$ is determined by  $\{f(X_{kt+1}),\dots, f(X_{kt+k}), \delta_{t}\}_{t=0}^\infty$ through~\eqref{eqn:Y}.
	Recall from the proof of Theorem~\ref{thm:sip} that, by \pcite{liu2009strong} Theorem 2.1,  on another probability space $(\Omega_2, \mathcal{F}_2, \mathcal{P}_2)$, one can redefine the sequence  $\{\tilde{Y}_i\}_{i=1}^\infty$ together with $\{C(t)\}_{t=1}^\infty$, where 
	$\{C(t)\}_{t=1}^\infty$ are $d$ dimensional standard normal random vectors, such that
	with probability 1, \eqref{eqn:sippre} holds with $B(m)=\sum_{t=1}^m C(t)$. 
	We may view $(\{f(X_{kt+1}),\dots, f(X_{kt+k}), \delta_{t}\}_{t=0}^\infty, \{\tilde{Y}_t\}_{t=1}^\infty)$ as a function from $(\Omega_1,\mathcal{F}_1)$ to $(\mathcal{X}_A \times \mathcal{X}_B, \mathcal{B}(\mathcal{X}_A) \times \mathcal{B}(\mathcal{X}_B))$.
	We may view $(\{\tilde{Y}_t\}_{t=1}^\infty, \{B(t)\}_{t=1}^\infty)$ as a function from $(\Omega_2,\mathcal{F}_2)$ to $(\mathcal{X}_B \times \mathcal{X}_C, \mathcal{B}(\mathcal{X}_B) \times \mathcal{B}(\mathcal{X}_C))$.
	By (ii) of Lemma, $\mathcal{B}(\mathcal{X}_A) \times \mathcal{B}(\mathcal{X}_B)$ and $\mathcal{B}(\mathcal{X}_B) \times \mathcal{B}(\mathcal{X}_C)$ are large enough to host the events \eqref{eqn:Y} and \eqref{eqn:sippre}, and small enough so that the two functions above are measurable.
	Let $P_{AB}$ and $P_{BC}$ be the respective laws of these two functions, which are distributions on $(\mathcal{X}_A \times \mathcal{X}_B, \mathcal{B}(\mathcal{X}_A) \times \mathcal{B}(\mathcal{X}_B))$ and $(\mathcal{X}_B \times \mathcal{X}_C, \mathcal{B}(\mathcal{X}_B) \times \mathcal{B}(\mathcal{X}_C))$, respectively.
	
	By the gluing lemma on Polish spaces \citep[see, e.g.,][Lemma 3.1]{Ambrosio2013}, there exists a probability measure $P_{ABC}$ on $(\mathcal{X}_{ABC}, \mathcal{B}(\mathcal{X}_{ABC})) := (\mathcal{X}_A\times\mathcal{X}_B\times\mathcal{X}_C, \mathcal{B}(\mathcal{X}_A)\times\mathcal{B}(\mathcal{X}_B)\times\mathcal{B}(\mathcal{X}_C))$, such that 
	for all $S_{AB}\in \mathcal{B}(\mathcal{X}_A)\times\mathcal{B}(\mathcal{X}_B)$, $P_{ABC}(S_{AB}\times\mathcal{X}_C) = P_{AB}(S_{AB})$, and for all $S_{BC}\in \mathcal{B}(\mathcal{X}_B)\times\mathcal{B}(\mathcal{X}_C)$, $P_{ABC}(\mathcal{X}_A\times S_{BC}) = P_{BC}(S_{BC})$.
	Then we may redefine $(\{f(X_{kt+1}),\dots, f(X_{kt+k}), \delta_{t}\}_{t=0}^\infty, \{\tilde{Y}_t\}_{t=1}^\infty, \{B(t)\}_{t=0}^\infty)$ to be the identity map on the probability space $(\mathcal{X}_{ABC}, \mathcal{B}(\mathcal{X}_{ABC}), P_{ABC})$.
	The gluing lemma ensures that the functional relationship between $\{f(X_{kt+1}),\dots, f(X_{kt+k}), \delta_{t}\}_{t=0}^\infty$ and $\{\tilde{Y}_t\}_{t=1}^{\infty}$ is preserved almost surely, and that~\eqref{eqn:sippre} continues to hold.
	
	Using the same argument, on yet another probability space, one can redefine $$(\{f(X_{kt+1}),\dots, f(X_{kt+k}), \delta_{t}\}_{t=0}^\infty, \{B(t)\}_{t=0}^\infty)$$ together with $\{B(t/k\Xi_{\tau})\}_{t=0}^\infty$ so that
	$\{B(t)\}_{t=0}^\infty$ and $\{B(t/k\Xi_{\tau})\}_{t=0}^\infty$ have the same law as the corresponding elements in a $d$ dimensional standard Brownian motion, while preserving~\eqref{eqn:sippre}.
\end{proof}

\section{Proofs for Section \ref{ssec:tr}}
\subsection{Proof of Theorem \ref{thm:ter}}\label{proof:thmter}

%
 
We begin by considering the asymptotic behavior of $\tN(\varepsilon)$ as $\varepsilon \to 0$.
For simplicity, let $\hat{\mathfrak{M}}_n$ represent $\hat{\mathfrak{M}}_n(f,\pi)$, and $\mathfrak{M}$ represent $\mathfrak{M}(f,\pi)$. With probability 1, $\hat{\mathfrak{M}}_n\to \mathfrak{M}$ as  $n\to\infty$.
Recall that $s(n, \varepsilon) = \varepsilon \hat{\mathfrak{M}}_n(f,\pi) \mathbb{I}(n<n_0) + n^{-1}$.
Let 
$$
n_S(\varepsilon) = \inf\{n\in\mathbb{N}_+:  \, s(n, \varepsilon) \leq  \varepsilon \hat{\mathfrak{M}}_n\} = \inf\{ n \geq n_0: \, n^{-1} \leq \varepsilon \mathfrak{M}_n \}.
$$
Then, with probability 1, as $\varepsilon \to 0$, $n_S(\varepsilon)\to\infty$, and
$
\tN(\varepsilon) \geq n_S(\varepsilon) \to\infty.
$

To obtain the exact rate at which $\tN(\varepsilon)$ tends to infinity, let $L(n) = \{V(S_{\alpha}(n))\}^{1/d} +s(n, \varepsilon)$, so that $\tN(\varepsilon) = \inf\{n\in\mathbb{N}: L(n) \leq  \varepsilon \hat{\mathfrak{M}}_n \}.$ 
Let 
$$a =  \left(\frac{2}{dG(d/2)}\right)^{1/d}(\pi \chi^2_{1-\alpha, d})^{1/2},$$ and recall that
\[
V(S_{\alpha}(n)) = \frac{2\pi^{d/2}}{dG(d/2)}\left(\frac{T^2_{1-\alpha, d, p_n}}{n}\right)^{d/2}|\hat{\Sigma}_n^{\text{BM}}|^{1/2}.
\]
Since $n^{1/2} s(n,\varepsilon) \to 0 $ as $n\to\infty$ and $\Sigma$ is positive definite, we have with probability 1,
\begin{equation}\label{eqn:sr1}
	\begin{aligned}
		n^{1/2} L(n) & = n^{1/2}\left(\frac{2}{dG(d/2)}\right)^{1/d}\left(\frac{\pi T^2_{1-\alpha, d, p_n}}{n}\right)^{1/2}|\hat{\Sigma}_n^{\text{BM}}|^{1/(2d)} + n^{1/2}s(n,\varepsilon) \\
		& \to a |\Sigma|^{1/(2d)}>0, \text{ as } n\to\infty.
	\end{aligned}
\end{equation}
By the definition  of $\tN(\varepsilon)$, 
\begin{equation}\label{eqn:sr2}
	L(\tN(\varepsilon) -1) > \varepsilon \hat{\mathfrak{M}}_{\tN(\varepsilon) -1}, \text{ and } L(\tN(\varepsilon) ) \leq \varepsilon \hat{\mathfrak{M}}_{\tN(\varepsilon)}.
\end{equation}
Since with probability 1, $\hat{\mathfrak{M}}_n\to \mathfrak{M}$, and $N_S(\varepsilon) \to \infty$ as $n\to\infty$, we have with probability 1,
\begin{equation}\label{eqn:sr4}
	\hat{\mathfrak{M}}_{\tN(\varepsilon)}\to \mathfrak{M}, \text{ as } \varepsilon\to0.
\end{equation}
By   \eqref{eqn:sr1}, \eqref{eqn:sr2} and \eqref{eqn:sr4},   with probability 1,
$$\limsup_{\varepsilon\to0} \varepsilon\tN(\varepsilon)^{1/2}  \leq \limsup_{\varepsilon\to0} \frac{\tN(\varepsilon)^{1/2}  L(\tN(\varepsilon) -1)}{\hat{\mathfrak{M}}_{\tN(\varepsilon) -1}} = \frac{a |\Sigma|^{1/(2d)}}{\mathfrak{M}}$$
$$\liminf_{\varepsilon\to0} \varepsilon \tN(\varepsilon)^{1/2}  \geq \liminf_{\varepsilon\to0} \frac{\tN(\varepsilon)^{1/2}  L(\tN(\varepsilon))}{\hat{\mathfrak{M}}_{\tN(\varepsilon)}} = \frac{a |\Sigma|^{1/(2d)}}{\mathfrak{M}}.$$
We have  with probability 1,
\begin{equation}\label{eqn:sr5}
	\lim_{\varepsilon\to0}\varepsilon \tN(\varepsilon)^{1/2} = \frac{a |\Sigma|^{1/(2d)}}{\mathfrak{M}}.
\end{equation}

%

By Theorem \ref{thm:sip}, there exists a discrete time standard Brownian motion $\{B(t)\}_{t=0}^{\infty}$ such that the SIP \eqref{eqn:sip0} holds.
Consider the asymptotic behavior of $\varepsilon B(\tN(\varepsilon))$ as $\varepsilon \to 0$.
It will be shown that 
\begin{equation} \label{eqn:BtN}
	\varepsilon \left\| B(\tN(\varepsilon)) - B\left( N_0(\varepsilon) \right) \right\| \xrightarrow{P} 0,
\end{equation}
as $\varepsilon \to 0$, where 
\[
N_0(\varepsilon) = \left\lceil \frac{a^2 |\Sigma|^{1/d}}{\varepsilon^2 \mathfrak{M}^2} \right\rceil.
\]
Let $\upsilon \in (0,1)$ be arbitrary.
Let $\underline{N} = \lceil (1-\upsilon)N_0(\varepsilon) \rceil$, and $\overline{N} =  \lfloor (1+\upsilon)N_0(\varepsilon) \rfloor$.
By \eqref{eqn:sr5}, with probability~1, for $\varepsilon$ small enough, $\underline{N} \leq N(\varepsilon) \leq \overline{N}$, i.e., 
\[
\lim_{j \to \infty} P \left( \bigcap_{0<\varepsilon < 1/j} [ \underline{N} \leq N(\varepsilon) \leq \overline{N} ] \right) = 1.
\]
It follows that there exists $\varepsilon_{\upsilon} > 0$ such that, whenever $\varepsilon < \varepsilon_{\upsilon}$,
\[
P( \underline{N} \leq N(\varepsilon) \leq \overline{N} ) \geq 1 - \upsilon.
\]
Then, for $\epsilon < \varepsilon_{\upsilon}$, $b > 0$, and $u \in \mathbb{R}^d$,
\[
\begin{aligned}
	&P \left( \varepsilon \left| u^{\top} B(N(\varepsilon)) - u^{\top} B(N_0(\varepsilon)) \right| \geq b \right) \\
	&\leq \upsilon + P \left( \max_{t \in [\underline{N}, N_0(\varepsilon)]} \left| \sum_{i = t}^{N_0(\varepsilon)} u^{\top} C(i) \right| \geq b \varepsilon^{-1} \right) +  P \left( \max_{t \in [N_0(\varepsilon), \overline{N}]} \left| \sum_{i = N_0(\varepsilon)}^t u^{\top} C(i) \right| \geq b \varepsilon^{-1} \right),
\end{aligned}
\]
where $C(i) = B(i+1) - B(i)$ are i.i.d. standard normal random variables.
By Kolmogorov's inequality \citep[see, e.g.,][Theorem D.6.3]{meyn2005markov}, the right hand side is upper bounded by
\[
\upsilon + \frac{2 \varepsilon^2 \|u\|^2}{b^2} \left\{ \upsilon N_0(\varepsilon) + 1 \right\} \leq \upsilon + \upsilon \frac{2 a^2 |\Sigma|^{1/d} \|u\|^2}{\mathfrak{M}^2 b^2} + \frac{4 \varepsilon^2 \|u\|^2}{b^2}.
\]
Since $\upsilon$ is arbitrary, as $\varepsilon \to 0$, \eqref{eqn:BtN} holds.
Evidently, as $\varepsilon \to 0$,
\[
\varepsilon B(N_0(\varepsilon)) \sim \mathcal{N}\left(0, \varepsilon^2 \left\lceil \frac{a^2|\Sigma|^{1/d}}{\varepsilon^2 \mathfrak{M}^2} \right\rceil I_{d \times d} \right) \Rightarrow \mathcal{N}\left(0, \frac{a^2 |\Sigma|^{1/d}}{\mathfrak{M}^2} I_{d \times d} \right),
\]
where $\Rightarrow$ denotes the weak convergence of measures.
Then, by \eqref{eqn:BtN} and Slutsky's theorem,
\begin{equation} \label{eqn:BNep}
\varepsilon B(N(\varepsilon)) \xrightarrow{d} \mathcal{N}\left(0, \frac{a^2 |\Sigma|^{1/d}}{\mathfrak{M}^2} I_{d \times d} \right).
\end{equation}

We shall now consider the asymptotic behavior of $\hat{\theta}_{N(\varepsilon)}$.
By Theorem \ref{thm:sip}, with probability~1, for $\varepsilon$ small enough,
\[
\left\| \sum_{t=1}^{N(\varepsilon)} \varepsilon (f(X_t) - \theta) - \Gamma \varepsilon B(N(\varepsilon)) \right\| \leq M \varepsilon \phi(N(\varepsilon)),
\]
where $\Gamma$ satisfies $\Gamma \Gamma^{\top} = \Sigma$, and $\phi(n)/\sqrt{n} \to 0$ as $n \to \infty$.
By \eqref{eqn:sr5}, with probability~1, for an arbitrary number $b > 0$,
\[
\limsup_{\varepsilon \to 0} \varepsilon \phi(N(\varepsilon)) \leq \limsup_{\varepsilon \to 0}  b \varepsilon N(\varepsilon)^{1/2} = \frac{ab |\Sigma|^{1/(2d)}}{\mathfrak{M}}.
\]
This implies that $\varepsilon \phi(N(\varepsilon)) \to 0$ with probability~1.
In particular, as $\varepsilon \to 0$,
\[
\sum_{t=1}^{N(\varepsilon)} \varepsilon (f(X_t) - \theta) - \Gamma \varepsilon B(N(\varepsilon)) \xrightarrow{P} 0.
\]
Combining this with \eqref{eqn:BNep} and making use of Slutsky's theorem yields
\[
\varepsilon N(\varepsilon) (\hat{\theta}_{N(\varepsilon)} - \theta) \xrightarrow{d} \mathcal{N}\left(0, \frac{a^2 |\Sigma|^{1/d}}{\mathfrak{M}^2} \Sigma \right).
\]
In light of \eqref{eqn:sr5} and Theorem~\ref{thm:strongbm}, another application of Slutsky's theorem gives
\begin{equation} \label{eqn:thetaNnormal}
	N(\varepsilon)^{1/2} (\hat{\Sigma}_{N(\varepsilon)}^{\text{BM}})^{-1/2} (\hat{\theta}_{N(\varepsilon)} - \theta) \xrightarrow{d} \mathcal{N}(0, I_{d \times d}).
\end{equation}

Fix $\alpha \in (0,1)$, and let $\alpha_1$ and $\alpha_2$ be such that $0 < \alpha_1 < \alpha < \alpha_2 < 1$.
Recall that  $T^2_{1-\alpha, d, p_n}\to \chi^2_{1-\alpha, d}$ as $n \to \infty$, so, almost surely, for $\varepsilon$ small enough, $\chi^2_{1-\alpha_2, d} \leq T^2_{1-\alpha, d, p_{N(\varepsilon)}} \leq \chi^2_{1-\alpha_1, d}$.
By \eqref{eqn:thetaNnormal}, with probability~1, for $i = 1,2$,
\[
\lim_{\varepsilon \to 0} P\left( N(\varepsilon) (\hat{\theta}_{N(\varepsilon)} - \theta)^\top (\hat{\Sigma}^{\text{BM}}_{N(\varepsilon)})^{-1} (\hat{\theta}_{N(\varepsilon)} - \theta) < \chi^2_{1-\alpha_i, d} \right) = 1 - \alpha_i,
\]
which implies that
\[
\begin{aligned}
	&\limsup_{\varepsilon \to 0} P\left( N(\varepsilon) (\hat{\theta}_{N(\varepsilon)} - \theta)^\top (\hat{\Sigma}^{\text{BM}}_{N(\varepsilon)})^{-1} (\hat{\theta}_{N(\varepsilon)} - \theta) < T^2_{1-\alpha, d, p_{N(\varepsilon)}} \right) \leq 1 - \alpha_1, \\
	&\liminf_{\varepsilon \to 0} P\left( N(\varepsilon) (\hat{\theta}_{N(\varepsilon)} - \theta)^\top (\hat{\Sigma}^{\text{BM}}_{N(\varepsilon)})^{-1} (\hat{\theta}_{N(\varepsilon)} - \theta) < T^2_{1-\alpha, d, p_{N(\varepsilon)}} \right) \geq 1 - \alpha_2. 
\end{aligned}
\]
But to say $N(\varepsilon) (\hat{\theta}_{N(\varepsilon)} - \theta)^\top (\hat{\Sigma}^{\text{BM}}_{N(\varepsilon)})^{-1} (\hat{\theta}_{N(\varepsilon)} - \theta) < T^2_{1-\alpha, d, p_{N(\varepsilon)}}$ is to say that $\theta \in S_{\alpha}(N(\varepsilon))$.
Since $\alpha_1$ and $\alpha_2$ can be arbitrarily close to $\alpha$, it holds that
\[
\lim_{\varepsilon \to 0} P(\theta \in S_{\alpha}(N(\varepsilon)) ) = 1 - \alpha.
\]
This concludes the proof.

\newpage
\renewcommand{\thepage}{}
\bibliographystyle{plainnat}
\bibliography{ref}

\end{document}